\theoremstyle{plain}
\newtheorem{lemma}{Lemma}
\theoremstyle{definition}
\newtheorem{definition}{Definition}
\newcommand{\algoref}[1]{Algo.~\ref{#1}}
\newcommand{\figref}[1]{Fig.~\ref{#1}}
\newcommand{\tabref}[1]{Table~\ref{#1}}
\newcommand{\equref}[1]{Eq.~(\ref{#1})}
\newcommand{\lemref}[1]{Lemma~\ref{#1}}
\newcommand{\defref}[1]{Def.~\ref{#1}}
\newcommand{\fakeparagraph}[1]{\vspace{0.3mm}\noindent\textbf{#1.}}
\newcommand{\eg}{\emph{e.g.},\xspace}
\newcommand{\ie}{\emph{i.e.},\xspace}
\newcommand{\query}[1]{disquery{#1}}
\newcommand{\pick}[1]{arr{#1}}
\newcommand{\deliver}[1]{arr{#1}}
\newcommand{\travel}[1]{ComTravel{#1}}
\newcommand{\arr}[1]{arr{#1}}
\newcommand{\num}[1]{num{#1}}
\newcommand{\latest}[1]{latest{#1}}
\newcommand{\pickloc}[1]{pckl{#1}}
\newcommand{\chengdu}{\textbf{\textit{Chengdu}}}
\newcommand{\haikou}{\textbf{\textit{Haikou}}}
\newcommand{\cubic}{\textit{Cubic Time Algorithm}}
\newcommand{\quadra}{\textit{Quadratic Time Algorithm}}
\newcommand{\linear}{\textit{Linear Time Algorithm}}
\newcommand{\nil}{\textit{NIL}}
\newlength{\oldtextfloatsep}\setlength{\oldtextfloatsep}{\textfloatsep}
\newlength{\oldfloatsep}\setlength{\oldfloatsep}{\floatsep}
  \providecommand\BibTeX{{%
    \normalfont B\kern-0.5em{\scshape i\kern-0.25em b}\kern-0.8em\TeX}}}
\begin{document}

\title{A Fast Insertion Operator for Ridesharing over Time-Dependent Road Networks}

\author{Zengyang Gong}
\affiliation{%
  \institution{The Hong Kong University of Science and Technology}
  \city{Hong Kong SAR}
  \country{China}
}
\email{zgongae@connect.ust.hk}
\author{Yuxiang Zeng}
\affiliation{%
  \institution{The Hong Kong University of Science and Technology}
  \city{Hong Kong SAR}
  \country{China}
}
\email{yzengal@cse.ust.hk}
\author{Lei Chen}
\affiliation{%
  \institution{The Hong Kong University of Science and Technology}
  \city{Hong Kong SAR}
  \country{China}
}
\email{leichen@cse.ust.hk}

\begin{abstract}\label{sec:abstract}
Ridesharing has become a promising travel mode recently due to the economic and social benefits. As an essential operator,  ``insertion operator'' has been extensively studied over static road networks. When a new request appears, the insertion operator is used to find the optimal positions of a worker's current route to insert the origin and destination of this request and minimize the travel time of this worker. Previous works study how to conduct the insertion operation efficiently in static road networks, however, in reality, the route planning should be addressed by considering the dynamic traffic scenario (\ie a time-dependent road network). Unfortunately, under this setting, existing solutions to the insertion operator become the bottleneck of a route planning algorithm in ridesharing, because the feasibility and travel time of the worker to serve the new request are dependent on the pickup and deliver time of this request, instead of directly calculating over the static road network. This paper studies the insertion operator over time-dependent road networks. Specially, we first introduce a baseline insertion method by calculating the arrival time along the new route from scratch, it takes $O(n^3)$ time, where $n$ is the total number of requests assigned to the worker. To reduce the high time complexity, we calculate the compound travel time functions along the route to speed up the calculation of the travel time between vertex pairs belonging to the route, as a result time complexity of an insertion can be reduced to $O(n^2)$. Finally, we further improve the method to a linear-time insertion algorithm by showing that it only needs $O(1)$ time to find the best position of current route to insert the origin when linearly enumerating each possible position to
insert the destination of the new request. Evaluations on two real-world and large-scale datasets show that our methods can accelerate the existing insertion algorithm by up to 25.79 times.
\end{abstract}

\maketitle

\section{Introduction}\label{sec:intro}
Ridesharing is a service for a worker to provide the shared ride for multiple requests with similar traveling schedules. In real world, various extensive existing applications like car-pooling, food delivery and last-mile logistics are based on ridesharing service \cite{spatialReview,TongJoS17}. There is a set of workers and dynamically incoming requests in the service, when a new request appears, a fundamental function is to find a worker with the minimized travel time to deliver this request. Because of tremendous social and economic benefits of ridesharing, \eg{ relieving traffic congestion and improving utilization of road}, ridesharing has been extensively studied  \cite{huangyan2014}, \cite{yuzheng2015}, \cite{peng2017}.

The insertion operator has been recognized as a fundamental operation to solve route planning problem in ridesharing and has been well studied over static road networks scenario \cite{yuxiang2018vldb}, \cite{yuxiang2019icde}, \cite{yuxiang2020tkde}. Given a worker with a feasible route to serve the assigned requests, when a new request appears, the insertion operator attempts to find the appropriate positions in the route to insert origin and destination of the new request while keeping the route's feasibility and minimize the total travel time of the new route. In the literature, this operator is widely adapted to plan routes in the ridesharing service \cite{yuzheng2015}, \cite{huangyan2014}, \cite{yuxiang2018vldb}, \cite{DBLP:journals/tods/TongZZCX22}, \cite{yuxiang2019icde}.

However, in real-world applications, the travel time of an edge on the road network is changing over time. Thus, a road network is recently formalized as a time-dependent graph where each edge is associated with a time-dependent weight function \cite{shortestquery}. Existing insertion operators work inefficiently over time-dependent road network, which become a bottleneck of ridesharing service. Therefore, in this paper, to serve requests in real-time, we study a fast insertion operator over time-dependent road networks.  

\begin{figure}[t]
	\centering
    \begin{subfigure}[b]{0.38\textwidth}
		\includegraphics[width=\textwidth]{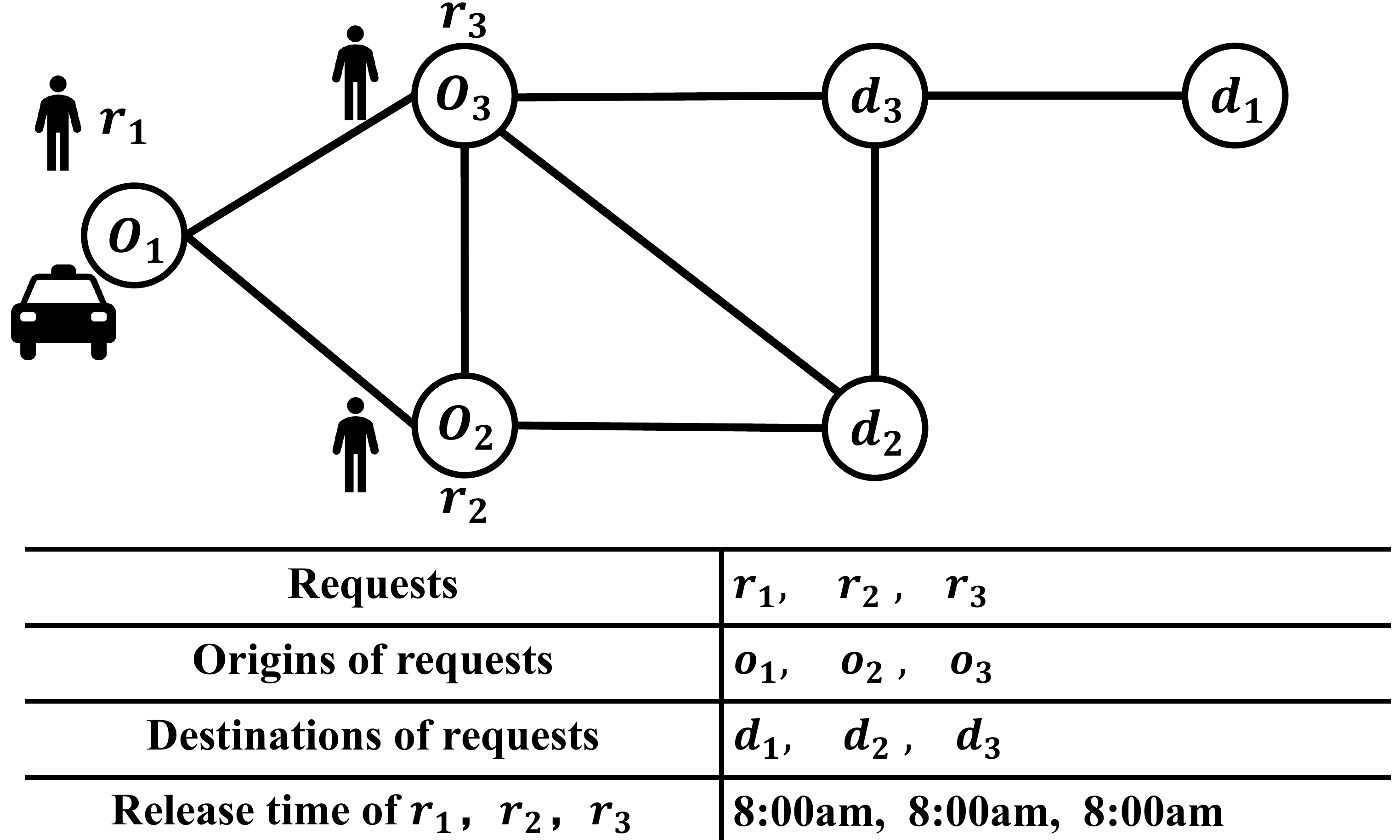}
		\vspace{-4ex}
		\caption{\footnotesize{Time-dependent road network}}
		\label{fig: example1a}
		\vspace{2ex}
	\end{subfigure}
	
	\begin{subfigure}[b]{0.39\textwidth}
		\includegraphics[width=\textwidth]{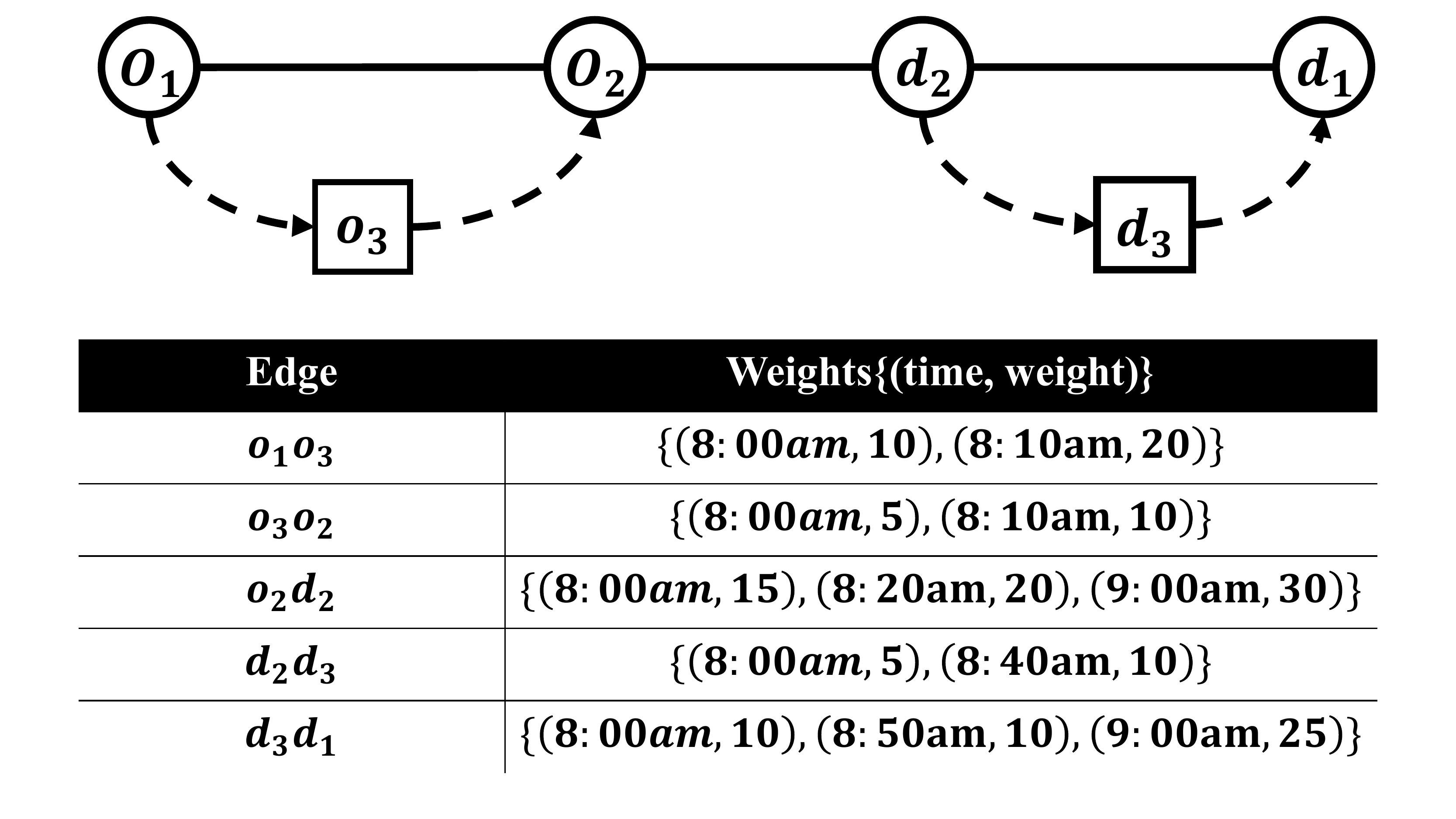}
		\vspace{-4ex}
		\caption{\footnotesize{Insertion operator}}
		\label{fig:example1b}
	\end{subfigure}
	\vspace{-2ex}
	\caption{Insertion operator over a time-dependent road network.}
	\label{fig:example1}
	\vspace{-4ex}
\end{figure}

We use the following example to show the difficulties and challenges of insertion operator over time-dependent road network.
\fakeparagraph{Example 1} \textit{ \figref{fig: example1a} shows an example of ridesharing service over time-dependent road network. There are three requests $r_1, r_2$ and $r_3$. At 8:00am, three requests $r_1$, $r_2$ and $r_3$ are released, and assigned to the worker locates in $o_1$. We assume $\langle o_1, o_2, d_2, d_1 \rangle$ is the route with the shortest time to serve $r_1$ and $r_2$, which is represented as the solid line in \figref{fig:example1b}. For $r_3$ with the origin-destination pair $o_3$ and $d_3$ on the road network, to serve this request, the insertion operator tries to insert new request's origin and destination before $o_2$ and $d_1$ respectively, as the dash lines shown in \figref{fig:example1b}. We can get a new candidate route $\langle o_{1},o_{3},o_{2}, d_{2}, d_{3}, d_{1} \rangle$. Specially, each edge of the route over time-dependent road network is associated time-dependent weights. For instance, for the edge $\langle o_1, o_3 \rangle$ on this road network is associated with a set \{(8:00am, 10),(8:10am, 20)\}, \eg{ (8:00am, 10) for $\langle o_1, o_3 \rangle$ says it takes 10 minutes from $o_1$ to $o_3$ at 8:00am, and (8:10am, 20) represents it takes 20 minutes at 8:10am.} The dynamic travel cost makes it more challenging to apply the insertion operator efficiently, as explained below. }

\fakeparagraph{Challenges} For the candidate route of serving the new request $\langle o_1, o_3, o_2, d_2, d_3, d_1 \rangle$ shown in \figref{fig:example1b}, over the well-studied static road network, the travel time along this route could be updated directly by adding constant travel time of 4 edges over the static road network \eg{$\langle o_1, o_3 \rangle$ and $\langle o_3, o_2 \rangle$ for inserting $o_3$, $\langle d_2, d_3 \rangle$ and $\langle d_3, d_1 \rangle$ for inserting $d_3$}. When we further to calculate the travel time of this route, the arrival time of each vertex belongs to the route could be ignored. However, in this example over time-dependent road networks, the newly arrival time of the vertex is dependent on the arrival time of the previous vertex. Assume start from $o_1$ at time $t_1$ and $t_1 = 8:00am$ we could calculate the travel time of this route from scratch in a recursive way
\begin{gather*}
    arr(o_1) = t_1 \\
    arr(o_3) = arr(o_1) + \query{(o_1, o_3, arr(o_1))} \\
    \cdots \\
    arr(d_3) = arr(d_2) + \query{(d_2, d_3, arr(d_2))} \\
    arr(d_1) = arr(d_3) + \query{(d_3, d_1, arr(d_3))}
\end{gather*}

$arr(o_1)$ indicates the arrive time to vertex $o_1$ in this route, and $\query{(o_1, o_3, arr(o_1))}$ represents the shortest travel time query over time-dependent road network, it returns the shortest travel time from $o_1$ to $o_3$ when start from $o_1$ at time $arr(o_1)$. This recursive method incurs huge number of shortest travel time queries. The shortest travel time query over the large-scale time-dependent road network is more time-consuming than the query over the static road network \cite{shortestquery}, therefore invoking the shortest travel time query frequently makes the insertion operator becomes the bottleneck of ridesharing service over time-dependent road network.   

In this paper, we study the insertion operator for ridesharing over time-dependent road networks. The main idea of our solution is as follows. We first study the baseline method \cite{yuxiang2018vldb} with $O(n^3)$ time complexity which invokes large amounts of shortest travel time queries over the time-dependent road network and makes the insertion become the efficiency bottleneck. Then, we reduce the time complexity to $O(n^2)$ by compounding the edge weight functions, checking the constrains and calculating the objective in $O(1)$ time. Finally, we reduce our method to $O(n)$ time complexity, by enumerating the potential position of the route to insert destination and finding the optimal position to insert origin of the new request in $O(1)$ time.

The main contributions are summarised as follows:
\begin{itemize}
  \item
  We are the first to study an efficient insertion operator for ridesharing over time-dependent road networks.
  \item
  We design a novel method to reduce the time complexity from cubic in baseline method to quadric by compounding the weight functions of road segments.
  \item
  Then, we propose an optimal solution with the time complexity of $O(n)$ and reduce the memory cost space from $O(n^2)$ in quadric time method to $O(n)$.
  \item
  Extensive experiments on real datasets demonstrate that our solution can speed
  up the insertion operator by 2.16 to 25.79 times on time-dependent road networks.
\end{itemize}

The rest of this paper is organized as follows. We first present the problem statement in Section \ref{sec:definition}. Then, our algorithms are elaborated in Section \ref{sec:methodology}. We evaluate our algorithms in Section \ref{sec:experiment}. We review the related works in Section \ref{sec:related}. Finally, Section \ref{sec:conclusion} concludes this paper.

\section{Problem Definition}\label{sec:definition}

\begin{table}[t]
	\centering
	\caption{Summary of major notations.}
	\vspace{-2ex}
	\label{table:notations}
    \begin{small}
    \resizebox{0.45\textwidth}{!} {%
	\begin{tabular}{|c|c|}
		\hline
		Notation & Description \\
		\hline
		$G(V,E,F)$ & time-dependent road network\\
		\hline
		$\query{(u,v,t)}$ & shortest travel time query over $G(V,E,F)$\\
		\hline
		$\travel{(u,v,t)}$ & compound travel time function over route $u$ to $v$ \\
		\hline
		$v_w, c_w$ & location and capacity of worker \\
		\hline
		$o_r, d_r$ & origin and destination of request \\ 
		\hline
		$\pick{(o_r)}, \deliver{(d_r)}$ & pickup time and deliver time of request $r$ \\
		\hline
		$\arr{[k]}$ & arrival time at $v_k$ along the route of worker \\
		\hline
        $\num{[k]}$ & number of picked request at $v_k$ along the route of worker \\
		\hline
		$\latest{[k]}$ & the latest arrival time at $v_k$ to keep feasibility of the route \\
		\hline
		$\pickloc{[k]}$ & the $i$ value when insert $d_r$ before $v_k$ \\
		\hline
	\end{tabular}
	}
	\end{small}
  \vspace{-3.9ex}
\end{table}


In this section, we introduce the notations and the formal definition of insertion operator on time-dependent road networks. Major notations in this work are summarized in \tabref{table:notations}.

\subsection{Time-dependent Road Network}
\begin{definition}[Time-dependent Road Network]
\textit{A directed graph $\textit{G(V,E,F)}$ is utilized to model the dynamic road network, where $\textit{V}$ is the set of vertices and each vertex $v \in \textit{V}$ represents one geo-location (\eg{ road intersection}); $\textit{E} \subset \textit{V} \times \textit{V}$ is the set of road edges. For each edge $(u,v) \in \textit{E}$ indicating a directed edge from $u$ to $v$, there is a weight function $f_{u,v}(t) \in \textit{F}$ associated to it, and $t$ is a variable indicating the starting time. The value of $f_{u,v}(t)$ denotes the travel time to $v$ when starting from $u$ at time $t$, which is always non-negative.}
\end{definition}

Following the conventions in existing studies \cite{shortestquery} \cite{yuanyeicde2019} \cite{yuanyeicde2021}, we use piecewise linear functions to model the time-dependent road networks. To reduce the space cost, a set of interpolation points $P = \{(t_1, w_1), (t_2, w_2), \dots, (t_k, w_k)\}$ is saved to fit $f_{u,v}(t)$. In each point $(t_i, w_i)$, it indicates when starting from $u$ at $t_i$, it will cost $w_i$ unit time to arrival at $v$, for any $x \leq y$, $t_x \leq t_y$. A straight line connected two successive points $(t_i, w_i), (t_{i+1}, w_{i+1})$ fits the linear function in the time domain $[t_i, t_{i+1}]$. For example, if the worker starts from $u$ at $t \in [t_1, t_2]$, his travel time will be $f_{u,v}(t) = w_1 + (t - t_1) \frac{w_2 - w_1}{t_2 - t_1}$. Then the time function associated with $(u,v)$ can be formalized as    

\begin{equation}
    f_{u,v}(t) = \begin{cases}
                    w_1, & t = t_1 \\
                    w_1 + (t - t_1) \frac{w_2 - w_1}{t_2 - t_1}, & t_1 < t \leq t_2\\
                    \cdots\\
                    w_{k-1} + (t - t_{k-1}) \frac{w_k - w_{k-1}}{t_{k} - t_{k-1}}, & t_{k-1} < t \leq t_k\\
                    w_k, &   t = t_k
                    \end{cases} 
\end{equation}

The function domain is $[t_1, t_k]$ for $f_{u,v}(t)$, where $t_1$ is the earliest departure time from beginning of the edge, and $t_k$ is the last available start time.

\textbf{FIFO Property.}
Following the existing works \cite{shortestquery} \cite{yuanyeicde2019} \cite{yuanyeicde2021}, edges of $\textit{G}$ satisfy the first-in-first-out (FIFO) property, and it implies that if there are two workers driving on the same edge $(u,v)$, the worker starts earlier from the $u$ then it will also arrive at $v$ earlier. For every $f_{u,v}(t) \in F$, and starting times $t_1 \leq t_2$, we can derive $t_1+f_{u,v}(t_1) \leq t_2+f_{u,v}(t_2)$ based on this property.

\subsection{Problem Statement}
In this subsection we first introduce some basic concepts, and then we formally give the definition of the insertion operator for ridesharing over time-dependent road networks (``time-dependent insertion'' for short).

\begin{definition}[Worker]
\textit{A worker is denoted by ${w} = \langle v_{w}, c_{w} \rangle$ with an initialized location $v_{w} \in {V}$ and a capacity $c_{w}$, and the number of passengers $w$ can carry at any time can not exceed $c_{w}$.} 
\end{definition}

\begin{definition}[Request]
\textit{A request is denoted by ${r}=\langle o_{r},d_{r},t_{r},e_{r},c_{r} \rangle$. This request appears at time $t_r$, which indicates the information of this request is only known until $t_r$. $o_{r} \in {V}$ is the origin of the request and $d_{r} \in {V}$ indicates the destination. This request can be completed if it is picked up at $o_r$ by a worker after release time $t_{r}$ and delivered to $d_r$ before deadline time $e_{r}$. The capacity $c_r$ indicates the number of passengers in this request $r$.}
\end{definition}

In real-world ridesharing platforms such as Uber \cite{Uber} and Didi Chuxing \cite{Didi}, a worker picks up the request when he/she arrives at $o_r$ and delivers it at $d_r$, the pickup time and deliver time are denoted as $\pick{(o_r)}$ and $\deliver{(d_r)}$, respectively. If a worker is feasible to serve this request, the deliver time cannot exceed the deadline time specified by the request \ie{ $\deliver{(d_r)} \leq e_r $}.

\begin{definition}[Route]
\textit{A sequence $\mathcal{S_R} = \{ \langle v_0, v_1, \cdots, v_n \rangle | arr(v_0) = t_0\}$ is the route of this worker to serve requests set $R$, $R = \{r_0, r_1, \cdots,$ $r_{|R|-1} ,r_{|R|}\}$ represents the currently undelivered requests by $w$. The current location of $w$ is 
$v_0$, and $t_0$ indicates the start time, $w$ starts this route from $v_0$ at time $t_0$. Except for $v_0$, each vertex corresponds to the origin or the destination of one request in ${R}$, i.e. $v_0 = v_{w}$ and for all $1 \leq i \leq n,  v_{i} \in \{o_r|r \in {R}\} \cup \{d_r|r \in {R}\}$.}

\end{definition}

\fakeparagraph{Travel Time}
Given a route $\mathcal{S_R}$ with successive vertexes $v_0,$\dots$,v_n$, for $k$ from 0 to n-1, the arrival time at $v_{k+1}$ depends on the starting time from $v_k$. If we use an array $\arr{[k]}$ to record the arrival time at $v_k$, then the time dependency can be represented as $\arr{[k+1]} = \arr{[k]} + \query{(v_k, v_{k+1}, \arr{[k]})}$ where the $\query{(v_k, v_{k+1},}$ ${ arr[k])}$ is the shortest travel time query over time-dependent road networks, this function indicates the shortest travel time from $v_k$ to $v_{k+1}$ when starting at time $arr[k]$. Thus, we can calculate the travel time of the whole route from the first vertex $v_0$ at $t_0$ in a recursive way: $\arr{[1]} = t_0 + \query{(v_0, v_1, t_0)}, \arr{[2]} = \arr{[1]} + \query{(v_1, v_2, \arr{[1]})},$ $\ldots,$ $\arr{[n]} = \arr{[n-1]} + \query{(v_{n-1}, v_n,}$ ${ \arr{[n-1]})}$.

\fakeparagraph{Feasibility}
We call a route $\mathcal{S_R}$ is feasible for worker $w$ to serve request set $R$, if the following four constrains are satisfied,

\begin{itemize}
    \item \textbf{Completion Constraint.} Worker $w$ should complete serving all requests assigned to it.
	\item \textbf{Order Constraint.} $\forall r \in {R}$, $d_r$ lies behind $o_r$, a request should be picked up by a worker first and then delivered to its destination. 
	\item \textbf{Deadline Constraint.} $\forall r \in {R}$, the worker $w$ delivers $r$ to his/her destination $d_r$ at time $\deliver{(d_r)}$ before its deadline time $e_r$.
	\item \textbf{Capacity Constraint.} At any time, the total number of the requests that a worker $w$ has picked up but not completed does not exceed the capacity $c_{w}$. 
\end{itemize}

\begin{definition}[Time-dependent Insertion]
\textit{Given a worker $w$ and his/her associated feasible route $\mathcal{S_R}$ start at $t_0$, when a new request $r^+$ appears at time $t_{r^+}$, the operator tries to insert $o_{r^+}$ at the $i$-th position of $\mathcal{S_R}$ (i.e. before vertex $v_i$) and insert $d_{r^+}$ at the $j$-th position of $\mathcal{S_R}$ (i.e. before vertex $v_j$) to get a new feasible route $\mathcal{S_{R^+}}$ where $R^+ = R \cup {r^+}$ with the minimum travel time. }
\end{definition}

\figref{fig:insertion} shows all possible insertion position pairs $(i,j)$, $v_0$ is the current location of $w$ and $v_n$ is the last vertex, $\forall i,j, 1 \leq i \leq j \leq n+1 $.

\begin{figure}[t]
	\centering
	\begin{subfigure}[b]{0.23\textwidth}
		\includegraphics[width=\textwidth]{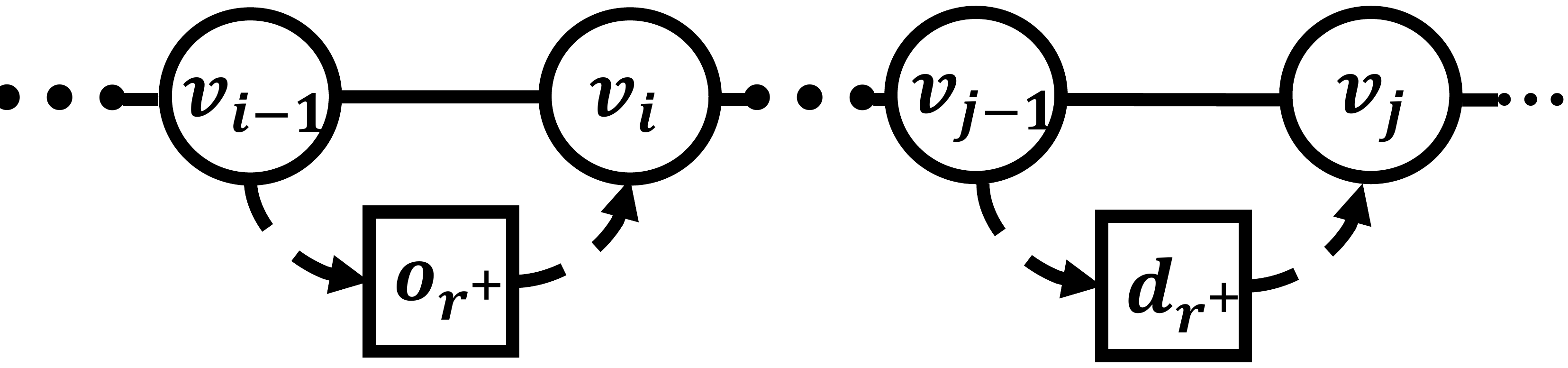}
		\vspace{-1ex}
		\caption{\footnotesize{$0 < i < j < n+1$}}
		\label{fig:insertionCase1}
	\end{subfigure}
	~~
	\begin{subfigure}[b]{0.21\textwidth}
		\includegraphics[width=\textwidth]{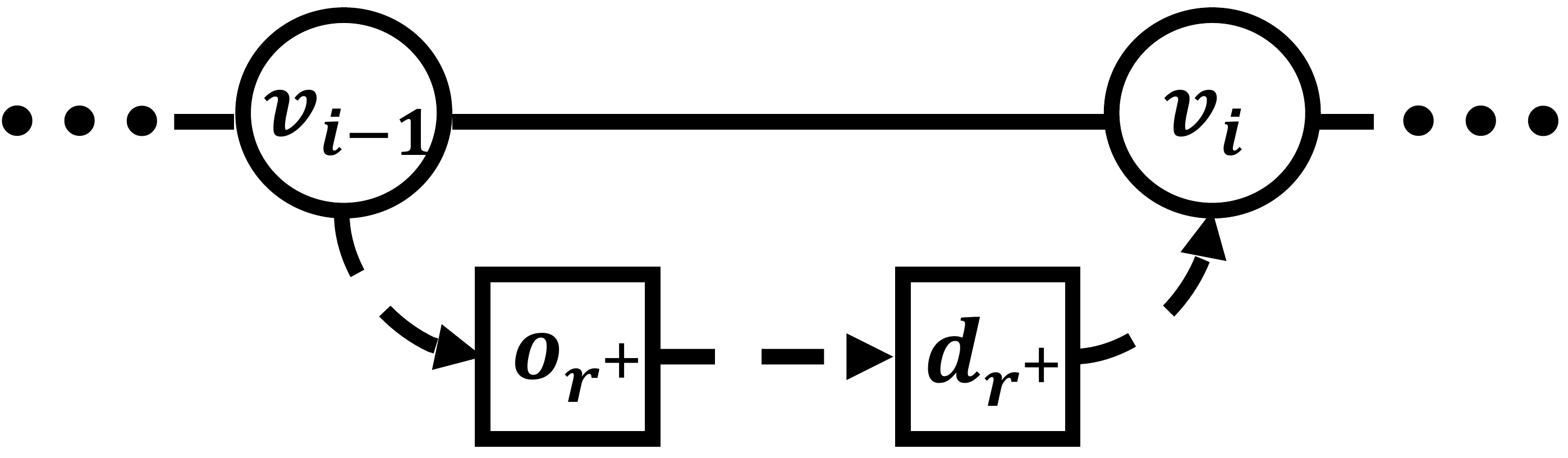}
		\vspace{-1.8ex}
		\caption{\footnotesize{$0 < i=j<n+1$}}
		\label{fig:insertionCase2}
	\end{subfigure}
	
	\begin{subfigure}[b]{0.21\textwidth}
		\includegraphics[width=\textwidth]{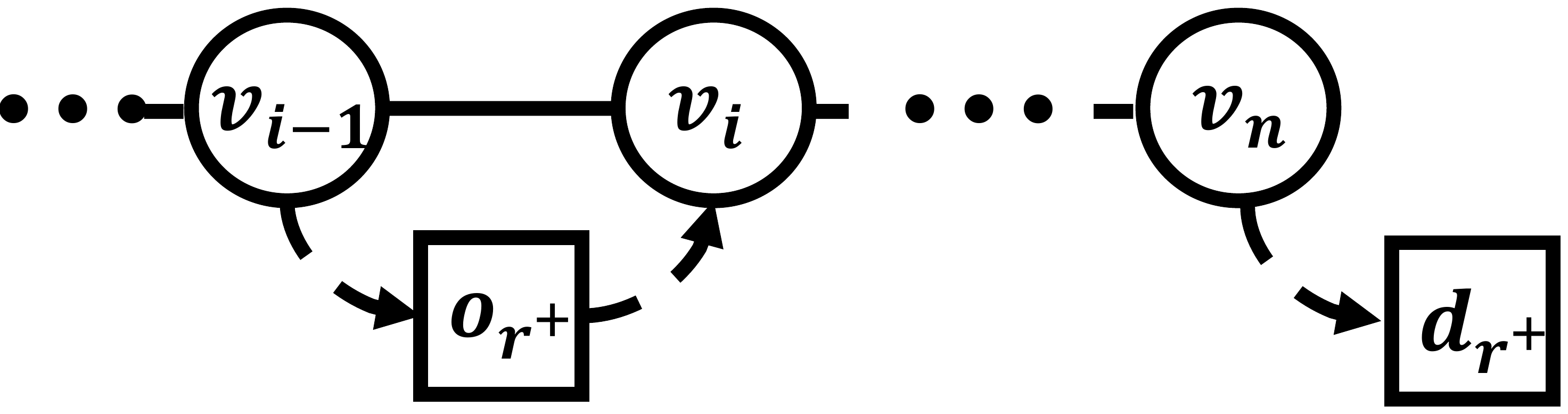}
		\vspace{-1ex}
		\caption{\footnotesize{$0 < i < j = n+1$}}
		\label{fig:insertionCase4}
	\end{subfigure}
	~~	
	\begin{subfigure}[b]{0.19\textwidth}
		\includegraphics[width=\textwidth]{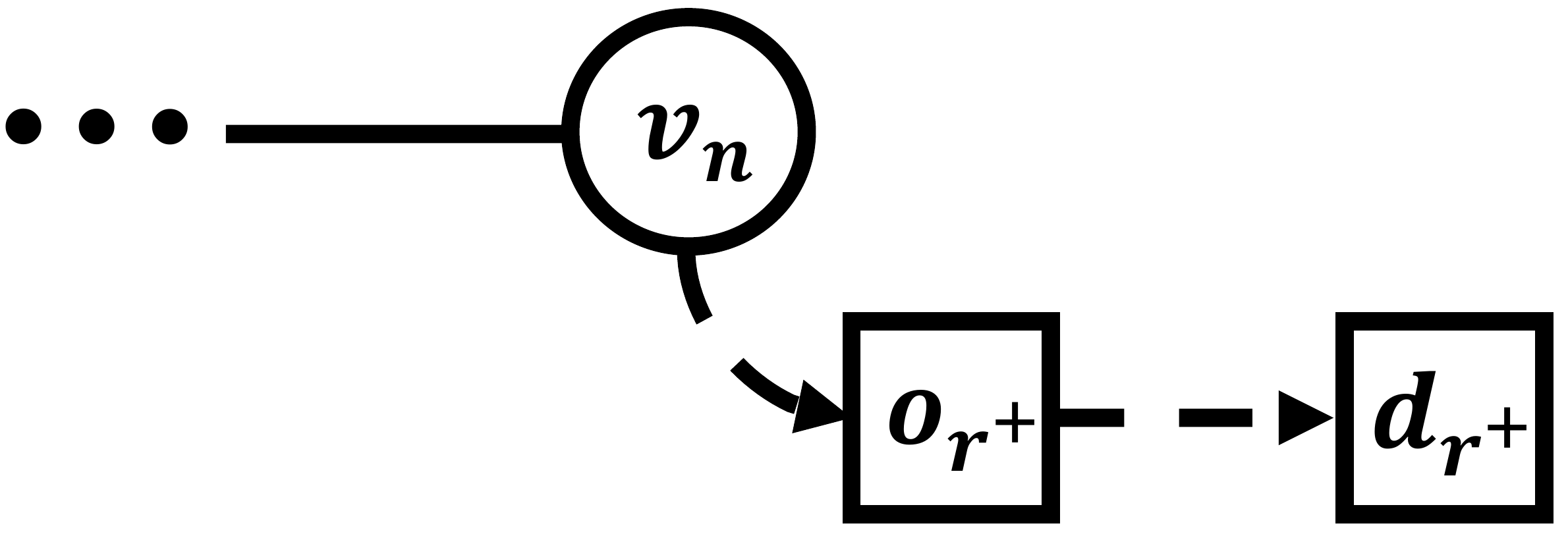}
  		\vspace{-2ex}
		\caption{\footnotesize{$i=j=n+1$}}
		\label{fig:insertionCase3}
	\end{subfigure}
    \vspace{-1ex}
    \caption{Insertion cases of all possible pairs$(i,j)$.}
	\label{fig:insertion}
	\vspace{-4ex}
\end{figure}

\subsection{Baseline Method}
We first extend the existing insertion operator over static road networks \cite{yuxiang2019icde} to the time-dependent road networks as the baseline method shown in \algoref{alg:n3}. 

\fakeparagraph{Basic Idea} The intuitions behind baseline method are as follows. $(1)$ We enumerate all potential location pairs $(i,j)$ where $1 \leq i \leq j \leq n+1$ for inserting the new request's origin $o_{r^+}$ and new request's destination $d_{r^+}$ to get a new route $S_{R^+}$; $(2)$ From the first vertex to the last vertex belong to $S_{R^+}$, we sequentially compute the travel time and check the feasibility; $(3)$ If there is no violation of the constraints until the last vertex of this new route and the total travel time is shorter than that of the currently optimal route, the algorithm replaces optimal route $S^*$ by $S_{R^+}$.

\fakeparagraph{Complexity Analysis} In lines 2-3 enumerating all potential pairs $(i,j)$ take $O(n^2)$ time cost. Lines 5-7 calculate new arrival time along the vertexes after inserting $o_{r^+}$ and $d_{r^+}$ until the last vertex, it takes $O(n)$ time. Line 6 involves huge number of travel time queries $\query{(v_{k-1}, v_k, \arr{[k]})}$ over time-dependent road networks, we assume the time cost of this query is $O(1)$ time as with previous work \cite{yuxiang2018vldb}\cite{yuxiang2019icde}\cite{yuxiang2020tkde}. We follow this assumption, although the query is not as efficient as the query function over the static road network \cite{shortestquery}. Thus, the total time cost of \algoref{alg:n3} is $O(n^3)$ and the space cost is $O(n)$ to maintain the route of the worker.

\begin{algorithm}
\caption{Baseline: Cubic Time Insertion}
\label{alg:n3}
\LinesNumbered 
\KwIn{a worker $w$ with a feasible route $S_R$ and a request $r^+$}
\KwOut{new route $S^*$ for $w$ and its travel cost $obj^*$}
$S^* \gets S_R, obj^* \gets +\infty, arr[0] \gets t_0 $ \;
\For{$i \gets $ 1 to $n+1$}{
    \For{$j \gets $ $i$ to $n+1$}{
    $S_{R^+}$ $\gets$ insert $o_{r^+}$ before $v_i$ and $d_{r^+}$ before $v_j$ in $S_R$ \;
        \For{$\langle v_{k-1}, v_k \rangle$ $\in$ $S_{R^+}$}{
            $arr[k] \gets \arr{[k-1]} + \query{(v_{k-1}, v_k, arr[k-1])}$ \;
            $obj \gets arr[k]$        
        }
    \If{$S_{R^+}$ is feasible $and$ $ obj < obj^*$}{ 
    $S^* \gets S_{R^+}$, $obj^* \gets obj$}
    }
}
\KwRet{$S^*$, $obj^*$}\;
\end{algorithm}

\fakeparagraph{Example 2} \textit{Back to the settings in Example 1. For $r_3$, suppose we want to check the feasibility and calculate the total travel time of insertion $(1,3)$. From the previous work, if the road network is modeled as a static graph, we can get the total travel directly as: $arr(d_3) + dis(o_1,o_3)+ dis(o_3,o_2) + dis(d_2,d_3)+ dis(d_3,d_1)$, where $dis(\cdot,\cdot)$ is the shortest distance between any two vertices in the static graph. However, due to the time-dependent travel costs of road segments in reality, arrival time at each vertex is dependent on the arrival time at the previous vertex along the route. For example, the arrival time at $d_2$ can not be updated directly from $o_3$, it is dependent on the new arrival time at $o_2$. It is same for $d_1$, new arrival time at $d_1$ can be calculated only if the arrival time at $d_3$ is known. Therefore, if we want to calculate the arrival time at each vertex in this new candidate route, we need to calculate the time from scratch at $o_1$. As shown in the previous section, if the worker starts from $o_1$ at $t_1$, we calculate the arrival time at each vertex along the route as :  $arr(o_3) = arr(o_1) + \query{(o_1, o_3, t_1)}, arr(o_2) = arr(o_3) + \query{(o_3, o_2, arr(o_3))}, arr(d_2) = arr(o_2) + \query{(o_2, d_2, arr(o_2))},$ $  \cdots  arr(d_1) = arr(d_3) + \query{(d_3, d_1, arr(d_3))}$, $arr(d_1)$ is the total travel time of this new route. For insertion $(1,3)$, the baseline method invokes 5 times $\query{}$ function.    
}

\section{Our Methodology}\label{sec:methodology}

To address the shortcomings of the baseline solution, we first propose the compound travel time functions to accelerate the travel time computing and feasibility checking to $O(1)$ time, thus the time complexity of time-dependent insertion is reduced to $O(n^2)$ by enumerating all $(i,j)$ values. After that, we prove our efficient insertion operator can find the optimal value of $i$ in $O(1)$ time when $j$ is given along a feasible route of the worker over the time-dependent road network. Therefore, both time and space complexity can be improved to $O(n)$ by enumerating $j$ along the feasible route of the worker. 

\subsection{Quadratic Time Algorithm}
\fakeparagraph{Basic Idea}
The insertion operator takes $O(n^2)$ time to enumerate all possible $(i,j)$ pairs to insert new origin-destination pair, and it only needs $O(1)$ time to calculate the new arrival time of each vertex and check the feasibility of route $S_{R^+}$ instead of taking $O(n)$ time in the baseline method. We first compound the edge weight functions from time-dependent road network to compound travel time functions, and further use an array $\num{}$ to record the number of picked but not delivered requests along the route. With the help of compound travel time functions and $\num{}$, checking feasibility and calculating travel time for a new route can be accelerated to $O(1)$ time. To maintain the compound travel time functions between any two vertexes of the current feasible route $\mathcal{S_R}$, the space cost of this method is $O(n^2)$.       

\subsubsection{Compound Travel Time Functions}
\
\newline
To accelerate the checking and computing, for a feasible route $S_R = \langle v_0, v_1, \dots v_n \rangle$, we compute the compound travel time functions $\travel{(v_x,v_y,t_x)} (\forall x,y, 0< x < y < n)$ to represent the travel time when starting from vertex $v_x$ at time $t_x$ to vertex $v_y$. For each successive sub-route $\langle v_k, v_{k+1} \rangle \in S_R, \forall k \in [0,n-1]$, we can get an edge weight function $f_{v_k, v_{k+1}}(t_k) \in F$ associated with it. If we start from first vertex $v_0$ at $t_0$, compounding $t_0$ to edge weight function of the sub-route $\langle v_0,v_1 \rangle$ we can compute the compound travel time function $f_{0,1}(t_0)$ for this sub-route, then recursively compounding $ t_1 = f_{0,1}(t_0)$ into $f_{1,2}(t_1)$, we can compute the compound travel time function for the sub-route $\langle v_0, v_1, v_2 \rangle$, where $travel(v_0,v_2,t_0) = f_{1,2}(f_{0,1}(t_0)) - t_0$. Following the procedure in a recursive way, we give the formal definition of compound travel time function along a route:
\begin{definition}[Compound Travel Time Function]
\label{def:TravelFunction}
\textit{Given two vertexes $v_x$ and $v_y$ associated with the sub-route $\langle v_x, v_{x+1}, \cdots, v_y \rangle$ in $S_R$, the compound travel time function $\travel{(v_x,v_y,t_x)}$ indicates the travel cost of the route when start from $v_x$ to $v_y$ at time $t_x$, where $\travel{(v_x,v_y,t_x)} = f_{y-1,y}(f_{y-2,y-1}(\cdots f_{x,x+1}(t_x))) - t_x$.}
\end{definition}

\subsubsection{Computing Arrival Time}
\
\newline
After applying operator $insert(i,j)$, $o_{r^+}$ is inserted before $v_i$ and $d_{r^+}$ is inserted before $v_j$ in $S_R$,
we get a new potential route $S_{R^+}$. We need to calculate the arrival time at each vertex that belongs to this new route to check the feasibility and calculate the total travel time of $S_{R^+}$. New arrival time to $v_i$ and $v_j$ is dependent on the pickup time and deliver time of $r^+$ respectively, which can be updated by $\query{}$ over the road network. For vertexes in sub-route $\langle v_{i+1}, v_{i+2}, \cdots, v_{j-1} \rangle$, their new arrival time is effected by inserting $o_{r^+}$ before $v_i$, compound travel time functions $\travel{(v_i,*,*)}$ can be utilized to update the arrival time of these vertexes in $O(1)$ time which avoids recursively invoking $\query{}$ from scratch.
Similarly, for each vertex in $\langle v_{j+1}, v_{j+2}, \cdots, v_n \rangle$, the new arrival time is dependent on the arrival time to $v_j$ which can be calculated by compound travel time functions $\travel{(v_j,*,*)}$.

Along the route $S_R$, inserting $o_{r^+}$ and $d_{r^+}$ at $i$-th position and $j$-th position, the pickup time of the new request $\pick{(o_{r^+})}$, which is also the arrival time to $o_{r^+}$, is dependent on the arrival time of the previous vertex $\arr{[i-1]}$ in the route. As for the delivery time of the new request $\deliver{(d_{r^+})}$, it depends on the new arrival time of the previous vertex after picking up the new request $\arr{'[j-1]}$ when $i<j$ (see \figref{fig:insertionCase1} and \figref{fig:insertionCase4}) or the pickup time of the request $\pick{(o_r)}$ when $i = j$ (see \figref{fig:insertionCase2} and \figref{fig:insertionCase3}).

We first calculate the pickup time of $r^+$, origin $o_{r^+}$ is inserted before $v_i$ and after $v_{i-1}$, so the pickup time can be calculated from arrival time to $v_{i-1}$:
\begin{equation}
\label{equ:pickup}
     \pick{(o_{r^+})} = \arr{[i-1]} + \query{(v_{i-1}, o_{r^+}, arr[i-1])}
\end{equation}

Similarly, after inserting $d_{r^+}$, the deliver time is dependent on the new arrival to $v_{j-1}$ after picking up $r^+$ or new request's pickup time, where $\arr{'[j-1]}$ is the new arrival time to $v_{j-1}$ after inserting $o_{r^+}$ before $v_i$:
\begin{equation}
\label{equ:deliver}
    \deliver{(d_{r^+})} = \begin{cases}
                    arr'[j-1] + \query{(v_{j-1}, d_{r^+}, arr'[j-1])}, & i < j \\
                    \pick{(o_{r^+})} + \query{(o_{r^+}, d_{r^+}, \pick{(o_{r^+})})}, & i = j\\ 
                    \end{cases}
\end{equation}

According the locations of the vertexes in $S_R$, we can utilize the compound travel time functions to calculate the new arrival time $\arr{'[k]}$ of each vertex $v_k$ in $O(1)$ time based on the following rules:
\begin{equation}
\label{equ:arr}
    \arr{'[k]} = \begin{cases}
                    \arr{[k]}, & k < i\\
                    \pick{(o_{r^+})} +  \query{(o_{r^+}, v_i, \pick{(o_{r^+})}}, & k = i \\
                    \arr{'[i]} + \travel{(v_i, v_k, \arr{'[i]})}, & i < k < j\\
                    \deliver{(d_{r^+})} + \query{(d_{r^+}, v_j, \deliver{(d_{r^+})})} , &  k = j \\
                    \arr{'[j]} + \travel{(v_j, v_k, \arr{'[j]})}, & j < k \leq n\\
                    \end{cases}
\end{equation}

In \equref{equ:arr}, vertexes locate before $v_i$ are not effected by serving the new request, the arrival times retain the same in original route $S_R$. For vertexes $v_i$ and $v_j$, the new arrival time are dependent on the pickup time and deliver time of the new request, two $\query{}$ functions are invoked to calculate the new arrival time to them respectively. For vertexes locate between $v_{i+1}$ and $v_{j-1}$, compound travel time functions $\travel{(v_i,*,*)}$ can calculate the new arrival time to them based on the new arrival time to $v_i$. Similarly, compound travel time functions $\travel{(v_j, *, *)}$ and the new arrival time to $v_j$ can be utilized to calculate the new arrival times of vertexes locate after $v_j$. Specially, the time complexity of shortest travel time query and compound travel time function are both $O(1)$. Therefore, for each possible insertion position pair $(i,j)$, we can calculate the arrival time of each vertex belongs to $S_{R^+}$ in $O(1)$ time. 

\subsubsection{Checking Constrains and Computing Objective}
\
\newline
Given a possible $(i,j)$ pair for insertion, there are two constraints,the deadline constraint and capacity constraint. These constraints need to be checked to determine whether the newly generated route after inserting is feasible or not.

We use $\latest{[k]}$ to denote the latest arrival time at $v_k$ without violating any deadline constraints at $v_k$ and all vertexes after $v_k$. Similar with calculating the arrival time at the vertex, the $\latest{[k]}$ can be calculated from compound travel time functions between two vertices of the route. For successive sub-route $\langle v_{k}, v_{k+1} \rangle \in S_R$, to make sure the worker satisfy the constraints for all vertexes from $v_k$, we need to calculate $\latest{[k]}$ for $k$ from $v_n$ to $v_0$ in a reverse order, the latest arrival time of each vertex can be calculated by the following equation, and $\latest{[n]}$ is the deadline time of the request whose destination is the last vertex $v_n$:
\begin{equation}
\label{equ:latest}
            \begin{cases}
                \latest{[k]} + \travel{(v_k,v_{k+1},\latest{[k]})} = \latest{[k+1]}, & \\
                \latest{[n]} = e_r, \text{ when    } v_{n} = d_{r}
             \end{cases}
\end{equation}

\begin{figure}[t]
\centering
\includegraphics[width=0.3\textwidth]{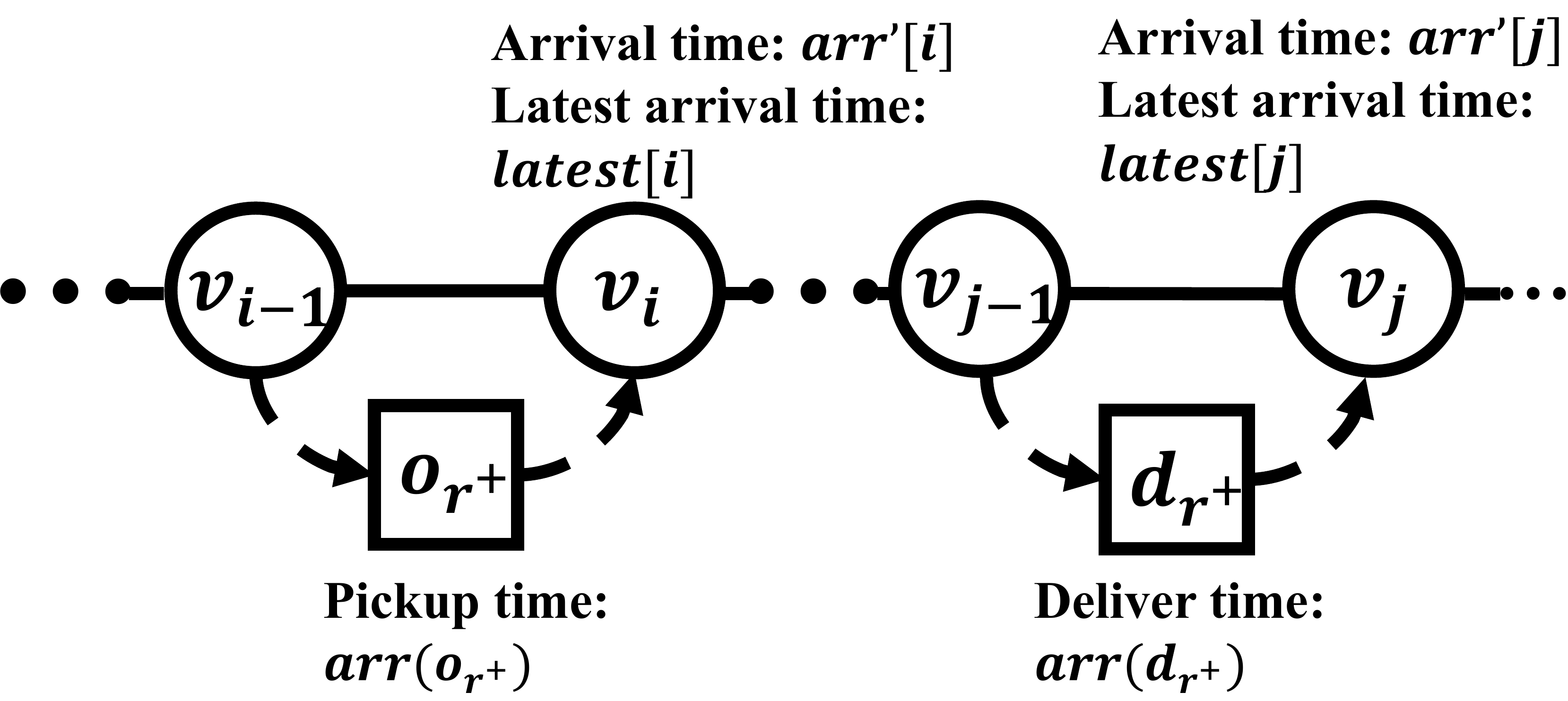}
\vspace{-1ex}
\caption{Checking deadline constraint.}
\vspace{-1ex}
\label{fig:prove}
\end{figure}

\begin{lemma}
\label{lem:latest}
The deadline constraint will not be violated if and only if 
(1) $\pick{(o_{r^+})} \leq e_{r^+}$;
(2) $\pick{(o_{r^+})} + \query{(o_{r^+}, v_i, \pick{(o_{r^+})}} \leq \latest{[i]}$;
(3) $\deliver{(d_{r^+})} \leq e_{r^+}$;
(4) $\deliver{(d_{r^+})} + \query{(d_{r^+}, v_j, \deliver{(d_{r^+})})}$    $\leq$ $latest{[j]}$;
\end{lemma}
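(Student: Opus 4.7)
My plan is to decompose the deadline constraint over the new route $S_{R^+}$ into disjoint groups of vertices and show that conditions (1)--(4) together cover exactly the new or changed deadline obligations, while vertices whose arrival times are unchanged inherit feasibility from the original route $S_R$. The new route has the form $v_0, \ldots, v_{i-1}, o_{r^+}, v_i, \ldots, v_{j-1}, d_{r^+}, v_j, \ldots, v_n$, with new arrival times $\arr{'[k]}$ given by \equref{equ:arr}. The prefix $v_0, \ldots, v_{i-1}$ keeps its original arrival times and therefore its original deadlines automatically, so the proof reduces to handling $o_{r^+}$, the middle block $v_i, \ldots, v_{j-1}$, $d_{r^+}$, and the tail block $v_j, \ldots, v_n$.

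For sufficiency, conditions (1) and (3) directly discharge the deadlines at $o_{r^+}$ and $d_{r^+}$. For the middle block I would combine the definition of $\latest{[i]}$ from \equref{equ:latest} --- the latest time one can arrive at $v_i$ without violating any downstream deadline of the original route --- with the FIFO property: if $\arr{'[i]} \leq \latest{[i]}$, then by induction $\arr{'[k]} \leq \latest{[k]}$ for every $i \leq k \leq j-1$, since each update $\arr{'[k+1]} = \arr{'[k]} + \travel{(v_k, v_{k+1}, \arr{'[k]})}$ is monotone non-decreasing in its starting time. Hence every destination deadline inside the middle block is met. Condition (4) handles the tail block symmetrically: because the sub-route from $v_j$ to $v_n$ is unchanged by the insertion, $\arr{'[j]} \leq \latest{[j]}$ propagates to $\arr{'[k]} \leq \latest{[k]}$ for all $j \leq k \leq n$.

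For necessity, a failure of (1) or (3) is an immediate deadline violation at $o_{r^+}$ or $d_{r^+}$. If (2) fails, then $\arr{'[i]} > \latest{[i]}$, and by the definition of $\latest{[i]}$ there exists some index $k$ with $i \leq k \leq n$, $v_k = d_r$ for some $r$, such that starting from $\arr{'[i]}$ at $v_i$ and following the original route to $v_k$ already exceeds $e_r$; by FIFO, further inserting $d_{r^+}$ before $v_j$ only delays this arrival, so the deadline is still violated in $S_{R^+}$. The necessity of (4) is analogous, using that the tail block in $S_{R^+}$ coincides with the original sub-route starting at $v_j$.

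The main obstacle will be making the FIFO chaining argument precise --- specifically, formalizing that monotonicity of $t + f_{u,v}(t)$ (which FIFO gives for a single edge) lifts to monotonicity of $t + \travel{(u,v,t)}$ along a whole sub-route, so that the inductive step $\arr{'[k]} \leq \latest{[k]} \Rightarrow \arr{'[k+1]} \leq \latest{[k+1]}$ is justified; this should follow by a short composition argument on the piecewise-linear functions in \defref{def:TravelFunction}. A secondary concern is the degenerate case $i = j$, where the middle block is empty and the left-hand side of condition (2) is an auxiliary quantity that does not correspond to any actual arrival in the route; here conditions (3) and (4) together already govern the real arrival at $v_j = v_i$, so I would dispatch this case separately to keep the main argument clean.
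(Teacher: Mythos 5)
Your proposal is correct and follows essentially the same route as the paper's proof: conditions (1) and (3) discharge $r^+$'s own deadline, while (2) and (4) are precisely the gatekeeper checks $\arr{'[i]} \leq \latest{[i]}$ and $\arr{'[j]} \leq \latest{[j]}$ against the precomputed latest-arrival times, whose effect then propagates along the unchanged sub-routes. Your write-up is in fact more complete than the paper's, which only states the intuitive role of each condition and leaves implicit the FIFO/monotonicity induction over the middle and tail blocks, the necessity direction, and the degenerate case $i=j$ that you flag explicitly.
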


\begin{proof}
\vspace{-1ex}
    We prove this lemma by using the general insertion case as \figref{fig:prove} shows. For this new request $r^+$, we first calculate the pickup and deliver time in condition (1) and (3) respectively. These two conditions guarantees this new potential route will not violate the deadline constraint of $r^+$. Condition (2) calculates the new arrival time to $v_i$ after picking up $r^+$, $\arr{'[i]} = \pick{(o_{r^+})} + \query{(o_{r^+}, v_i, \pick{(o_{r^+})}}$, to keep the deadline constraints after $v_i$ satisfied, $\arr'{[i]}$ can not exceed the latest arrival time to $v_i$.  Similarly, for vertex $v_j$, condition (4) calculates the new arrival time to it after serving the new request, $\arr{'[j]} = \pick{(d_{r^+})} + \query{(d_{r^+}, v_j, \deliver{(d_{r^+})}}$. In order not to violate the deadline constraints for vertexes locate after $v_j$, condition (4) guarantees $\arr{'[j]}$ is earlier than the latest arrival time to $v_j$.
\end{proof}

To check the capacity constraint in $O(1)$ time, we follow the previous work \cite{yuxiang2018vldb} and use $\num{[k]}$ to represents the number of requests picked up but not delivered at $v_k$. Then we calculate this array in the following rules, where $v_k$ is either a pickup location $o_r$ or a deliver location $d_r$ of a request $r \in R$:
\begin{equation}
\label{equ:num}
    \num{[k]} = \begin{cases}
                \num{[k-1]} + c_r, & v_{k} = o_{r}\\
                \num{[k-1]} - c_r, & v_{k} = d_{r}
                \end{cases}
\end{equation}

\begin{lemma} \cite{yuxiang2018vldb}
\label{lem:num}
Along the new route, the capacity constrains will not be violated if and only if (1) $ \num{[i-1]} \leq c_w - c_{r^+}$; (2) $ \num{[k]} \leq c_w - c_{r^+}, i < k  \leq j$.
\end{lemma}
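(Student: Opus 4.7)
The plan is to track, vertex by vertex, how the passenger count along the new route $\mathcal{S_{R^+}}$ differs from the count along the original feasible route $\mathcal{S_R}$, and then observe that the capacity check reduces to a short list of inequalities exactly matching the two stated conditions.

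First, I would exploit the feasibility of the original route: by hypothesis every value $\num{[k]}$ along $\mathcal{S_R}$ already satisfies $\num{[k]} \leq c_w$. In $\mathcal{S_{R^+}}$, the only vertices whose passenger count actually changes are the newly inserted $o_{r^+}$ and the vertices of the original route that lie between the pickup of $r^+$ and the delivery of $r^+$, i.e.\ the vertices where the extra request is on board. At $o_{r^+}$ the count becomes $\num{[i-1]} + c_{r^+}$; at each affected intermediate vertex the count becomes the old $\num{[\cdot]}$ plus $c_{r^+}$; and from $d_{r^+}$ onward the contribution of $r^+$ cancels out, so the counts coincide with those of $\mathcal{S_R}$ and are automatically within capacity. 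Therefore, any new violation of the capacity constraint must occur at $o_{r^+}$ or at one of the affected intermediate vertices.

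Next, I would match these two groups of vertices with the two conditions of the lemma. The inequality $\num{[i-1]} + c_{r^+} \leq c_w$ at $o_{r^+}$ is exactly condition (1), and the inequalities $\num{[k]} + c_{r^+} \leq c_w$ at the affected intermediate vertices yield condition (2) over the indicated range of $k$. The iff then falls out immediately: if some condition fails, the corresponding count in $\mathcal{S_{R^+}}$ exceeds $c_w$; if both conditions hold, every changed count is within capacity while every unchanged count was already feasible.

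The only point where I would slow down is the bookkeeping around the index boundaries, to confirm that the range $i < k \leq j$ in condition (2) lines up with the set of vertices carrying the additional $c_{r^+}$ passengers under the paper's convention for $\num{[\cdot]}$ (where $\num{[k]}$ reflects the count \emph{after} processing $v_k$, as fixed by \equref{equ:num}). In particular, I would verify the degenerate case $i = j$: here condition (2) has an empty range and only condition (1) survives, which correctly corresponds to $o_{r^+}$ and $d_{r^+}$ being inserted adjacently so that $o_{r^+}$ is the unique vertex of $\mathcal{S_{R^+}}$ where $r^+$ is on board.
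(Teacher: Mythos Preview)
Your argument is correct and is the standard one: compare the passenger counts along $\mathcal{S_{R^+}}$ and $\mathcal{S_R}$, observe that they differ by $c_{r^+}$ exactly on the stretch from $o_{r^+}$ up to (but not including) $d_{r^+}$, and reduce the capacity check on the new route to the stated inequalities on the old $\num{[\cdot]}$ values. There is nothing to compare against here, since the paper does not give its own proof of this lemma; it simply cites \cite{yuxiang2018vldb} and directs the reader there. Your caution about the index boundaries is well placed: under the convention fixed by \equref{equ:num}, the original vertices carrying the extra $c_{r^+}$ passengers are $v_i,\ldots,v_{j-1}$, so the natural range is $i \leq k \leq j-1$ rather than the $i < k \leq j$ printed in the statement; this is worth noting but does not affect the soundness of your method.
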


Please refer to Ref.\cite{yuxiang2018vldb} for the proof of Lemma 2.

\subsubsection{Algorithm Details}
\
\newline
\fakeparagraph{Main Idea} The algorithm enumerates all possible $(i,j)$ pairs, and updates the new arrival time according \equref{equ:pickup} - \equref{equ:arr} in $O(1)$ time. For $v_i$, \lemref{lem:latest} (1)-(2) are utilized to check the deadline constraint will violate or not after inserting $o_{r^+}$, and the capacity constraint is checked by \lemref{lem:num} (1). \lemref{lem:latest} (3)-(4) and \lemref{lem:num} (2) are used to check the deadline constraint and capacity constraint at $v_j$ after inserting $d_{r^+}$ before it. The time cost for all these feasibility checking are $O(1)$.

\begin{algorithm}
\caption{Quadratic Time Insertion}
\label{alg:n2}
\LinesNumbered 
\KwIn{a worker $w$ with a feasible route $S_R$ and a request $r^+$}
\KwOut{new route $S^*$ for $w$ and its travel cost $obj^*$}
$S^* \gets S_R, obj^* \gets +\infty$ \;
Initialize $\arr{}$, $\latest{}$, $\num{}$ by  $\equref{equ:arr}$ - $\equref{equ:num}$ \;
Compound travel time functions $\travel{(i,j, \cdot)},\forall i,j, 0< i < j < n$ by \defref{def:TravelFunction} \;
\For{$i \gets $ 1 to $n+1$}{
    \lIf{\lemref{lem:latest} (1) or (2) violated}{continue}
    \lIf{\lemref{lem:num} (1) violated}{continue}
    \For{$j \gets $ $i$ to $n+1$}{
        \lIf{\lemref{lem:latest} (3) or (4) violated}{continue}
        \lIf{\lemref{lem:num} (2) violated}{break}
        
        \If{j = n+1}{$obj \gets \deliver{(d_{r^+})}$}
        \Else{        
        calculate $\arr{'[j]}$ by $\equref{equ:arr}$ \;
        $obj \gets \arr{'[j]} + \travel{(v_j, v_n, \arr{'[j]})}$ \;}
        \If{$obj < obj^*$}{$obj^* \gets obj, i^* \gets i, j^* \gets j$}
    }
    \If{$obj^* < +\infty$}{ $S^* \gets$ insert $o_r$ before $i^*$-th and $d_r$ before $j^*$-th in $S_R$}
}
\KwRet{$S^*$, $obj^*$}\;
\end{algorithm}

\fakeparagraph{Complexity Analysis} Line 2 initializes $\arr{}, \latest{}, \num{}$. Line 3 compounds the travel time function between any two vertexes along $S_R$. Lines 4 and 7 enumerate possible positions to insert $o_{r^+}$ and $d_{r^+}$ respectively, and each line takes $O(n)$ time. The deadline and capacity constraints are checked for inserting $o_{r^+}$ in line 5 and line 6. For a potential $j$-th position, lines 8 and 9 check the deadline and capacity constraint. Each line of constraints checking takes $O(1)$ time. If $j = n+1$ in line 10, as shown in \figref{fig:insertionCase3} and \figref{fig:insertionCase4},  $r^+$ is the last delivered request, the total travel time of the new route is the deliver time of the new request. Otherwise, we can get the new arrival time at $v_j$ after inserting origin and destination of the new request in $O(1)$ time based on Eq.\ref{equ:arr} in line 13, and then line 14 takes $O(1)$ time to get the objective from $v_j$ to $v_n$ based on the compound travel time function and new arrival time at $v_j$. Therefore, the total time cost of \algoref{alg:n2} is $O(n^2)$. For the space cost, one compound travel time function need to be initialized between any two vertexes in line 3, therefore one worker should keep $O(n^2)$ travel time functions, and the space cost is $O(n^2)$.

\begin{table}[t]
	\centering
	\caption{$\travel{}$ of $S_R$ in \algoref{alg:n2}}
    \vspace{-1ex}
	\label{tab:alg2compound}
	{\scriptsize
	\begin{tabular}{|c|c|c|c|c|c|}
		\hline
		 & $o_1$ & $o_2$ & $d_2$ & $d_1$ \\
		\hline
		$o_1$ &   $\times$     & $\travel{(o_1, o_2, t)}$	 & $\travel{(o_1, d_2, t)}$	& $\travel{(o_1, d_1, t)}$ \\
		\hline
		$o_2$ &   $\times$     & $\times$ & $\travel{(o_2, d_2, t)}$ & $\travel{(o_2, d_1, t)}$ \\
		\hline
		$d_2$ &   $\times$    & $\times$ & $\times$ & $\travel{(d_2, d_1, t)}$ \\
		\hline
		$d_1$ &   $\times$     & $\times$ & $\times$ & $\times$ \\
		\hline
	\end{tabular}}
  \vspace{-2ex}
\end{table}

\fakeparagraph{Example 3} \textit{Back to the settings in Example 1. For the feasible route with shortest travel time $S_R =  \langle o_1, o_2, d_2, d_1 \rangle$, the compound travel time functions along $S_R$ are shown in \tabref{tab:alg2compound}. For each vertex in $S_R$, the algorithm compounds and maintains travel time functions to all vertices after it. Therefore, for each vertex, we can update the arrival time to any vertex after it with these functions instead of applying the shortest path query algorithm over large-scale time-dependent road networks. For $r_3$, we also consider the insertion of $(1,3)$. As \figref{fig:example3} shows, after inserting $o_3$ before $o_2$, same as the previous example, 2 $\query{}$ functions are inevitable to invoke to calculate $\arr{'[1]})$. However, because $o_2$ maintains one compound travel time function to $d_2$ after it, the new arrival time to $d_2$ can be updated as $\arr{'[1]} + \travel{(o_2, d_2, \arr{'[1]})}$ directly, which is shown as the dot line in \figref{fig:example3}. To calculate the final travel time $\arr{'[2]}$, 2 $\query{}$ are invoked from $d_2$ based on the updated $\arr{'[2]}$. Therefore, \algoref{alg:n2} compounds 3 travel time functions, and for insertion $(1,3)$, it takes 4 times $\query{}$ function.} 

\begin{figure}[t]
\centering
\includegraphics[width=0.3\textwidth]{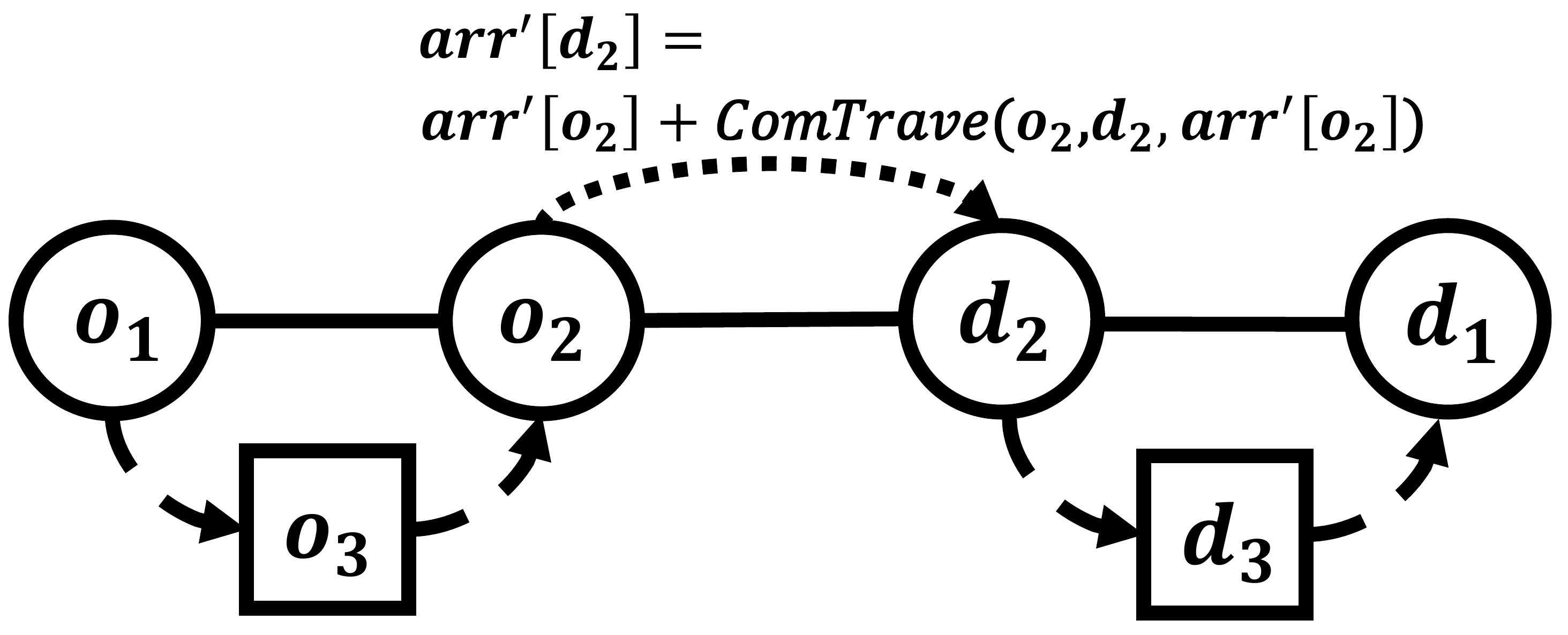}
\vspace{-1ex}
\caption{Update arrival time with $\travel{}$.}
\vspace{-1ex}
\label{fig:example3}
\end{figure}

\subsection{Linear Time Algorithm}
\fakeparagraph{Basic Idea}
Instead of enumerating possible positions to insert new request's origin and destination, maintaining compound travel time functions for total all $O(n^2)$ number of $(i,j)$ pairs, we enumerate the value of $j$ and find the optimal $i$. Then we further calculate the new travel time in $O(1)$ time based on the compound travel time functions. Therefore, the worker only need to store compound travel time functions from $v_j$ to both the last vertex $v_n$ and to the next vertex of $v_j$, where $j$ is each possible $j$-th position of current route to insert new request's destination, the space cost for compound travel time functions can be reduced from $O(n^2)$ to $O(n)$.

\subsubsection{Updating Optimal Pickup Location as Enumerating $j$}
\
\newline
Given a vertex $v_k$ with route $S_R$, if the destination of the new request $d_{r^+}$ is inserted before it, based on the order constraint there are only two cases for inserting pickup vertex $o_{r^+}$. Case 1 ($i = j = k$), as shown in \figref{fig:insertionCase2} and \figref{fig:insertionCase3} , the origin is also inserted before $v_k$ in the current route. Case 2 ($i<j$), as shown in \figref{fig:insertionCase1} and \figref{fig:insertionCase4}, the destination is inserted before $v_k$, and the origin $o_{r^+}$ is inserted in the sub-route $\langle v_1, v_2,\cdots, v_{k-1} \rangle$ before $v_k$. 

Based on the FIFO property of the time-dependent road network, if we can get a earlier arrival time to vertex $v_k$ therefore we can get a earlier arrival time to the last vertex $v_n$, which is a better route with shorter total travel time. Then we can update the optimal position to insert origin when enumerating value of $j$, after inserting $d_{r^+}$ before $v_j$, we further check whether inserting $o_{r^+}$ before $v_j$ is better choose than inserting $o_{r^+}$ in the sub-route $\langle v_1, v_2, \cdots, v_{j-1} \rangle$. We use $\pickloc{[k]}$ to indicates the best position to insert $o_{r^+}$ when insert $d_{r^+}$ before $v_k$, if this new potential $k$-th position do not satisfy either deadline constraint and capacity constraint for inserting $o_{r^+}$ then $\pickloc{[k]} = \nil$, otherwise we can calculate the value based this lemma:
\begin{lemma}
\label{lem:pickloc}
Given a potential vertex $v_k$ for inserting $d_{r^+}$ before it, we can update $\pickloc{[k]} = k$ if and only if (1) $\num{[k-1]} \leq c_w - c_{r^+}$; (2) $\pick{(o_{r^+})} <= e_{r^+}$ and $\pick{(o_{r^+})} + \query{(o_{r^+}, v_k, \pick{(o_{r^+})})} \leq \latest{[k]}$;  
(3) $\pick{(o_{r^+})} + \query{(o_{r^+}, v_k, \pick{(o_{r^+})})} <  \arr{'_{o_{r^+}}[k]}$, where $\arr{'_{o_{r^+}}[k]}$ is the arrival time to $v_k$ if inserting $o_{r^+}$ in positions of sub-route $\langle v_1, v_2, \cdots, v_{k-1} \rangle$, which can be calculated based on the travel time function from $v_{k-1}$ to $v_k$ as 

\vspace{-3ex}
\begin{gather*}
     \arr{'_{o_{r^+}}[k]} = \arr{'_{o_{r^+}}[k-1]} + \travel{(v_{k-1}, v_k,\arr{'_{o_{r^+}}[k-1]})}
\end{gather*}

\end{lemma}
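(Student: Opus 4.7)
The plan is to split the lemma into two parts: (A) the iff characterization of $\pickloc{[k]} = k$ via conditions (1)--(3); and (B) the recurrence defining $\arr{'_{o_{r^+}}[k]}$.

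For part (A), I would first verify that conditions (1) and (2) are exactly the capacity and deadline feasibility conditions from \lemref{lem:num} and \lemref{lem:latest} restricted to the special case $i=j=k$. If either fails, inserting $o_{r^+}$ immediately before $v_k$ is itself infeasible, so $\pickloc{[k]}$ cannot equal $k$. Conversely, when both hold, this insertion is feasible and the arrival at $v_k$ is $\pick{(o_{r^+})} + \query{(o_{r^+}, v_k, \pick{(o_{r^+})})}$, which is the left side of condition (3). Then I would argue that condition (3) captures the correct optimality criterion using the FIFO property: for any edge, an earlier departure yields an earlier arrival, so by a short induction along the tail $v_k, v_{k+1}, \ldots, v_n$, minimizing the arrival time $\arr{'[k]}$ also minimizes the final travel time $\arr{'[n]}$, which is the total travel cost of the new route (the objective). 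Since, by definition, $\arr{'_{o_{r^+}}[k]}$ is the best arrival at $v_k$ achievable by inserting $o_{r^+}$ strictly before $v_k$, condition (3) asserts exactly that $i = k$ strictly dominates every choice $i<k$; hence it is equivalent to $\pickloc{[k]} = k$.

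For part (B), I would prove the recurrence by induction on $k$, once more invoking FIFO. Fix any pickup position $i \in \{1, \ldots, k-1\}$: once $o_{r^+}$ is inserted, the arrival at $v_k$ equals the arrival at $v_{k-1}$ plus the travel time on the single edge $(v_{k-1}, v_k)$. Because this single-edge travel time is FIFO-monotone as a function of the departure time at $v_{k-1}$, the value of $i$ that minimizes the arrival at $v_{k-1}$ also minimizes the arrival at $v_k$. Combined with the inductive hypothesis that $\arr{'_{o_{r^+}}[k-1]}$ already records the best arrival at $v_{k-1}$ over previously considered pickup positions, evaluating $\travel{(v_{k-1}, v_k, \arr{'_{o_{r^+}}[k-1]})}$ gives the best arrival at $v_k$, which is precisely $\arr{'_{o_{r^+}}[k]}$.

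The main obstacle I anticipate is reconciling the informal description of $\arr{'_{o_{r^+}}[k]}$ (best over $i \in \{1, \ldots, k-1\}$) with the natural inductive hypothesis on $\arr{'_{o_{r^+}}[k-1]}$ (best over $i \in \{1, \ldots, k-2\}$): the newly available option $i = k-1$ at step $k$ is in fact already folded in during the previous iteration $j = k-1$ of the outer enumeration, where it is compared against the earlier best when updating $\pickloc{[k-1]}$. Making this bookkeeping invariant explicit, so that the stored $\arr{'_{o_{r^+}}[k-1]}$ reflects the best over $\{1, \ldots, k-1\}$ before we move to step $k$, is where the technical care is needed to close the inductive argument cleanly and justify that the one-edge extension $\travel{(v_{k-1}, v_k, \cdot)}$ does not miss any insertion position.
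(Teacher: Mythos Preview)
Your proposal is correct and follows essentially the same approach as the paper: conditions (1)--(2) are identified with the capacity and deadline feasibility checks, and condition (3) is justified via the FIFO property to argue that an earlier arrival at $v_k$ propagates to an earlier arrival at $v_n$. Your treatment is in fact more careful than the paper's brief argument---the paper phrases the FIFO step as a short contradiction and does not explicitly separate out or inductively justify the recurrence for $\arr{'_{o_{r^+}}[k]}$ as you do in part (B).
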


\begin{proof}
\vspace{-1ex}
    Condition (1) checks the capacity constraint, guarantees at $v_k$ the worker has enough capacity to pickup $r^+$. Condition (2) checks the deadline constraints for inserting $o_{r^+}$ before $v_k$. Condition (3) guarantees that 
    inserting $o_{r^+}$ at the new potential $k$-th position will cause shorter travel time than inserting it in sub-route $\langle v_1, v_2, \cdots, v_{k-1} \rangle$, we can prove it based on the contradiction. For a new potential position $k$, if we insert $o_{r^+}$ before $v_k$, the new arrival time to $v_k$ is equal to  $\pick{(o_{r^+})} + \query{(o_{r^+}, v_k, \pick{(o_{r^+})})}$. Before taking $v_k$ into consideration, the best position for picking up the new request $r^+$ is $\pickloc{[k-1]}$, if insert $o_{r^+}$ at $\pickloc{[k-1]}$, we can get the arrival time at $v_k$ based on the compound travel time function from $v_{k-1}$, where $\arr{'_{o_{r^+}}[k]} = \arr{'_{o_{r^+}}[k-1]} + \travel{(v_{k-1}, v_k,\arr{'_{o_{r^+}}[k-1]})}$. If condition (3) is satisfied, we can get an earlier arrival time at $v_k$. If we assume $\pickloc{[k-1]}$ is the best position to insert $o_{r^+}$, therefore we can get a route with shorter travel time compare with inserting $o_{r^+}$ before $v_k$. However, based on the FIFO property, if $\pickloc{[k-1]}$ is chosen then we can get a route arrival later to $v_k$ and to the final vertex $v_n$ with larger travel time, which contradicts the assumption. The assumption does not hold, so $k$ is a better position to insert $o_r$, we can update $\pickloc{[k]} = k$.
\end{proof}

\subsubsection{Inserting new request's origin $o_{r^+}$ at $\pickloc{[k]}$}
\
\newline
 When enumerating each potential $k$-th position, we can update the best position to insert the new request's origin in $O(1)$ time based on \lemref{lem:pickloc}. To further check the feasibility for inserting $d_{r^+}$, we need to update the new arrival time to $v_k$ first based on the value of $\pickloc{[k]}$. If the value is $\nil$, there is no feasible positions to pickup and deliver $r^+$ until vertex $v_k$ in the current route, the arrival time remains the original value. If $k$ is the optimal position to insert $o_{r^+}$, that is $\pickloc{[k]} = k$, inserting $o_{r^+}$ before $v_k$, two shortest travel time queries $\query{}$ are invoked to get the pickup time from $v_{k-1}$ to $o_{r^+}$ and the new arrival time from $o_{r^+}$ to $v_k$. Otherwise, the value of $\pickloc{[k]}$ indicates the best position to insert the origin in sub-route before $v_k$. The new arrival time to $v_k$ can be updated based the compound travel time function and arrival time from previous vertex $v_{k-1}$.

We can update the new arrival time of $v_k$ based on the following rules in $o(1)$ time if $\pickloc{[k]} \neq \nil$. Case 1 indicates that $\pickloc{[k]} = k$, $o_{r^+}$ is also inserted before $v_k$. Case 2 represents that $\pickloc{[k]} = \pickloc{[k-1]}$, $o_{r^+}$ is inserted into sub-route $\langle v_1, v_2, \cdots, v_{k-1} \rangle$:

$\arr{'_{o_{r^+}}[k]} =$
\begin{equation}
\label{equ:pickloc}
    \begin{cases}
    \pick{(o_{r^+})} + \query{(o_{r^+}, v_k, \pick{(o_{r^+})})},& Case 1\\
    \arr{'_{o_{r^+}}[k-1]} + \travel{(v_{k-1}, v_k, \arr{'_{o_{r^+}}[k-1]})},& Case 2\\
    \end{cases}
\end{equation}

\subsubsection{Inserting new request's $d_{r^+}$ at $k$-th Position}
\
\newline
For each $k$-th position, we should consider the capacity constraint of the worker, the time constraint of new request and the new arrival time at $v_k$ to check the feasibility of inserting $d_{r^+}$ before $v_k$.
\begin{lemma}
\label{lem:ddlk}
The $k$-th position is a feasible position for inserting $d_{r^+}$ if and only if
(1) $num[k-1] \leq c_w - c_{r^+}$;
(2) $\deliver{(d_{r^+})} \leq e_{r^+}$;
(3) $\deliver{(d_{r^+})} + \query{(d_{r^+}, v_k, \deliver{(d_{r^+})})} \leq \latest{[k]}$
\end{lemma}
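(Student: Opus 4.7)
The plan is to prove both directions of the ``if and only if'' by checking that the three conditions exactly correspond to the constraints that could possibly be violated when $d_{r^+}$ is inserted before $v_k$ (with $o_{r^+}$ already placed at $\pickloc{[k]} \le k$ as determined by \lemref{lem:pickloc}). The order and completion constraints are automatic: $o_{r^+}$ precedes $d_{r^+}$ in the new route by construction, and $r^+$ is added to the served set. So the proof reduces to handling deadlines (for $r^+$ itself and for the tail $v_k, v_{k+1}, \dots, v_n$) and capacity (in the interval where $r^+$ is on board).

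For the deadline part, I would mirror the argument used in \lemref{lem:latest}. Condition (2), $\deliver{(d_{r^+})} \le e_{r^+}$, is precisely the deadline constraint of the new request. For every vertex after $v_k$ in the route, the new arrival time at $v_k$ after inserting $d_{r^+}$ is $\arr{'[k]} = \deliver{(d_{r^+})} + \query{(d_{r^+}, v_k, \deliver{(d_{r^+})})}$; by the recursive definition of $\latest{[\cdot]}$ in \equref{equ:latest} together with the FIFO property, if $\arr{'[k]} \le \latest{[k]}$ then $\arr{'[k']} \le \latest{[k']}$ for every $k' > k$, so no downstream deadline is violated. This is exactly condition (3). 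Conversely, a deadline violation of $r^+$ forces (2) to fail, and a deadline violation at some $v_{k'}$ with $k' \ge k$ forces $\arr{'[k]} > \latest{[k]}$, contradicting (3).

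For the capacity part, condition (1) states $\num{[k-1]} + c_{r^+} \le c_w$. The load increase of $c_{r^+}$ is present exactly on the interval $[\pickloc{[k]}, k-1]$ (in the new route) and is zero elsewhere. Condition (1) verifies the bound at position $k-1$. For the other positions in the interval, I would rely on the inductive invariant maintained by the enumeration: when $\pickloc{[k]} = k$, the interval is empty except for position $k-1$, which is exactly what (1) checks; when $\pickloc{[k]} = \pickloc{[k-1]}$, the positions $\pickloc{[k]}, \dots, k-2$ were already handled at previous iterations (through the checks in \lemref{lem:pickloc} condition (1) and prior applications of \lemref{lem:ddlk} condition (1) at smaller $k$ values), and the only new position to be checked is $k-1$. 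For the converse, if capacity is violated somewhere in the augmented interval, then in particular $\num{[k-1]} + c_{r^+} > c_w$ along the way, which falsifies (1).

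The main obstacle I anticipate is making the capacity argument fully rigorous: condition (1) as stated only concerns $v_{k-1}$, yet the new request is on board throughout $[\pickloc{[k]}, k-1]$. The clean way to discharge this is to formulate and prove the loop invariant ``for every $m \in [\pickloc{[k]}, k-1]$, $\num{[m]} + c_{r^+} \le c_w$'' and show that conditions (1) of \lemref{lem:pickloc} and of this lemma together suffice to preserve it as $k$ advances. Once that invariant is in place, the rest of the proof is a direct check against the four feasibility constraints of \defref{def:TravelFunction}'s route and the four-item list preceding it.
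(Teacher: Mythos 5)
Your proposal is correct and follows the same decomposition as the paper's own (very terse) proof, which simply identifies condition (1) with the capacity constraint, condition (2) with the deadline of the new request $r^+$, and condition (3) with the downstream deadlines encoded by $\latest{[k]}$. In fact you are more careful than the paper: its proof never explains why checking $\num{[k-1]}$ alone suffices for the whole interval $[\pickloc{[k]}, k-1]$ on which $r^+$ is on board, and your loop invariant --- maintained by the checks of \lemref{lem:pickloc}~(1) and of this lemma at smaller $k$, with $\pickloc{[k]}$ set to \nil when those checks fail --- supplies exactly the justification the paper leaves implicit.
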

\begin{proof}
\vspace{-1ex}
    Condition (1) checks the capacity constraint. Condition (2) guarantees that deliver new request at $k$-th position will not violate the deadline constraint of the new request. After inserting $d_{r^+}$ before $v_k$, condition (3) checks the new arrival time to $v_k$ will not violate any deadline constraint for assigned requests delivered after $v_k$. 
\end{proof}

\vspace{-2ex}
If $k$ is a feasible position to insert $d_{r^+}$, we first update the new arrival time at $v_{k-1}$ which is affected by inserting $o_{r^+}$. Next we can get the deliver time $\deliver{(d_{r^+})}$ of the new request as:
\begin{equation}
\label{equ:deliverloc1}
\deliver{(d_{r^+})} = \arr{'_{o_{r^+}}[k-1]} + \query{(v_{k-1},d_{r^+}, \arr{'_{o_{r^+}}[k-1]})}
\end{equation}

Then we can update the new arrival time to $v_k$ after delivering the new request based on the query function $\query{}$. We use $\Delta_k$ to denote the arrival time at $v_k$, if we insert $o_{r^+}$ at $\pickloc{[k]}$ and insert $d_{r^+}$ at $k$,
\begin{equation}
\label{equ:deliverloc2}
\Delta_k = \deliver{(d_{r^+})} + \query{(d_{r^+}, v_k, \deliver{(d_{r^+})})}
\end{equation}

Since $v_k$ maintains the compound travel time function to the last vertex $v_n$, we can calculate the new arrival time to $v_n$ which is the $objective$ of the insertion $(\pickloc{[k]},k)$, 
\begin{equation}
\label{equ:obj}
obj = \Delta_k + \travel{(v_k, v_n, \Delta_k)}
\end{equation}

\begin{table}[t]
	\centering
	\caption{$\travel{}$ of $S_R$ in \algoref{alg:n}}
    \vspace{-1ex}
	\label{tab:alg3compound}
	{\scriptsize
	\begin{tabular}{|c|c|c|c|c|c|}
		\hline
		      & $ 1$ & $  2 $ \\
		\hline
		$o_1$ &   $\travel{(o_1, o_2, t)}$     & $\travel{(o_1, d_1, t)}$ \\
		\hline
		$o_2$ &   $\travel{(o_2, d_2, t)}$     & $\travel{(o_2, d_1, t)}$ \\
		\hline
		$d_2$ &   $\travel{(d_2, d_1, t)}$     & $\times$ \\
		\hline
		$d_1$ &   $\times$     & $\times$  \\
		\hline
	\end{tabular}}
  \vspace{-2ex}
\end{table}

\begin{algorithm}
\caption{Linear Time Insertion}
\label{alg:n}
\LinesNumbered 
\KwIn{a worker $w$ with a feasible route $S_R$ and a request $r^+$}
\KwOut{new route $S^*$ for $w$ and its travel cost $obj^*$}
$S^* \gets S_R, obj^* \gets +\infty$ \;
Initialize $\arr{}$, $\latest{}$, $\num{}$ by  $\equref{equ:arr}$ - $\equref{equ:num}$ \;
Compound travel time functions $\travel{(k,k+1, \cdot)}, \travel{(k,n,\cdot)} ,\forall k, 0< k < n$ by \defref{def:TravelFunction} \;
\For{$k \gets $ 1 to $n+1$}{
    \lIf{\lemref{lem:pickloc} (1) or (2) violated}{continue}
    \lIf{\lemref{lem:pickloc} (3)}{$\pickloc{[k]} \gets k$}
    Update $\arr{'_{o_{r^+}}[k]}$ after inserting $o_{r^+}$ based on \equref{equ:pickloc} \;
    
    \If{\lemref{lem:ddlk} (1)(2)(3)}{
        Calculate $\deliver{(d_{r^+})}$ after inserting $o_{r^+}$ based on \equref{equ:deliverloc1} \;
        Calculate $\Delta_k$ of insertion $(\pickloc{[k]},k)$ based on \equref{equ:deliverloc2} \;

        \If{k = n+1}{$obj \gets \Delta_k$ \;}
        \Else{$obj \gets calculate by $ \equref{equ:obj} \;}        

        \If{$obj < obj^*$}{$obj^* \gets obj, i^* \gets \pickloc{[k]}, j^* \gets k$ \;} 
    }

}
\If{$obj^* < +\infty$}{ $S^* \gets$ insert $o_r$ at $i^*$-th and $d_r$ at $j^*$-th in $S_R$ \;}
\KwRet{$S^*$, $obj^*$}\;
\end{algorithm}

\subsubsection{Algorithm Details}
\
\newline
\fakeparagraph{Main Idea} We enumerate each $k$-th position along $v_0$ to $v_n$ in the current route, we update $\pickloc{[k]}$ value to indicates the best position to insert origin of $r^+$ until $v_k$ in the current route. Simultaneously, for each $k$, we check the feasibility and calculate the objective of this new generated route by inserting origin of $r^+$ at $\pickloc{[k]}$ and inserting destination at $k$ position. For this new potential route where $r^+$ is picked up at $\pickloc{[k]}$ and delivered before $v_k$, the feasibility is checked by \lemref{lem:ddlk} and the objective of the new total travel time is calculated by \equref{equ:deliverloc2} and \equref{equ:obj}. It take $O(1)$ time to find the optimal position to insert the new request's origin, check feasibility and calculate the objective of this new route.

\fakeparagraph{Complexity Analysis}
Line 2 initializes $\arr{}, \latest{}, \num{}$. Line 3 compounds the travel time functions between $v_k$ to $ v_{k+1}$ and $v_k$ to $v_n$ for $v_k \in S_R$. Line 4 enumerates each potential position to insert destination of $r^+$, it takes $O(n)$ time. The deadline and capacity constraints checking for inserting $o_{r^+}$ at the new potential $k$-th position will take in line 5, if it violates constraints the best position to insert new request's origin is in sub-route $\langle v_1, v_2, \dots, v_{k-1} \rangle$. Otherwise the $k$ will be the best position to insert the origin in line 6. After inserting origin, we need to update the arrival time affected by it in line 7. Then we check the feasibility and calculate the new arrival time to $v_k$ after inserting $d_{r^+}$ before $v_k$ in lines 8-10, and each line takes $O(1)$ time. After inserting both origin and destination of the new request, we calculate the travel time of new route from lines 11 - 14. Therefore, the total time cost of \algoref{alg:n} is $O(n)$. For the space cost, in line 2 for each vertex $v_k$, there are two compound travel time functions need to be initialized, therefore one worker should maintain $O(n)$ compound travel time functions, and the space cost is $O(n)$.

\begin{figure}[t]
	\centering
    \begin{subfigure}[b]{0.25\textwidth}
		\includegraphics[width=\textwidth]{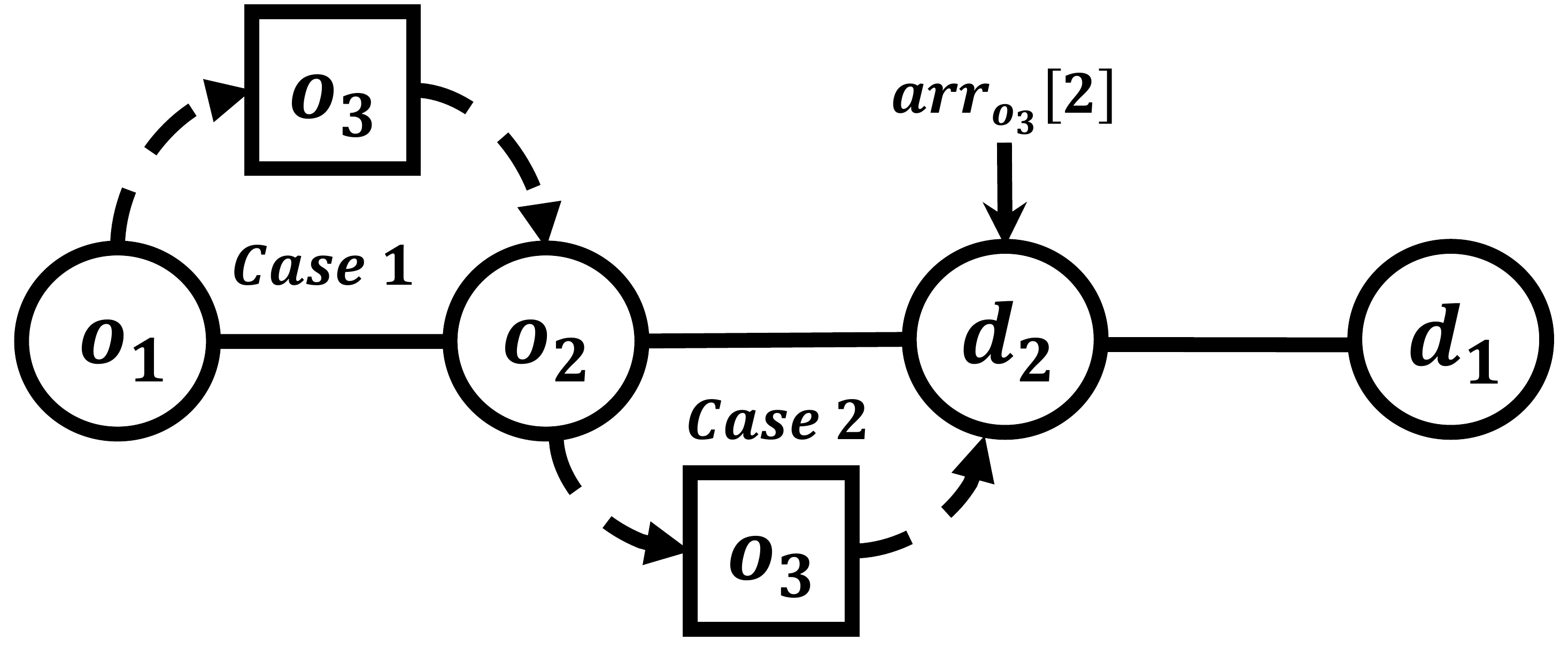}
		\vspace{-4ex}
		\caption{\footnotesize{When $k=2$, update $\pickloc{[2]}$}}
		\label{fig: example5a}
		\vspace{1ex}
	\end{subfigure}

	\begin{subfigure}[b]{0.25\textwidth}
		\includegraphics[width=\textwidth]{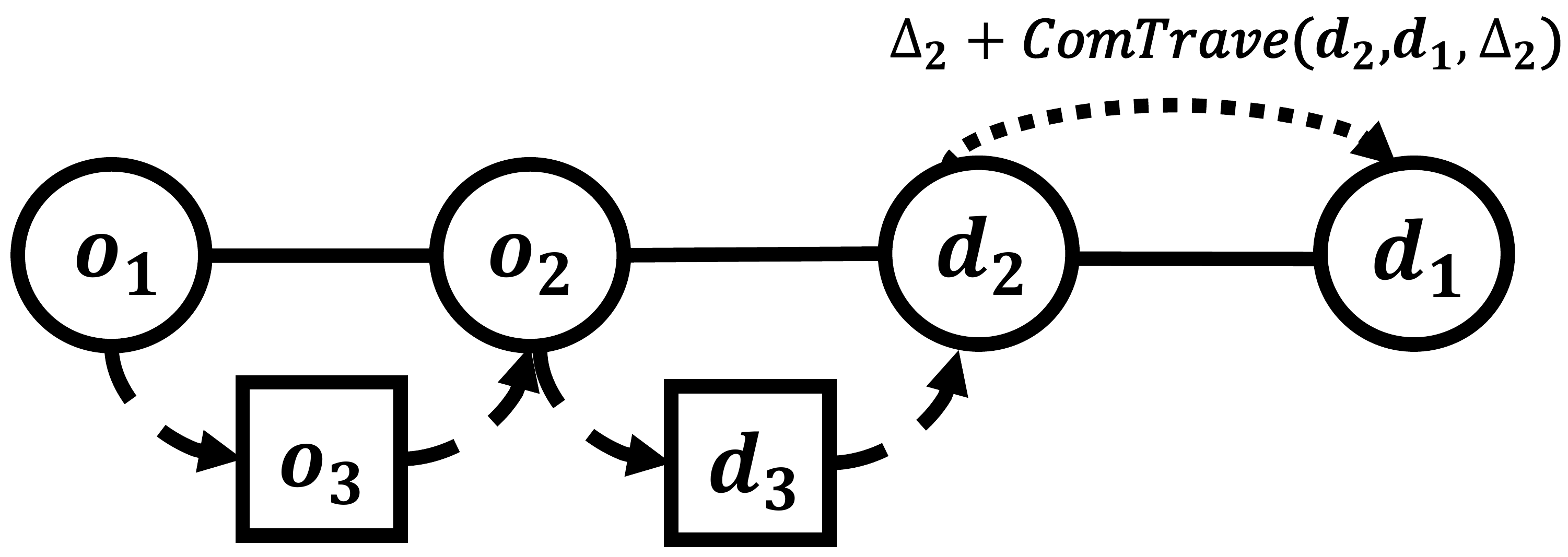}
		\vspace{-4ex}
		\caption{\footnotesize{$obj$ of insertion $(\pickloc{[2]},2)$}}
		\label{fig:example5b}
	\end{subfigure}
	\vspace{-2ex}
	\caption{Insertion of $(\pickloc{[2]},2)$.}
	\label{fig:example4}
	\vspace{-4ex}
\end{figure}

\fakeparagraph{Example 4} \textit{ Back to the settings in Example 3. For the feasible route with the shortest travel time $S_R =  \langle o_1, o_2, d_2, d_1 \rangle$, the compound travel time functions along $S_R$ are shown in \tabref{tab:alg3compound}. Each vertex compounds and maintains at most 2 functions, one for the next vertex and another for the last vertex in $S_R$. For easy of presentation, we first make the assumption that insertion $(1,2)$ is the optimal location pair to serve $r_3$. When $k=1$, it means $d_3$ is inserted before $o_2$. Due to the order constraint, $o_3$ can only inserted before $o_2$. We first calculate new arrival time at $o_2$. If $o_3$ is inserted before it, $\arr{'_{o_{3}}[1]} = \pick{(o_{3})} + \query{(o_{3}, o_2, \pick{(o_{3})})}$ and $\pickloc{[1]} = 1$. When $k=2$, as \figref{fig: example5a} shows, there is a new possible location before $d_2$ for inserting $o_3$. Because of the FIFO property, if we can get an earlier arrival time at $d_2$, then we can get an earlier arrival time at $d_1$. Therefore, when $k=2$, we need to check which location is the best for inserting $o_3$, \ie{ $\pickloc{[2]} = 2$ or $1$}. In case 1, $o_3$ is inserted before $o_2$, the new arrival time at $d_2$ can be calculated as  $\arr{'_{o_{3}}[1]}+\travel{(o_2, d_2, \arr{'_{o_{3}}[1]})}$. Otherwise, in case 2, $o_3$ is inserted before $o_2$, the new arrival time at $d_2$ can be calculated as $\pick{(o_{3})} + \query{(o_{3}, d_2, \pick{(o_{3})})}$. Based on our assumption, $\arr{'_{o_{3}}[1]}+\travel{(o_2, d_2, \arr{'_{o_{3}}[1]})}$ is less than $\pick{(o_{3})} + \query{(o_{3}, d_2, \pick{(o_{3})})}$, so $\arr{'_{o_{3}}[2]}) = \arr{'_{o_{3}}[1]}+\travel{(o_2, d_2, \arr{'_{o_{3}}[1]})}$ and $\pickloc{[2]} = 1$. Then we try to insert $d_3$ before $d_2$, and we can first get $\deliver{(d_{3})}$ as $\arr{'_{o_{3}}[1]}+ \query{}$ ${(o_2,o_3,\arr{'_{o_{3}}[1]})}$. Consider inserting $o_3$ before $o_2$ and inserting $d_3$ before $d_2$ in \figref{fig:example5b}, \ie{ the insertion of $(1,2)$}. We can get arrival time at $d_2$ as $\Delta_2 = \deliver{(d_{3})} + \query{(d_{3}, d_2, \deliver{(d_{3})})}$. Finally, based on the travel time function from $d_2$ to the last vertex, we can get the objective value as $\Delta_2 + \travel{(d_2, d_1, \Delta_2)}$.
}

\vspace{-1ex}
\section{Experiment Study}\label{sec:experiment}
In this section, we first present the experimental setup for our study, then we demonstrate the discuss our experimental results. 

\begin{table}[t]
\vspace{-1ex}
	\centering
	\caption{Statistics of datasets.}
    \vspace{-2ex}
	\label{table:dataset}
    \resizebox{0.45\textwidth}{!} {%
	\begin{tabular}{|c|c|c|c|}
		\hline
		Dataset & \#(Vertices) & \#(Edges) & \#(Interpolation Points)\\
		\hline														
		\textit{Chengdu} & $423,434$ & $913,718$ & $1,411,569$ \\
		\hline
		\textit{Haikou} & $41,542$ & $89,206$ & $138,083$ \\
		\hline
	\end{tabular}
	}
\vspace{-2ex}
\end{table}

\vspace{-2ex}
\subsection{Experimental Setup}
\fakeparagraph{Datasets} The evaluations of proposed methods are conducted on two real-world datasets. The first one is a public dataset collected in Chengdu City, China, which is published through GAIA \cite{Gaia} initiative by DiDi Chuxing \cite{Didi}. The second dataset is collected in Haikou City, China. We download the road networks of two cities from OpenStreetMap \cite{Openstreetmap}, then generate the time-dependent road networks follow settings and procedures in the previous work \cite{shortestquery}, the number of edges in the datasets varies from 80,000 to more than 900,000. We set the time domain as 86,400 seconds which is the whole day. For the edges of time-dependent road networks, the number of interpolation points of the associated time-dependent weight functions is 138,083 and 1,411,569 respectively. We use the TDSP query function in \cite{shortestquery} as the shortest travel time query. The statistics of the datasets are shown in \tabref{table:dataset}.

We choose the data from the date which has the largest number of requests during the daytime \ie{ 8:00am - 18:00pm} for evaluation. Two datasets are represented as $\chengdu$ and $\haikou$ respectively. Each request in the dataset is a tuple consists of a origin, destination and a release time. Origins and destinations of requests are pairs of latitudes and longitudes locate in the city, and we map the request's origins and destinations to the nearest vertex of the city road network. For each request, we vary the time period from release time to deadline from 10 to 30 minutes. During the time period $e_r - t_r$, the request can be served by the ridesharing service with other assigned requests. The start vertex of a worker is randomly chosen from the vertices of the road network, the capacity of each worker varies from 3 to 20. \tabref{table:parameters} summarizes the major parameters of this experiment, which are also used in existing work \cite{huangyan2014}, \cite{parameter-dasfaa2018}, \cite{parameter-aaai2019}, \cite{parameter-tkde2019}, \cite{parameter-vldb2020}, \cite{yuxiang2020tkde}. Recently, there is a popular ridesharing service called ``bus pooling'' attracts a lot of attentions \cite{BlaBlaCar}, which is a transport solution can be described as ``Uber for buses''. Therefore, we set the maximum capacity of a worker as 20.

\begin{table}[t]
\vspace{-1ex}
	\centering
	\caption{Compared Algorithms.}
    \vspace{-2ex}
	\label{table:algorithm}
    \resizebox{0.4\textwidth}{!} {%
	\begin{tabular}{|c|c|c|}
		\hline
		Algorithms & Time Complexity & Space Complexity \\
		\hline														
		\cubic & $O(n^3)$ & $O(n)$  \\
		\hline
		\quadra & $O(n^2)$ & $O(n^2)$ \\
		\hline
		\linear & $O(n)$ & $O(n)$ \\
		\hline
	\end{tabular}
	}
\vspace{-2ex}
\end{table}

\fakeparagraph{Compared Algorithms} We compare the following algorithms in this section. \tabref{table:algorithm} summaries these algorithms. 
\begin{itemize}
  \item
  \textit{Cubic Time Algorithm (\algoref{alg:n3} of this paper) \cite{yuxiang2019icde}.}  The baseline method implements the insertion operator over time-dependent road network. It enumerates all possible $(i,j)$ pairs to generate new potential routes. For each route, the shortest travel time query over a time-dependent road network is recursively invoked to calculate the arrival time and check the constraints of this route. For the space cost, the worker only needs to maintain the current feasible route.
  \item
  \textit{Quadratic Time Algorithm (\algoref{alg:n2} of this paper).} This method generates new potential routes by enumerating all possible $(i,j)$ pairs. To reduce the time complexity and avoid invoking large number of shortest travel time queries, the worker compounds the time-dependent edge weights and maintain the compound travel time functions. Except for the auxiliary arrays, the space cost is dominated by the compound travel functions between any two vertexes pair along the worker's feasible route, so the space cost is $O(n^2)$.
  \item
  \textit{Linear Time Algorithm (\algoref{alg:n} of this paper).} Instead of enumerating both $(i,j)$ values, this method enumerates $j$ and find the optimal $i$ in $O(1)$ time. The number of invoking the shortest travel time query is further decreased. The time and space cost both are reduced to $O(n)$, because this method maintains the compound travel time functions form each vertex to both its next vertex and the last vertex along the feasible route, instead  of checking all vertexes pairs.

\end{itemize}

\fakeparagraph{Metrics} The following metrics are critical for evaluating the performances of time-dependent insertion: (1) the impact of the capacity $c_w$ of the worker; (2) the impact of the time period $e_r - t_r$ for each request; (3) the impact of the number of requests; (4) the impact of the number of workers to study the scalability. We choose the metrics requiring evaluation: (1) the number of invoking shortest travel time queries $\query{}$; (2) insertion time; (3) response time; (4) memory cost. The shortest travel time query over the time-dependent road network is the bottleneck of time-dependent insertion, we avoid invoking large number of $\query{}$ to improve the efficiency. As for metric insertion time, in \textit{Cubic Time Algorithm} and \textit{Quadratic Time Algorithm}, it indicates the average time to check the feasibility and calculate the objective for each possible $(i,j)$ pair along the current route. In \textit{Linear Time Algorithm}, it represents the average time for each $j$ value to check the feasibility then find the optimal $i$ and calculate the objective along the current route. Response time and memory cost are both used as metrics in many ride-sharing applications\cite{yuzheng2013}\cite{huangyan2014}\cite{yuxiang2018vldb}. Response time is the average time to assign each request. In \textit{Cubic Time Algorithm}, the memory cost is the memory used for storing the route of the worker. The memory cost of both \textit{Quadratic Time Algorithm} and \textit{Linear Time Algorithm} is dominated by the maintained compound travel time functions of the worker, the worker also stores the route and auxiliary arrays.

\fakeparagraph{Implementation} The experiments are conducted on a server with 40 Intel(R)
Xeon(R) E5 2.30GHz processors with hyper-threading enabled and 128GB memory. Three proposed insertion operators are implemented in GNU C++. Following the setup in \cite{DBLP:conf/kdd/DuTZTZ18,DBLP:conf/waim/GaoTSSCX16,DBLP:conf/sigmod/SheT0S17}, each experiment is repeated 10 times and the average results are reported.

\subsection{Experimental Results}

\begin{table}[t]
\centering
\caption{Parameter settings.}
\vspace{-2ex}
\label{table:parameters}
\resizebox{0.4\textwidth}{!} {%
\begin{tabular}{|c|c|}
	\hline
	Parameters & Settings\\
	\hline
	Capacity $c_w$ & 3, \textbf{5}, 10, 15, 20\\
	\hline
	Time period : $e_r - t_r$ (minute) & 10, \textbf{15}, 20, 25, 30\\
	\hline
	Number of requests  & 20k, \textbf{40k}, 60k, 80k, 100k\\
	\hline
	Scalability: number of workers & 100, 200, 300, 400, 500 \\
	\hline
\end{tabular}
}
\vspace{-3ex}
\end{table}

\begin{figure}[t]
	\centering
	\begin{subfigure}[b]{0.22\textwidth}
		\includegraphics[width=\textwidth]{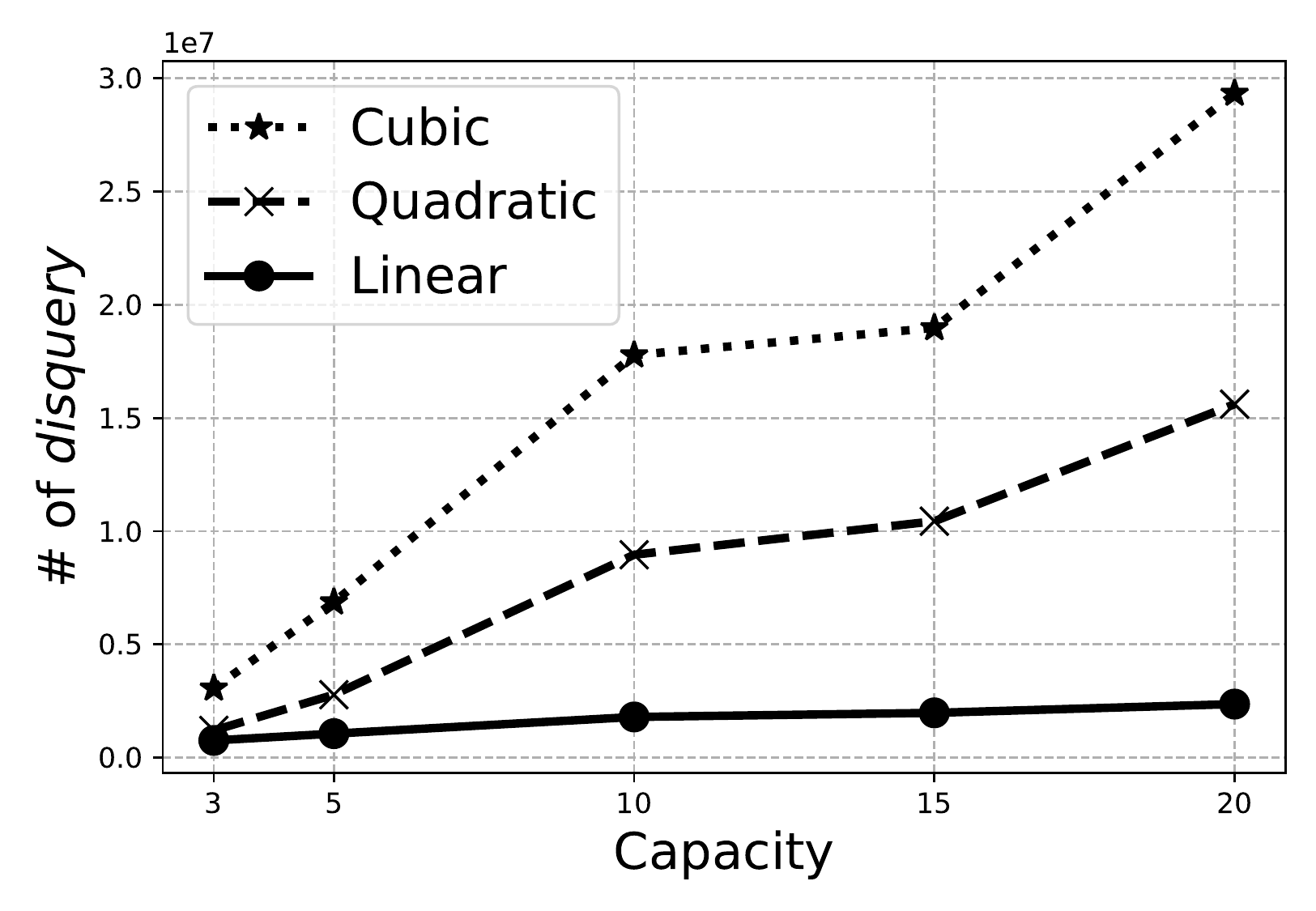}
		\vspace{-4ex}
		\caption{\footnotesize{Number of \query{} on \chengdu}}
	\end{subfigure}
	~~
    \begin{subfigure}[b]{0.22\textwidth}
		\includegraphics[width=\textwidth]{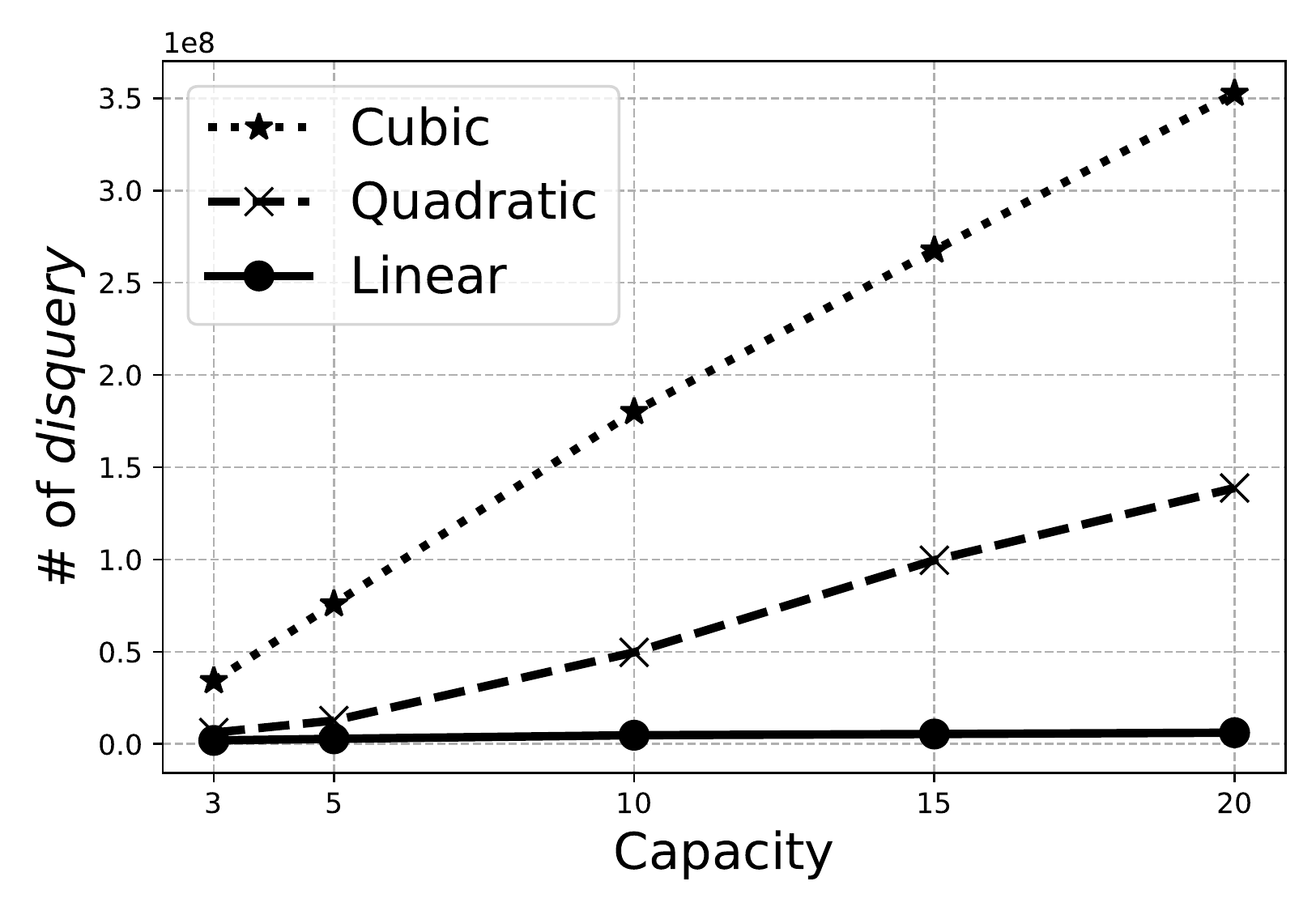}
		\vspace{-4ex}
		\caption{\footnotesize{Number of \query{} on \haikou}}
	\end{subfigure}
	
	\begin{subfigure}[b]{0.22\textwidth}
		\includegraphics[width=\textwidth]{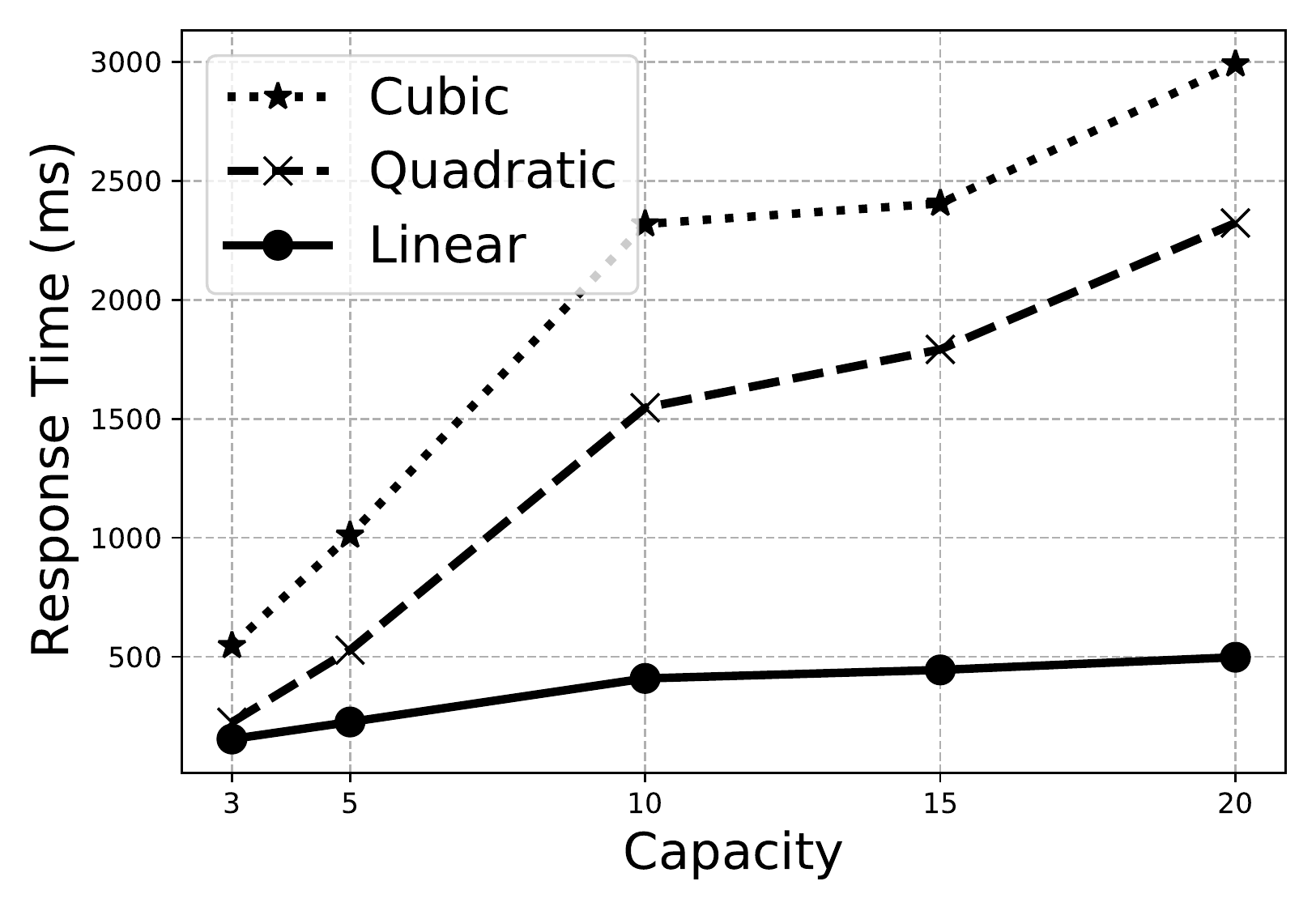}
		\vspace{-4ex}
		\caption{\footnotesize{Response time on \chengdu}}
	\end{subfigure}
	~~
    \begin{subfigure}[b]{0.22\textwidth}
		\includegraphics[width=\textwidth]{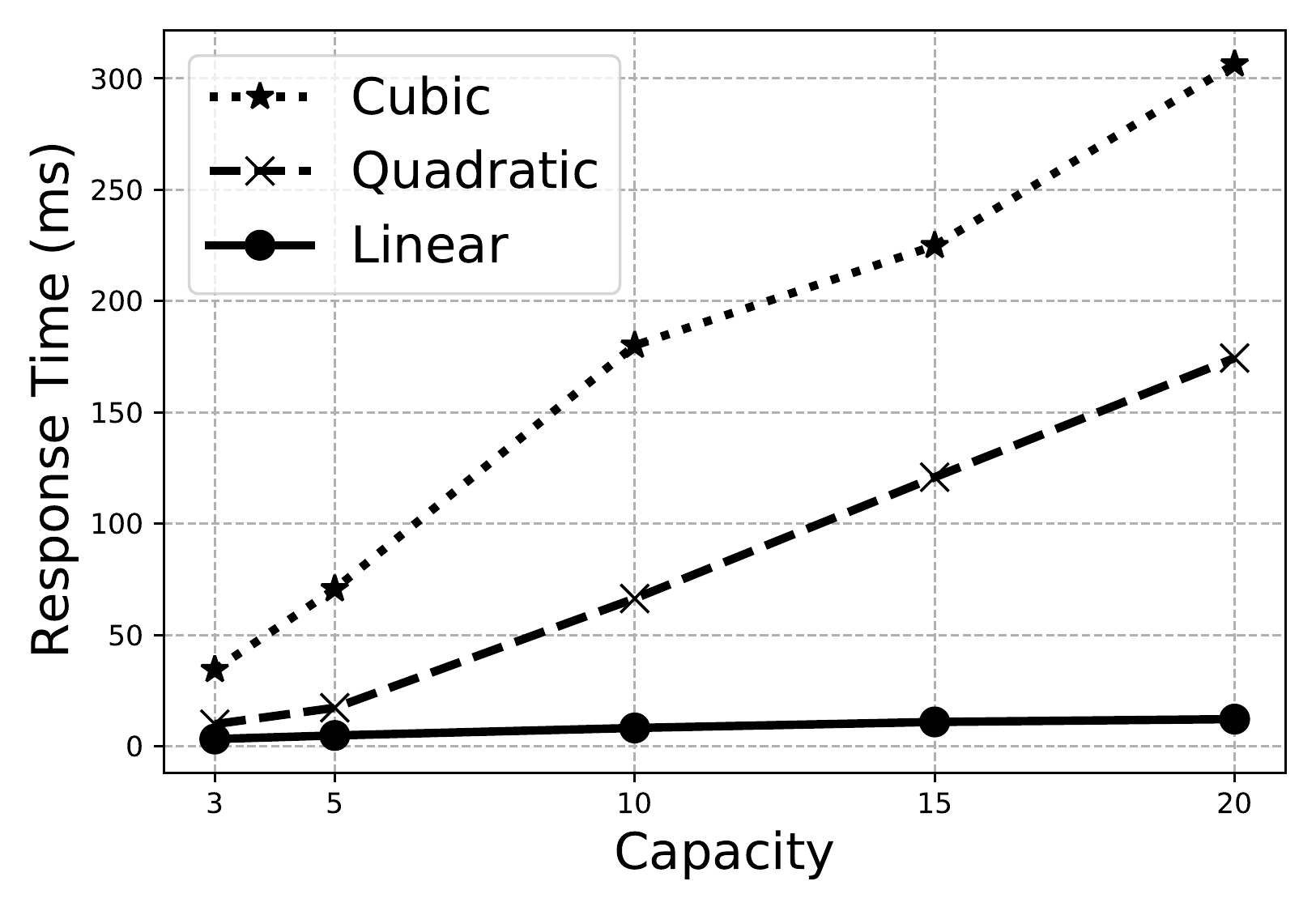}
		\vspace{-4ex}
		\caption{\footnotesize{Response time on \haikou}}
	\end{subfigure}
	
	\begin{subfigure}[b]{0.22\textwidth}
		\includegraphics[width=\textwidth]{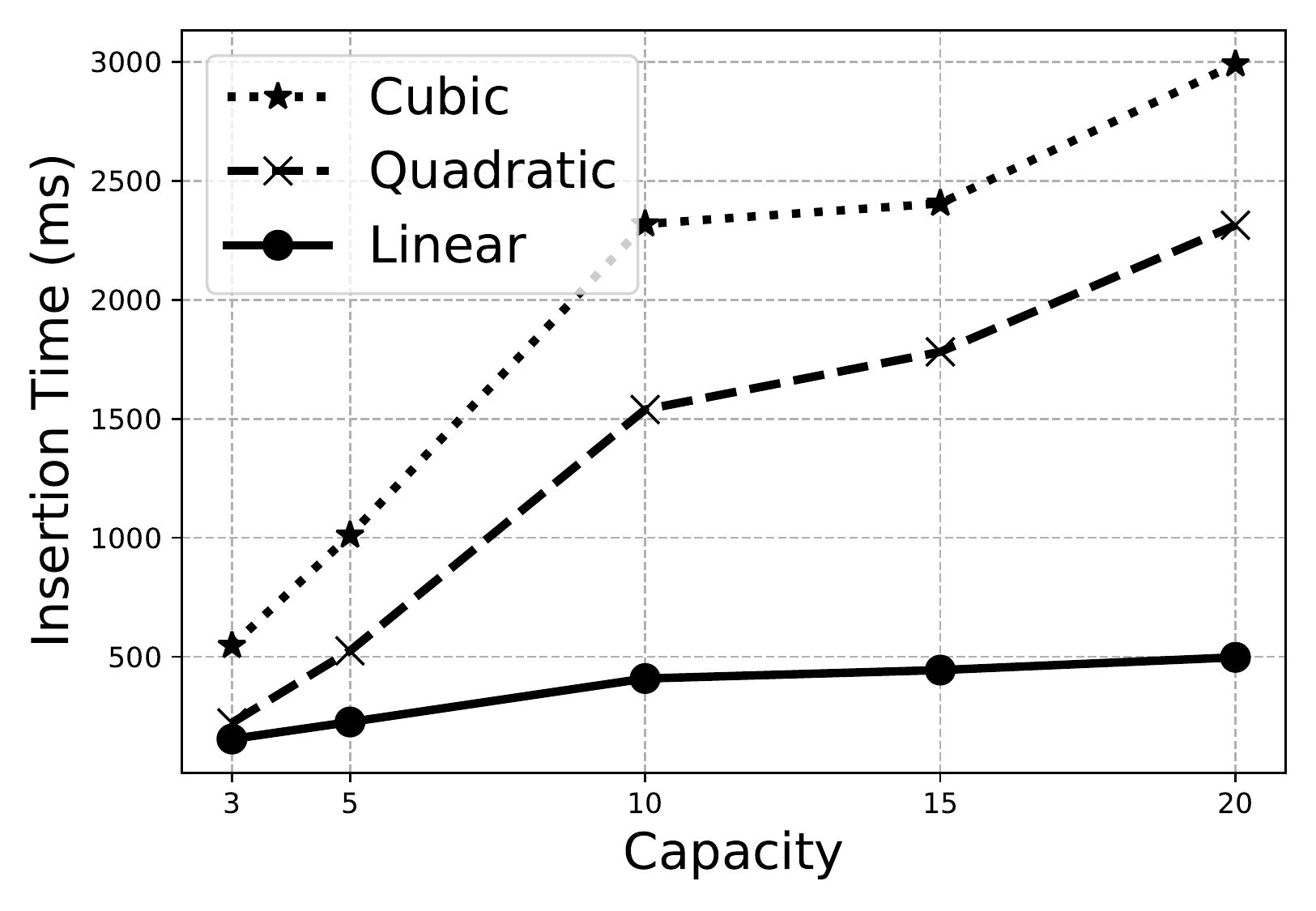}
		\vspace{-4ex}
		\caption{\footnotesize{Insertion time on \chengdu}}
	\end{subfigure}
	~~
    \begin{subfigure}[b]{0.22\textwidth}
		\includegraphics[width=\textwidth]{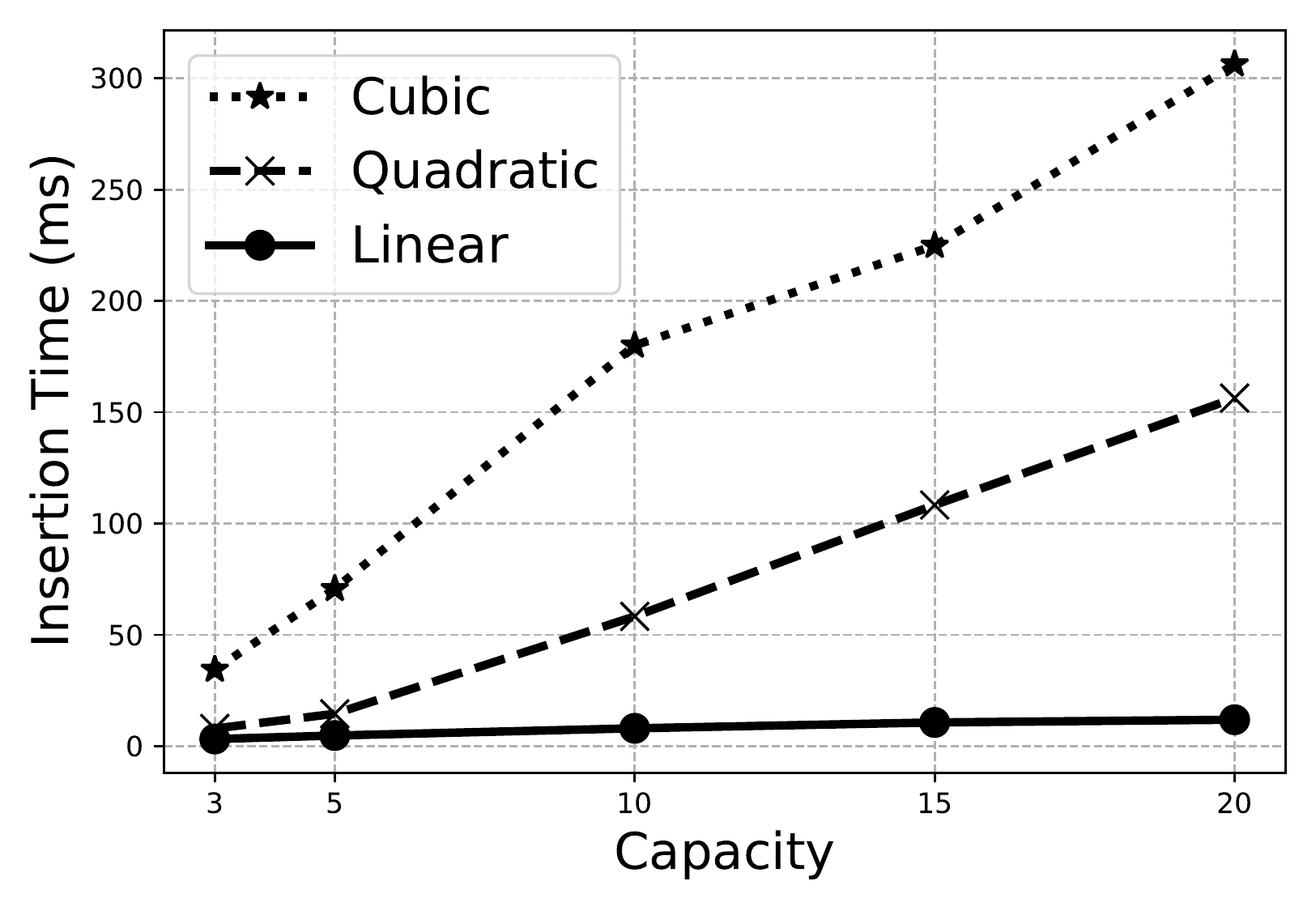}
		\vspace{-4ex}
		\caption{\footnotesize{Insertion time on \haikou}}
	\end{subfigure}	
	
	\begin{subfigure}[b]{0.22\textwidth}
		\includegraphics[width=\textwidth]{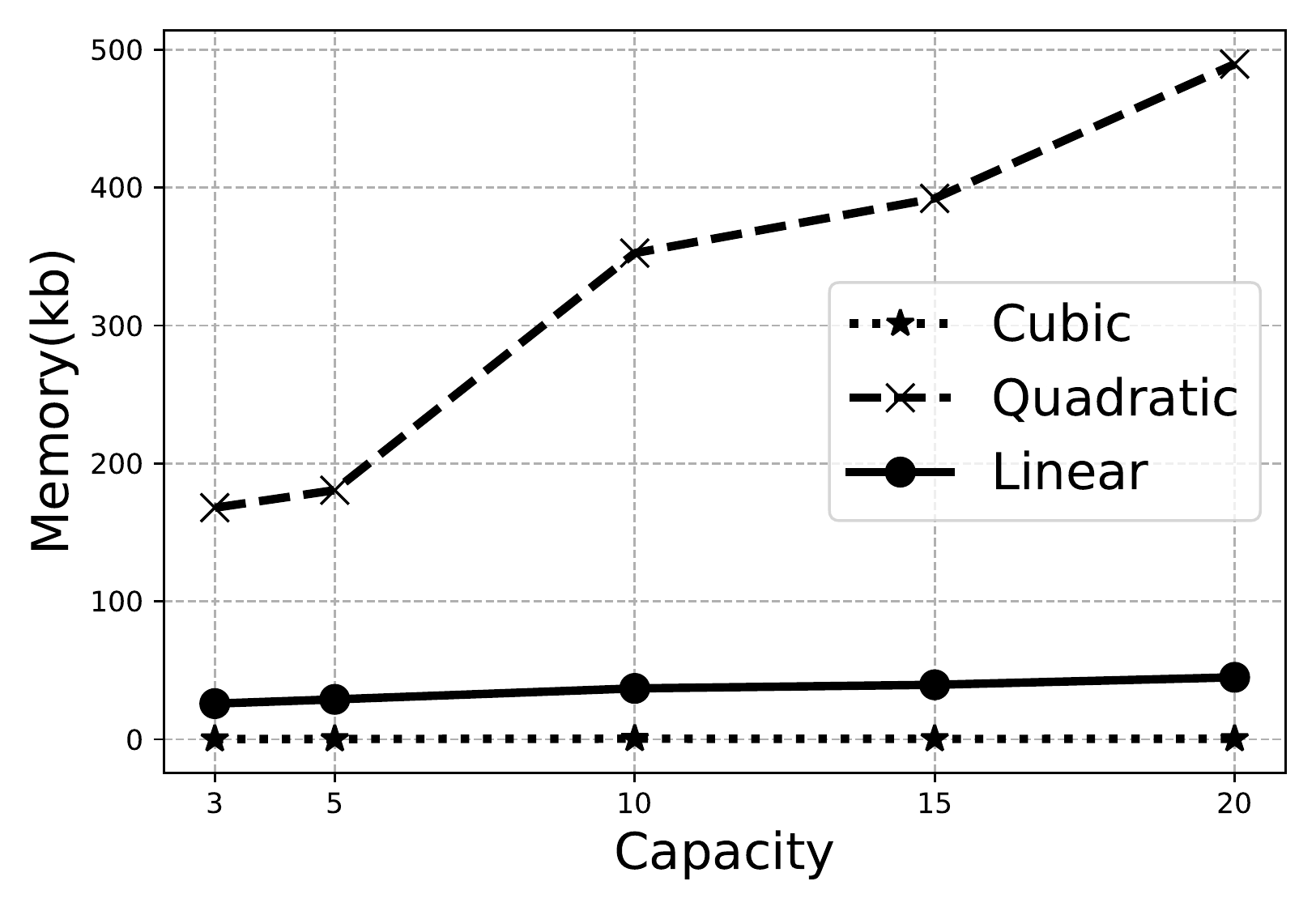}
		\vspace{-4ex}
		\caption{\footnotesize{Memory cost on \chengdu}}
	\end{subfigure}
	~~
    \begin{subfigure}[b]{0.22\textwidth}
		\includegraphics[width=\textwidth]{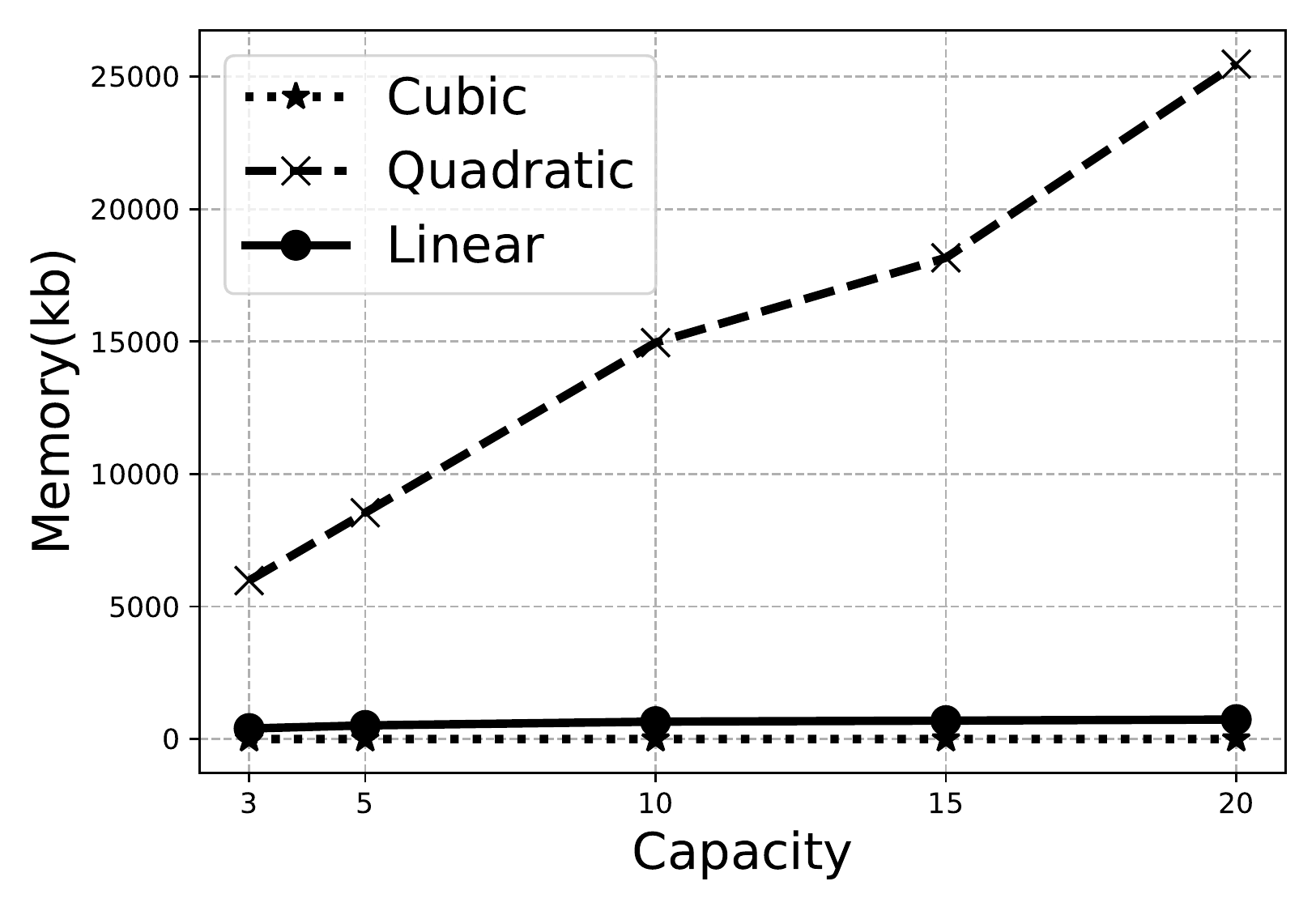}
		\vspace{-4ex}
		\caption{\footnotesize{Memory cost on \haikou}}
	\end{subfigure}	
	
	\vspace{-2ex}
	\caption{Results of varying capacity of the worker}
	\label{fig:capacity}
	\vspace{-3ex}
\end{figure}

\fakeparagraph{Impact of Capacity of Workers} \figref{fig:capacity} shows the results of varying the capacity of workers on  $\chengdu$ and $\haikou$. \textit{Linear Time Algorithm} has the minimum number of invoking the shortest travel time queries, comparing with the baseline method. \textit{Linear Time Algorithm} reduces 75.24\% - 91.94\% of the number of invoking $\query{}$ on these two datasets. The significant reduction in the number of invoking $\query{}$ proves the effectiveness of the compound travel time function. $\linear$ has the fastest insertion time correspondingly, which is up to 6.01 times faster on $\chengdu$, and 25.79 times faster on $\haikou$ than $\cubic$, respectively. We can observe that $\quadra$ consumes extremely larger memory cost than other two methods, this is because the memory cost of time-dependent insertion is dominated by the compound travel time functions. $\linear$ reduces 90.8\% and 97.1\% memory cost on $\chengdu$ and $\haikou$, respectively, compared to $\quadra$. With the increase in the capacity of the worker, the time cost and memory cost grow in the same trend, while $\linear$ always consumes the least time and consumes much less memory (no more than 45 KB on $\chengdu$) than $\quadra$. Note that the insertion time and response time of $\cubic$, the memory cost of $\quadra$ changes dramatically in both $\chengdu$ and $\haikou$ due to $O(n^3)$ time complexity and $O(n^2)$ space complexity, respectively.

\begin{figure}[t]
	\centering
	\begin{subfigure}[b]{0.22\textwidth}
		\includegraphics[width=\textwidth]{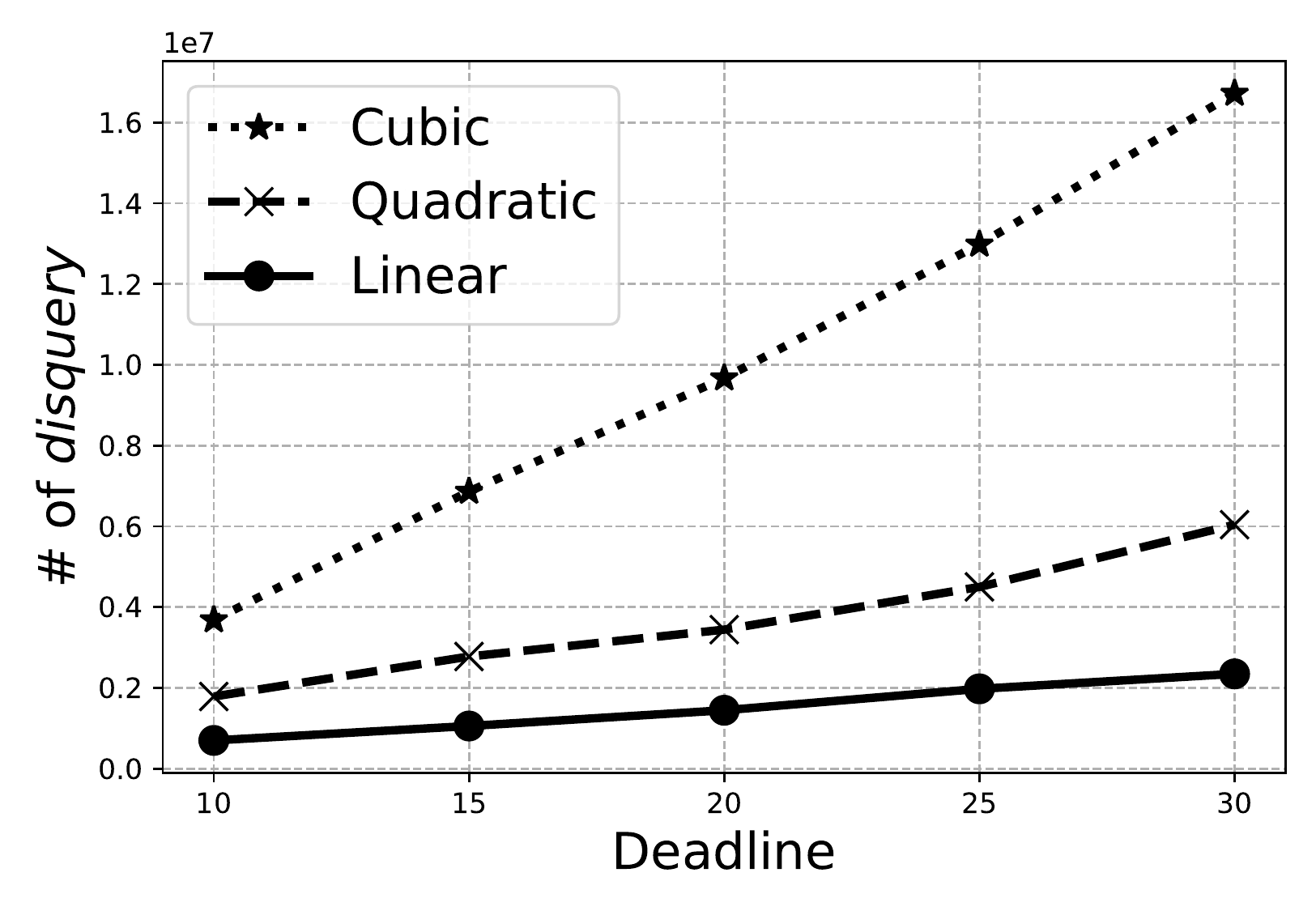}
		\vspace{-4ex}
		\caption{\footnotesize{Number of \query{} on \chengdu}}
	\end{subfigure}
	~~
    \begin{subfigure}[b]{0.22\textwidth}
		\includegraphics[width=\textwidth]{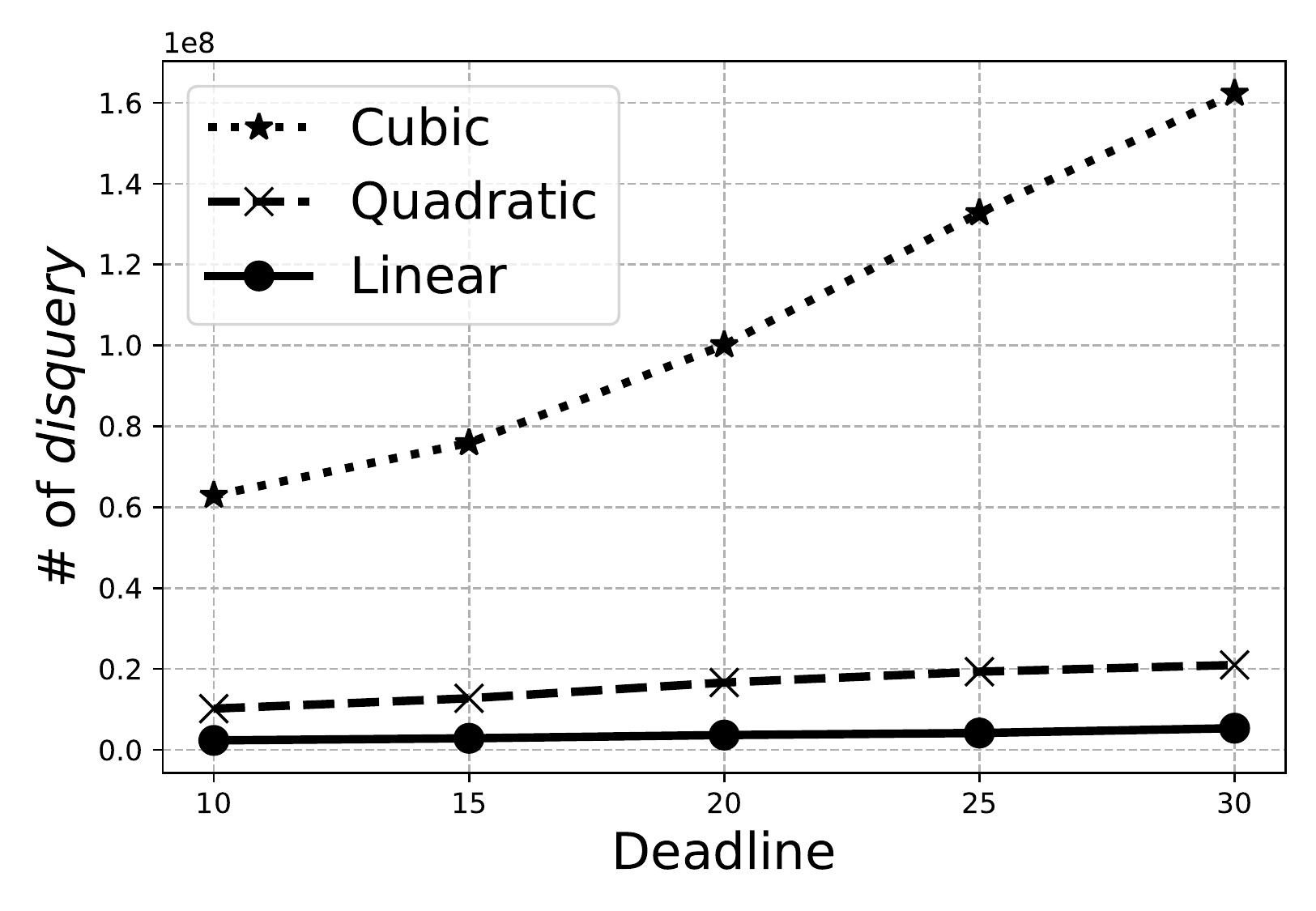}
		\vspace{-4ex}
		\caption{\footnotesize{Number of \query{} on \haikou}}
	\end{subfigure}
	
	\begin{subfigure}[b]{0.22\textwidth}
		\includegraphics[width=\textwidth]{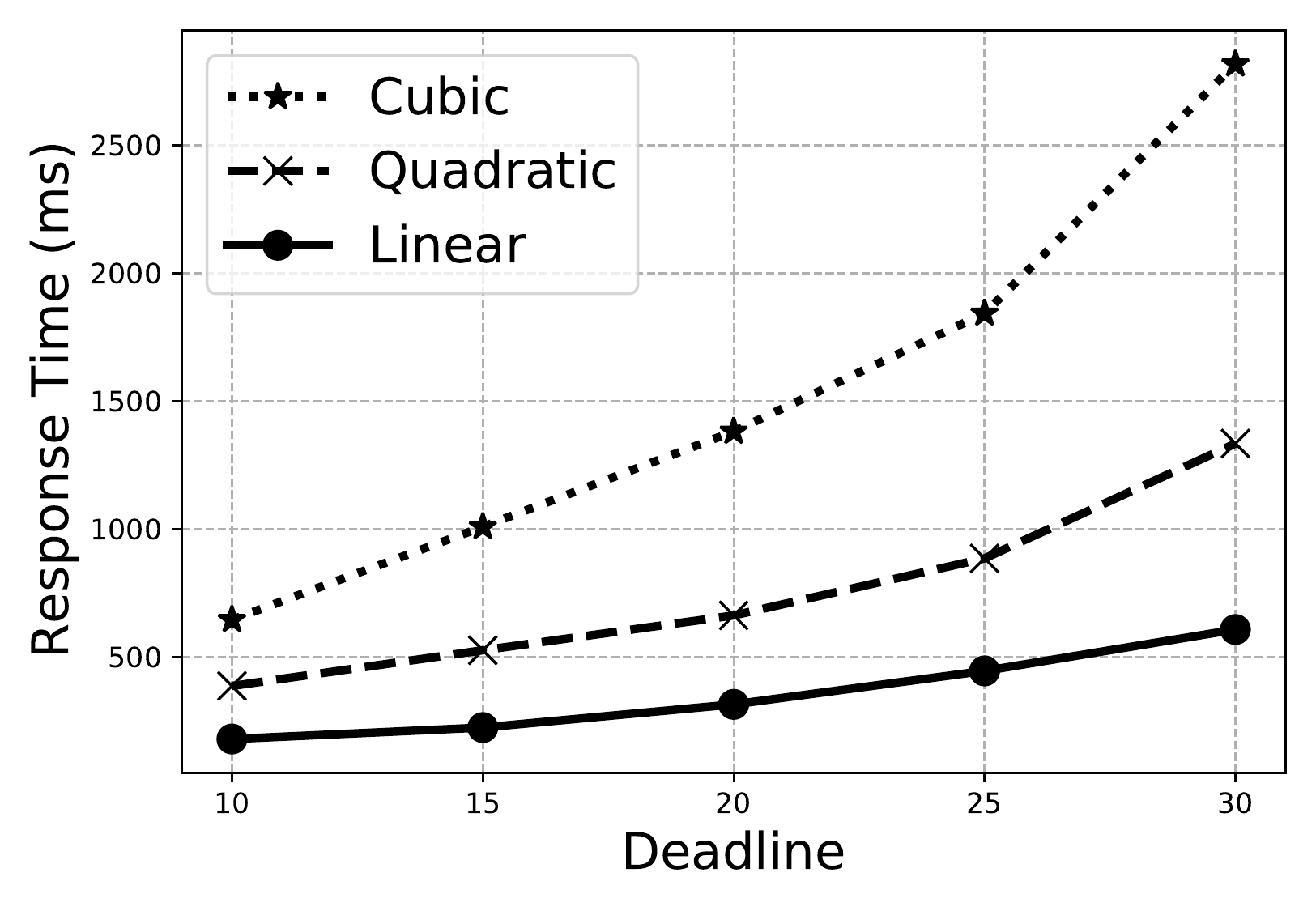}
		\vspace{-4ex}
		\caption{\footnotesize{Response time on \chengdu}}
	\end{subfigure}
	~~
    \begin{subfigure}[b]{0.22\textwidth}
		\includegraphics[width=\textwidth]{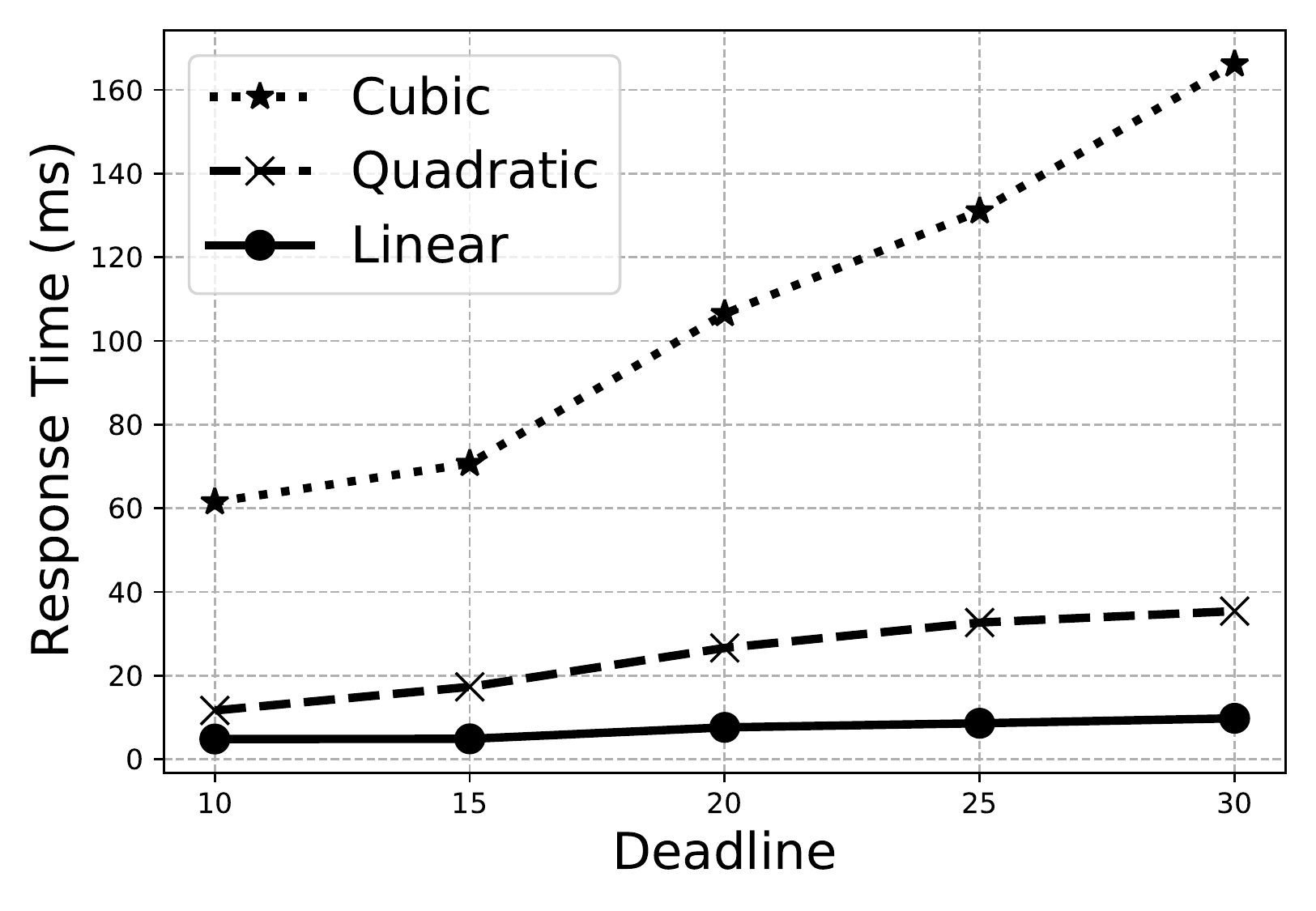}
		\vspace{-4ex}
		\caption{\footnotesize{Response time on \haikou}}
	\end{subfigure}
	
	\begin{subfigure}[b]{0.22\textwidth}
		\includegraphics[width=\textwidth]{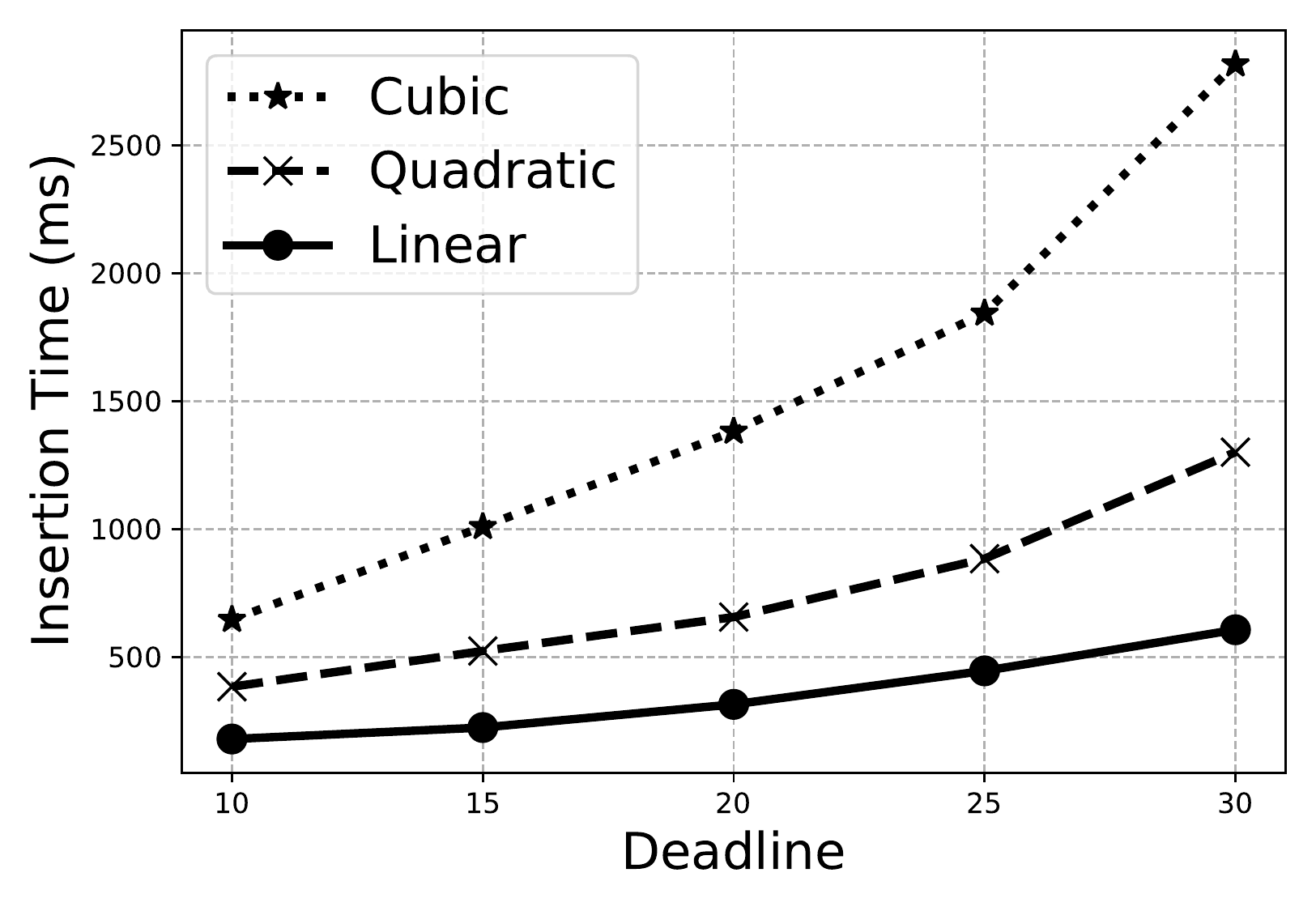}
		\vspace{-4ex}
		\caption{\footnotesize{Insertion time on \chengdu}}
	\end{subfigure}
	~~
    \begin{subfigure}[b]{0.22\textwidth}
		\includegraphics[width=\textwidth]{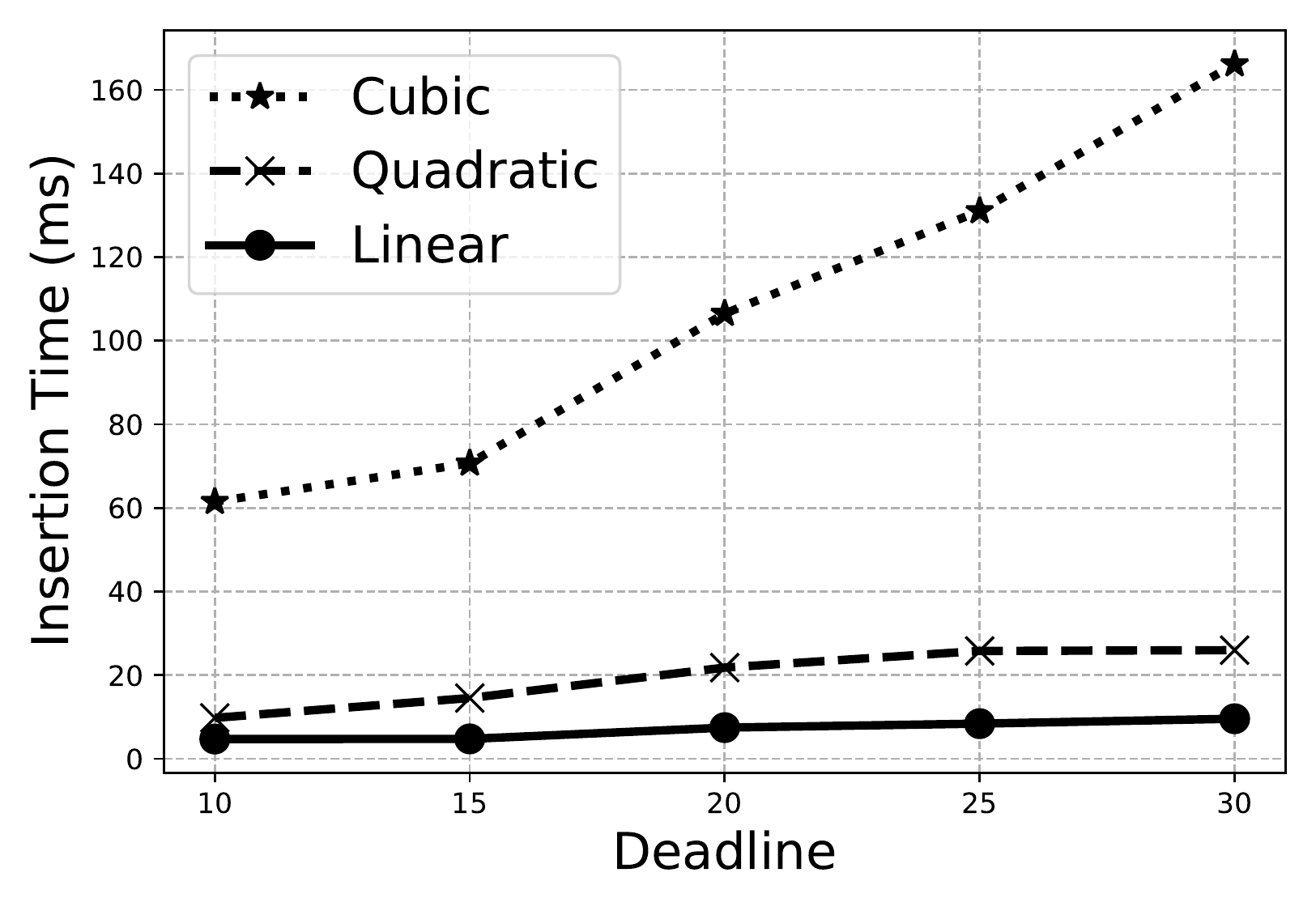}
		\vspace{-4ex}
		\caption{\footnotesize{Insertion time on \haikou}}
	\end{subfigure}	
	
	\begin{subfigure}[b]{0.22\textwidth}
		\includegraphics[width=\textwidth]{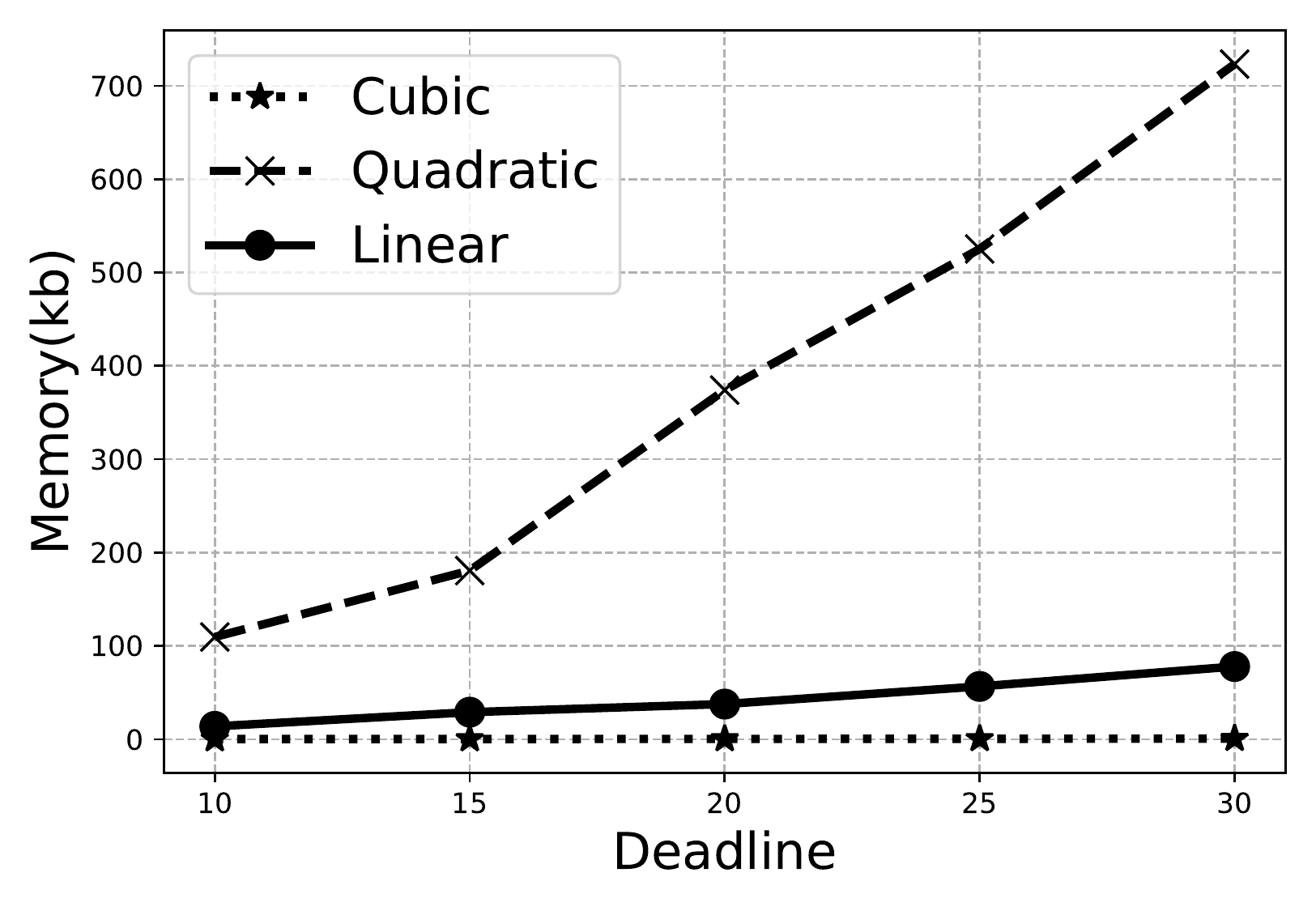}
		\vspace{-4ex}
		\caption{\footnotesize{Memory cost on \chengdu}}
	\end{subfigure}
	~~
    \begin{subfigure}[b]{0.22\textwidth}
		\includegraphics[width=\textwidth]{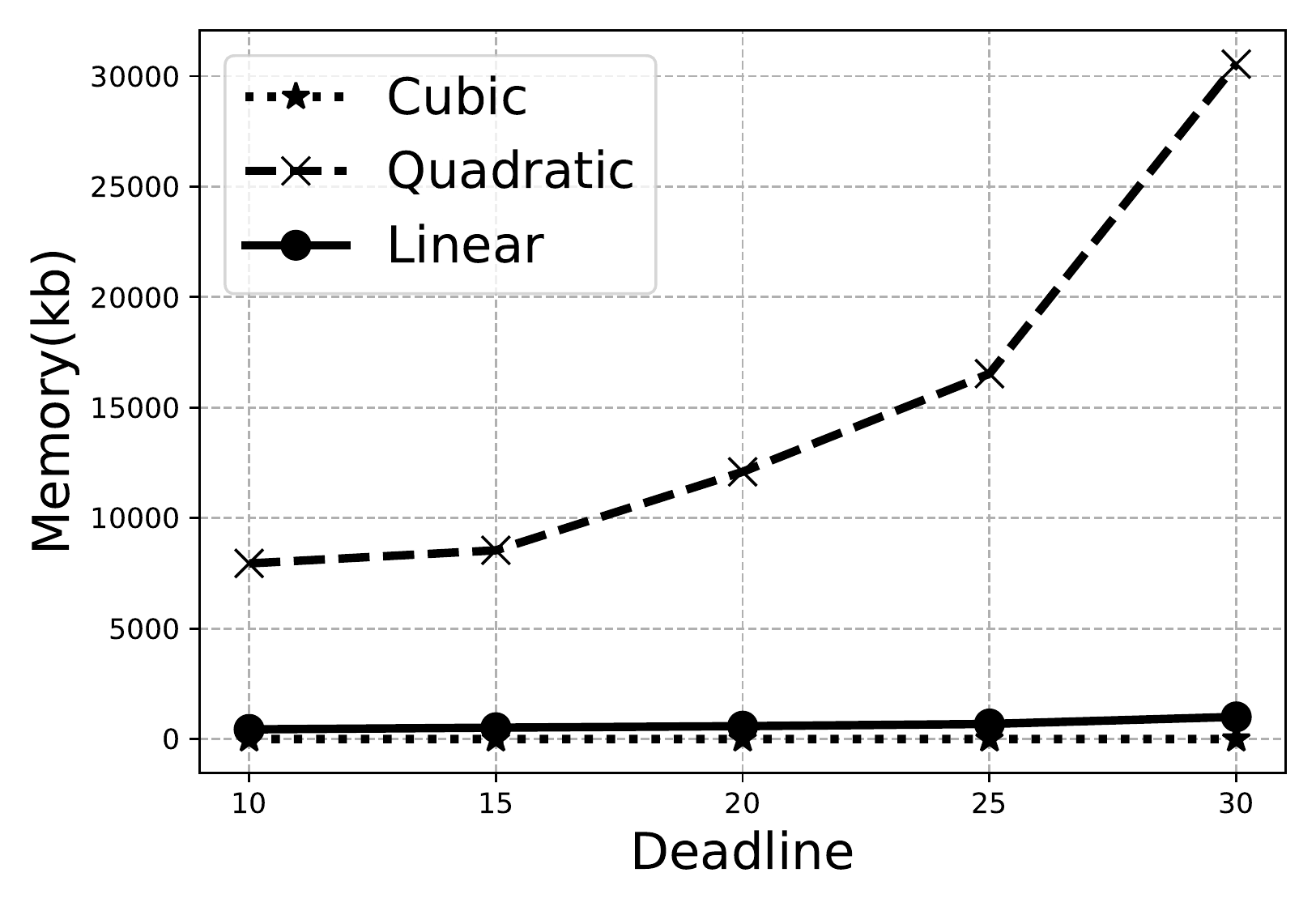}
		\vspace{-4ex}
		\caption{\footnotesize{Memory cost on \haikou}}
	\end{subfigure}	
	
	\vspace{-2ex}
	\caption{Results of varying $e_r - t_r$}
	\label{fig:deadline}
	\vspace{-3ex}
\end{figure}

\fakeparagraph{Impact of Deadline $e_r$} \figref{fig:deadline} shows the results of varying the deadline of requests on $\chengdu$ and $\haikou$. The values of $e_r - t_r$ are the scale values of the x-axis. Again, $\linear$ outperforms other two algorithms in the term of the time cost, which is up to 4.62 times faster on $\chengdu$ and 16.96 times faster on $\haikou$ in response time. As for the space cost, it still has a great advantage over $\quadra$. With the increase of $e_r - t_r$, the requests have larger deadlines, more requests can be inserted into the original feasible route, time cost of $\cubic$ and space cost of $\quadra$ increase significantly. Meanwhile, the number of invoking $\query{}$ functions, insertion time and response time of $\linear$ increase much slower than other two methods. For the memory cost, $\linear$ remains relatively stable with the increase of requests' deadlines. Comparing with the baseline method whose space cost is $O(n)$, $\linear$ consumes only slightly higher memory (no more than 77 KB, which is acceptable).

\begin{figure}[t]
	\centering
	\begin{subfigure}[b]{0.22\textwidth}
		\includegraphics[width=\textwidth]{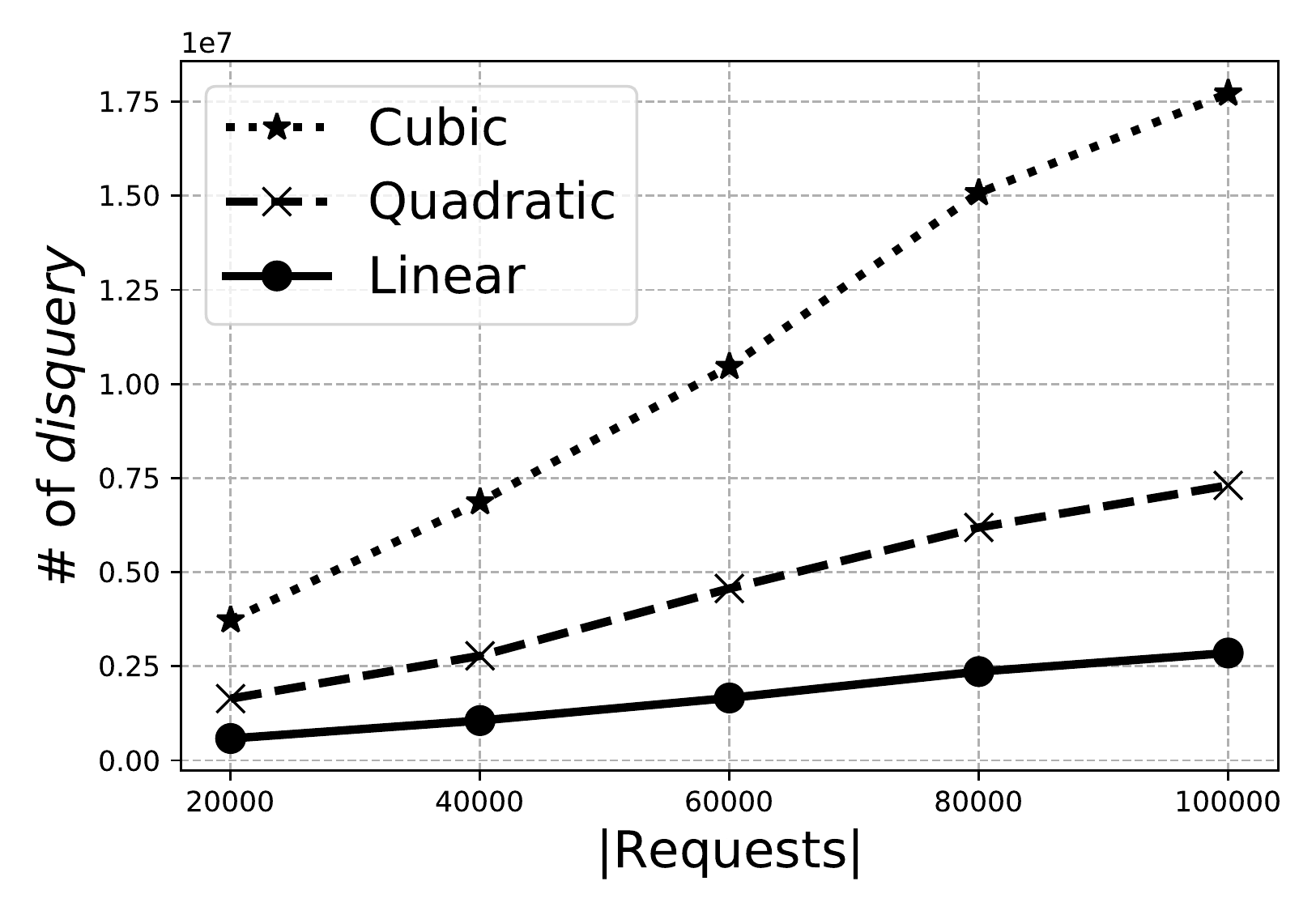}
		\vspace{-4ex}
		\caption{\footnotesize{Number of \query{} on \chengdu}}
	\end{subfigure}
	~~
    \begin{subfigure}[b]{0.22\textwidth}
		\includegraphics[width=\textwidth]{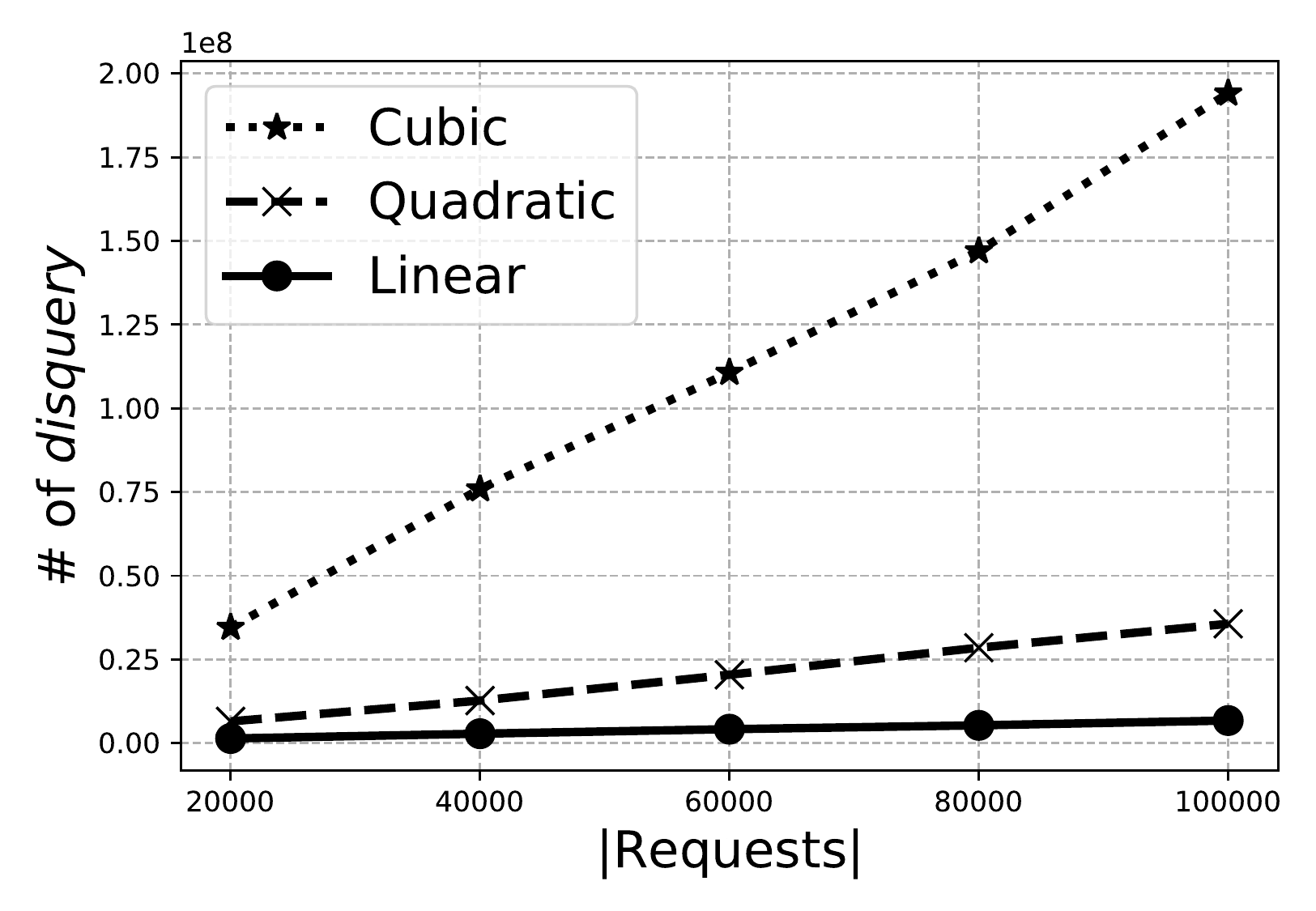}
		\vspace{-4ex}
		\caption{\footnotesize{Number of \query{} on \haikou}}
	\end{subfigure}
	
	\begin{subfigure}[b]{0.22\textwidth}
		\includegraphics[width=\textwidth]{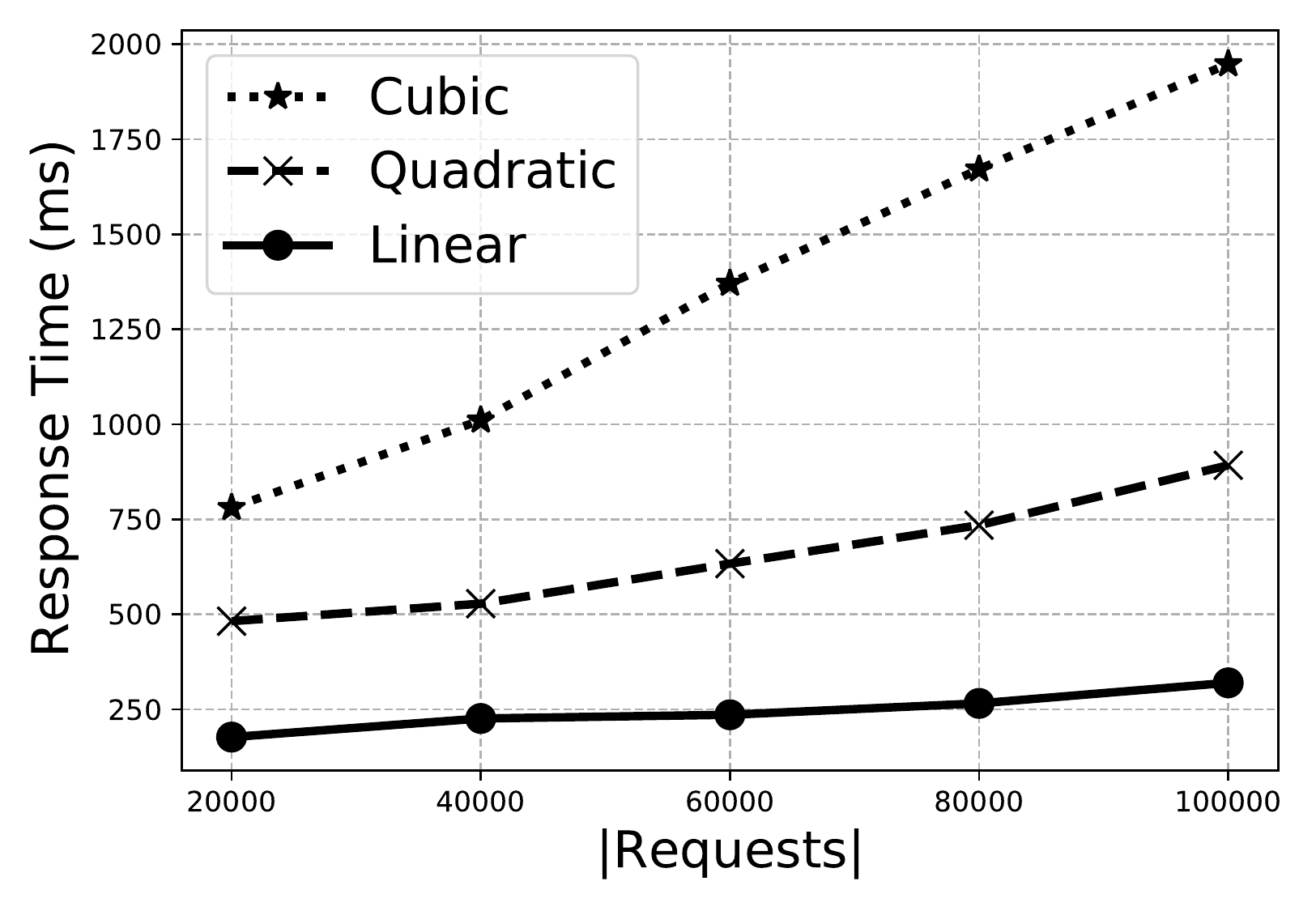}
		\vspace{-4ex}
		\caption{\footnotesize{Response time on \chengdu}}
	\end{subfigure}
	~~
    \begin{subfigure}[b]{0.22\textwidth}
		\includegraphics[width=\textwidth]{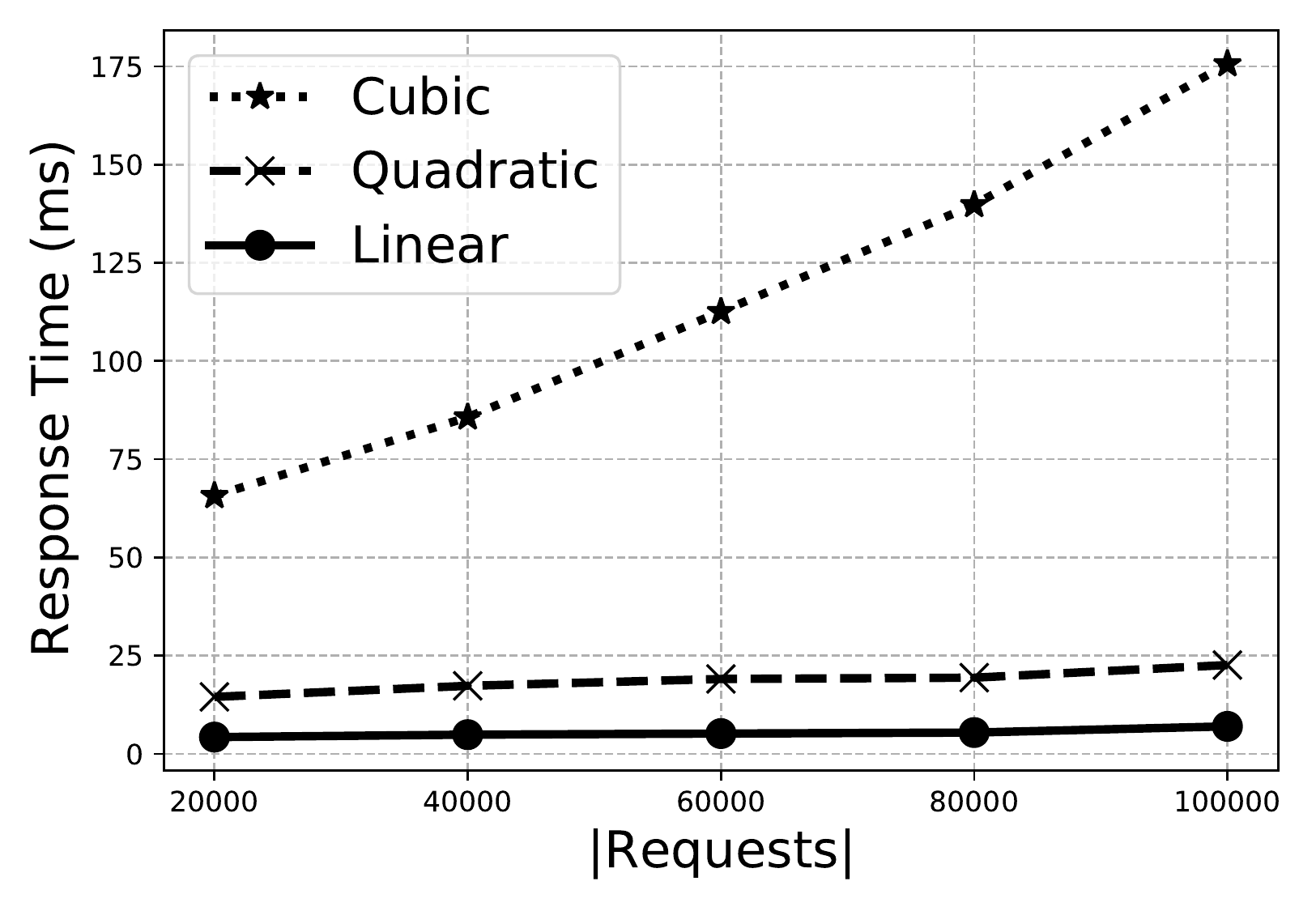}
		\vspace{-4ex}
		\caption{\footnotesize{Response time on \haikou}}
	\end{subfigure}
	
	\begin{subfigure}[b]{0.22\textwidth}
		\includegraphics[width=\textwidth]{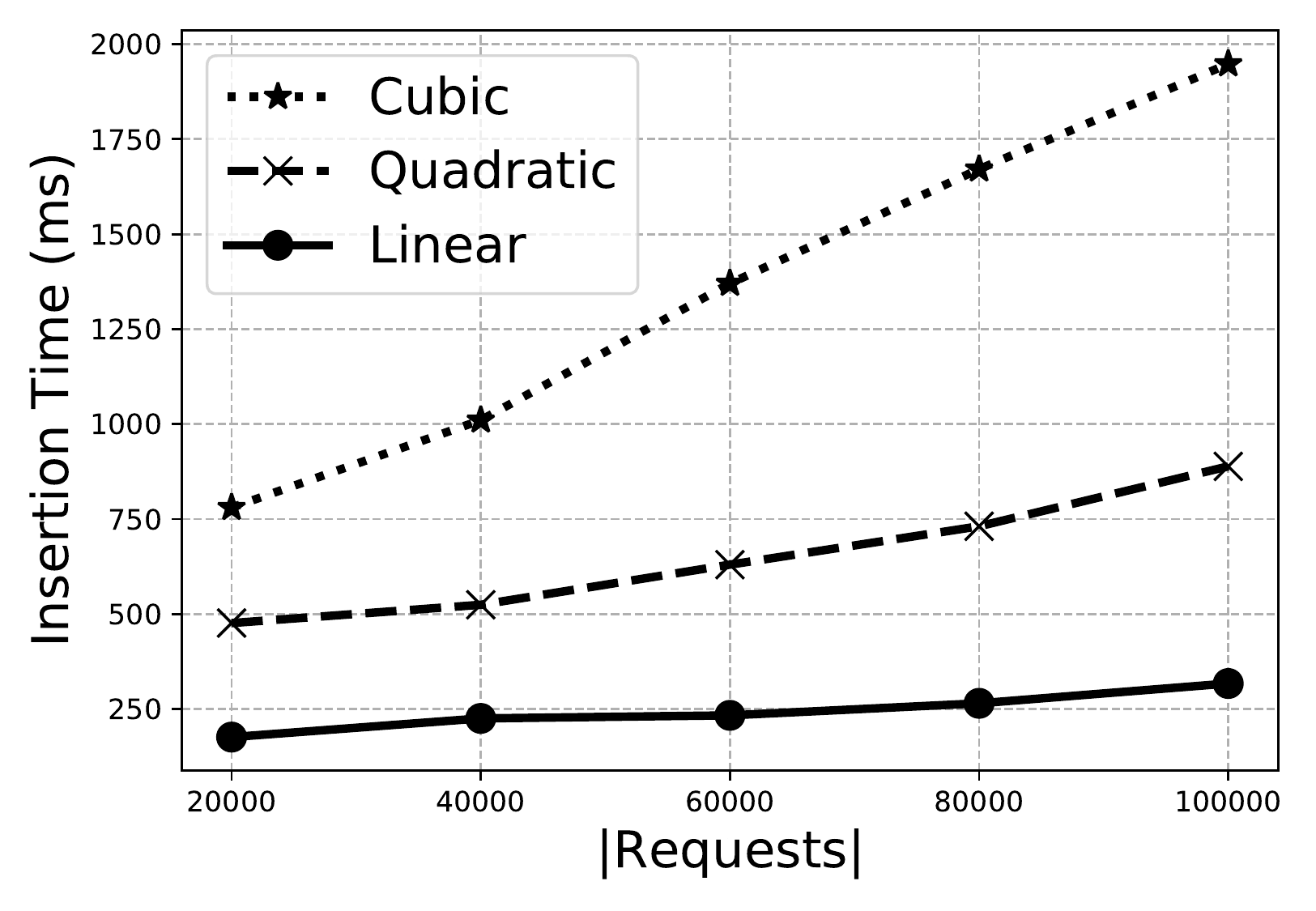}
		\vspace{-4ex}
		\caption{\footnotesize{Insertion time on \chengdu}}
	\end{subfigure}
	~~
    \begin{subfigure}[b]{0.22\textwidth}
		\includegraphics[width=\textwidth]{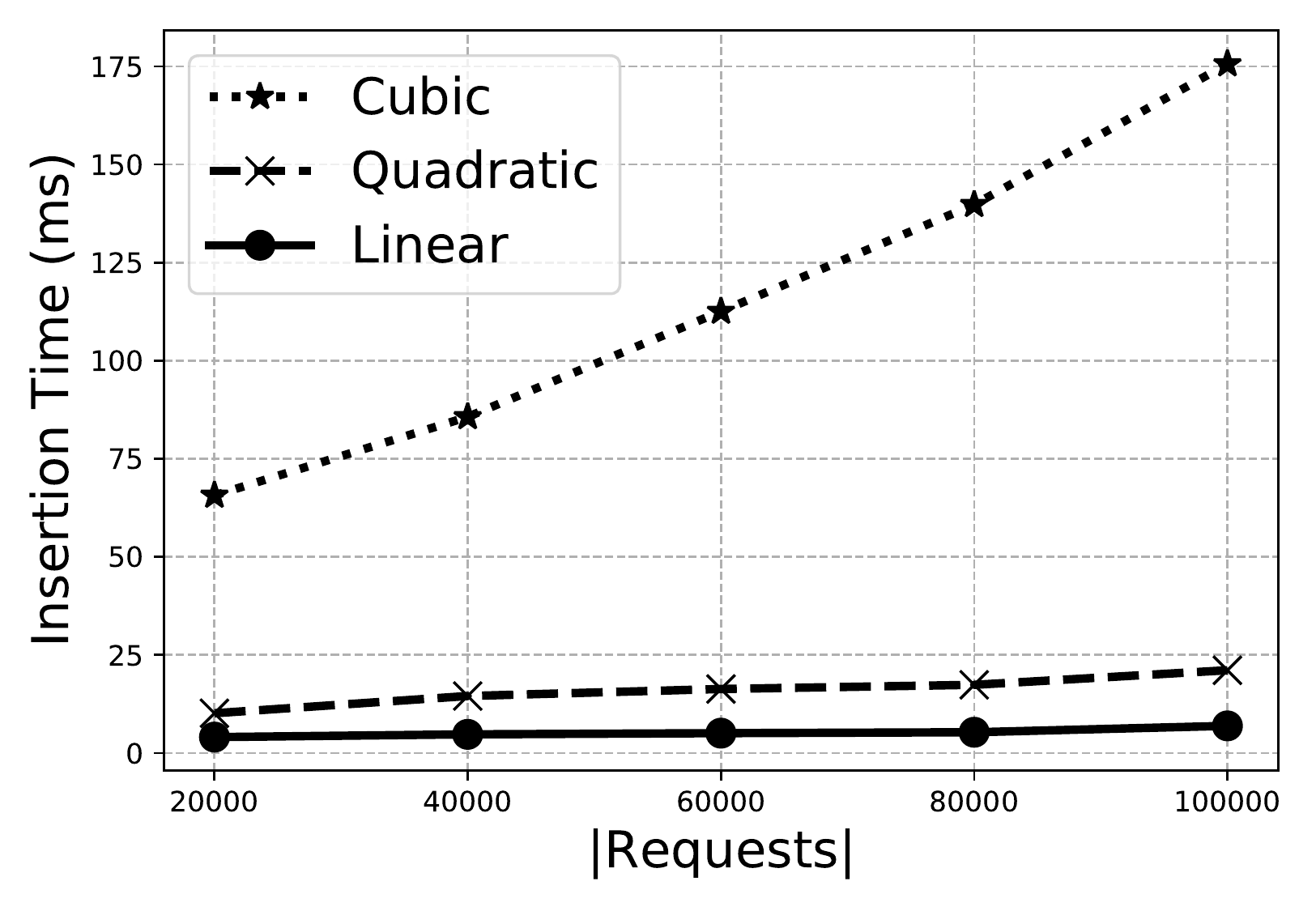}
		\vspace{-4ex}
		\caption{\footnotesize{Insertion time on \haikou}}
	\end{subfigure}	
	
	\begin{subfigure}[b]{0.22\textwidth}
		\includegraphics[width=\textwidth]{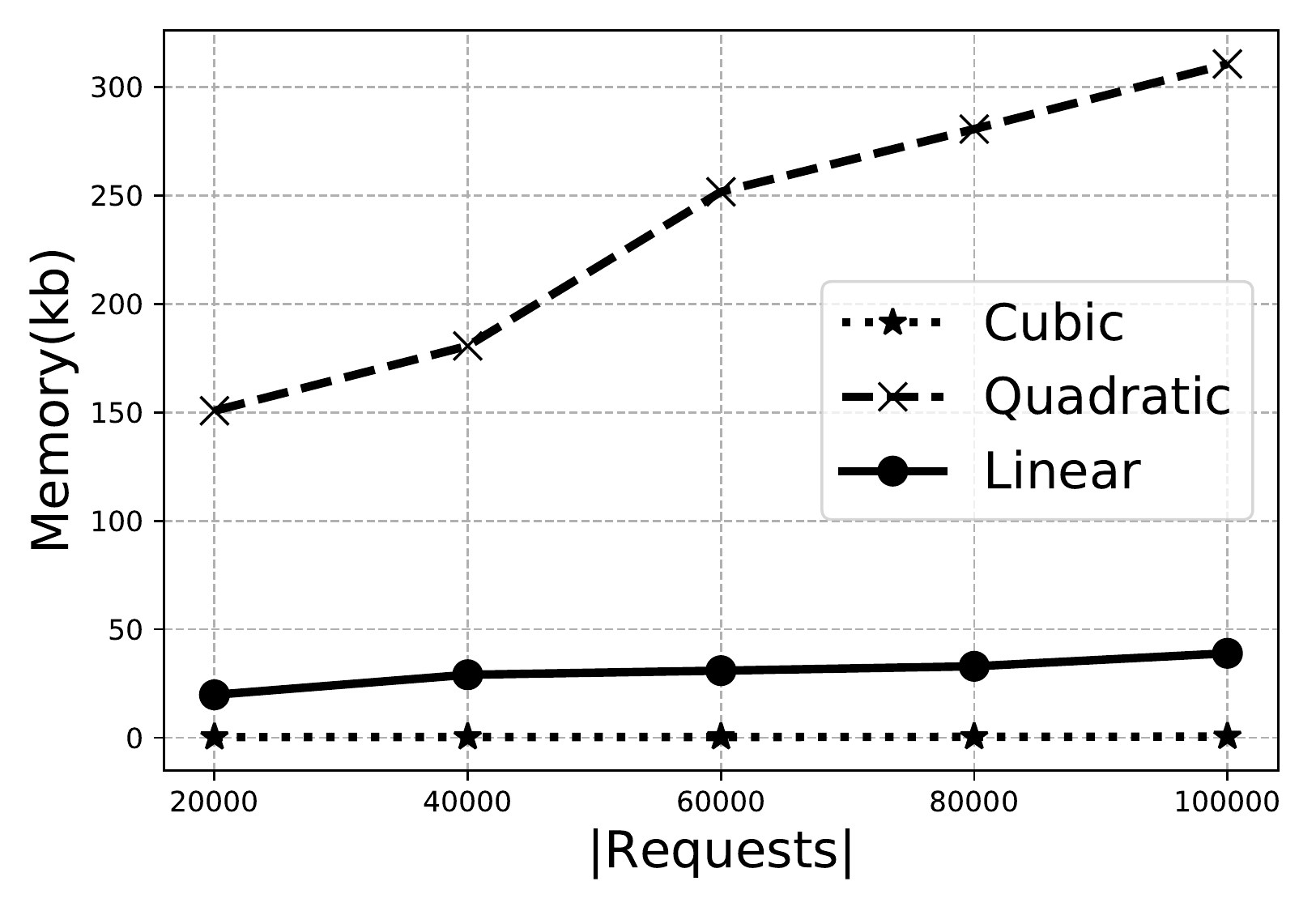}
		\vspace{-4ex}
		\caption{\footnotesize{Memory cost on \chengdu}}
	\end{subfigure}
	~~
    \begin{subfigure}[b]{0.22\textwidth}
		\includegraphics[width=\textwidth]{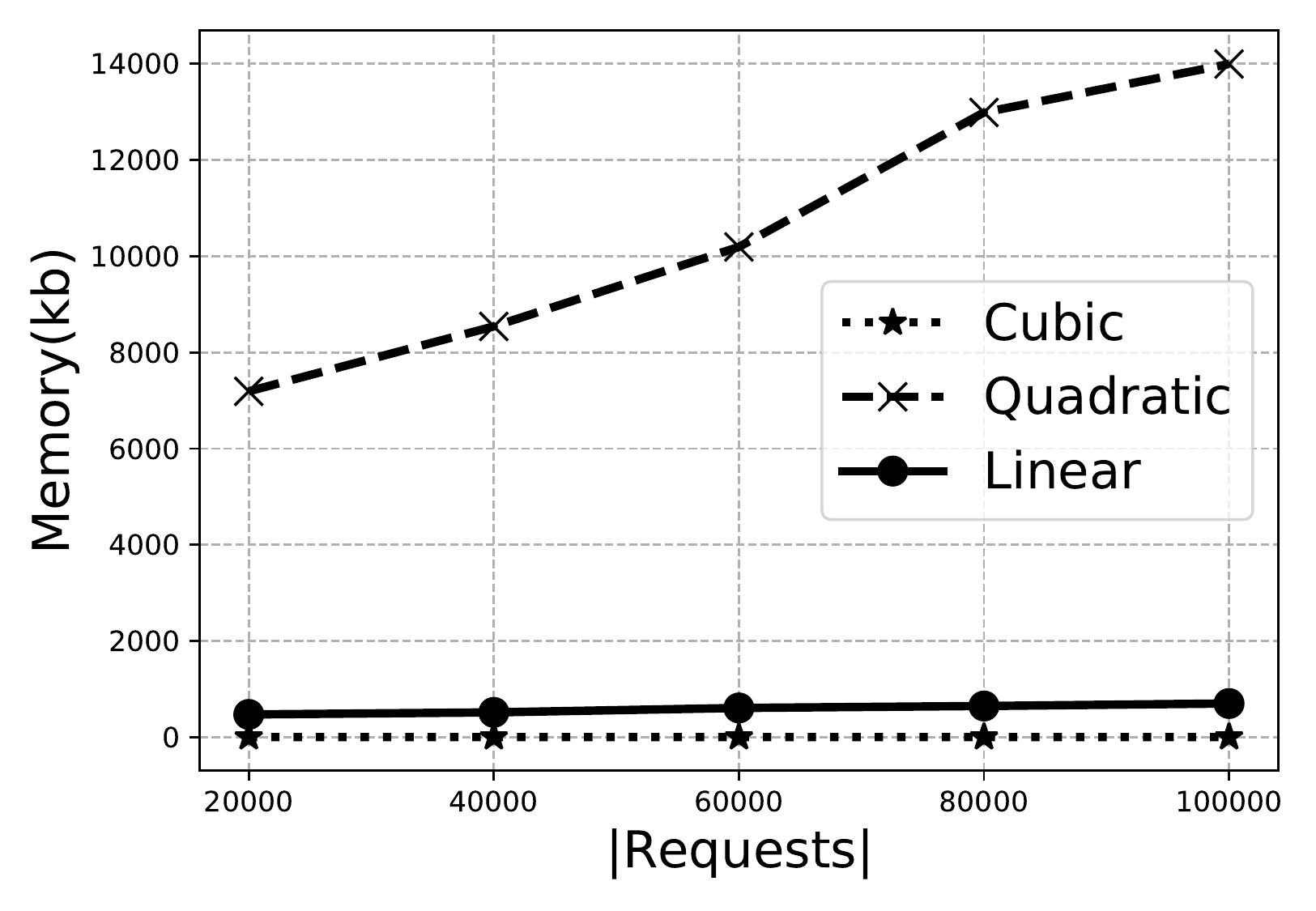}
		\vspace{-4ex}
		\caption{\footnotesize{Memory cost on \haikou}}
	\end{subfigure}	
	
	\vspace{-2ex}
	\caption{Results of varying number of requests}
	\label{fig:requests}
	\vspace{-3ex}
\end{figure}

\fakeparagraph{Impact of Number of Requests} \figref{fig:requests} shows the results of varying the number of requests on $\chengdu$ and $\haikou$. \textit{Linear Time Algorithm} is still the fastest among all three algorithms, \ie{ the number of $\query{}$ is 6.21 and 25.71 times smaller and response time is 6.09 and 12.22 times faster than baseline on two dataset.} For both $\chengdu$ and $\haikou$, with the increasing of the number of requests, the worker needs to obtain a route with longer travel time to serve the requests, so the time cost of all three algorithms increase. As for the memory cost, $\quadra$ still performs the worst due to the high space complexity, although it runs at least 4.59 times faster than $\cubic$. The memory cost of $\linear$ increases slightly with the increasing number of requests in both two datasets.

\begin{figure}[t]
	\centering
	\begin{subfigure}[b]{0.22\textwidth}
		\includegraphics[width=\textwidth]{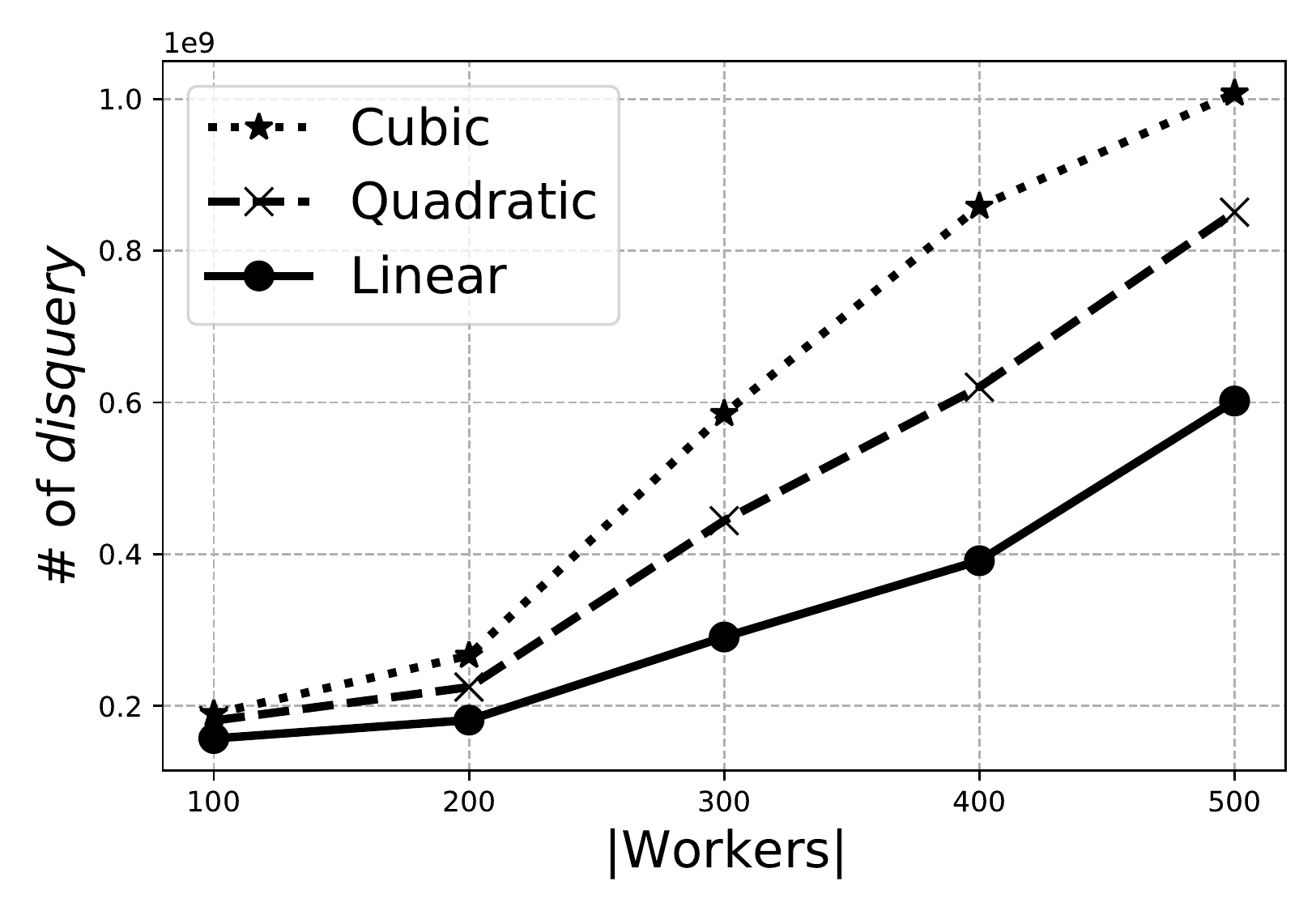}
		\vspace{-4ex}
		\caption{\footnotesize{Number of \query{}}}
	\end{subfigure}
	~~
    \begin{subfigure}[b]{0.22\textwidth}
		\includegraphics[width=\textwidth]{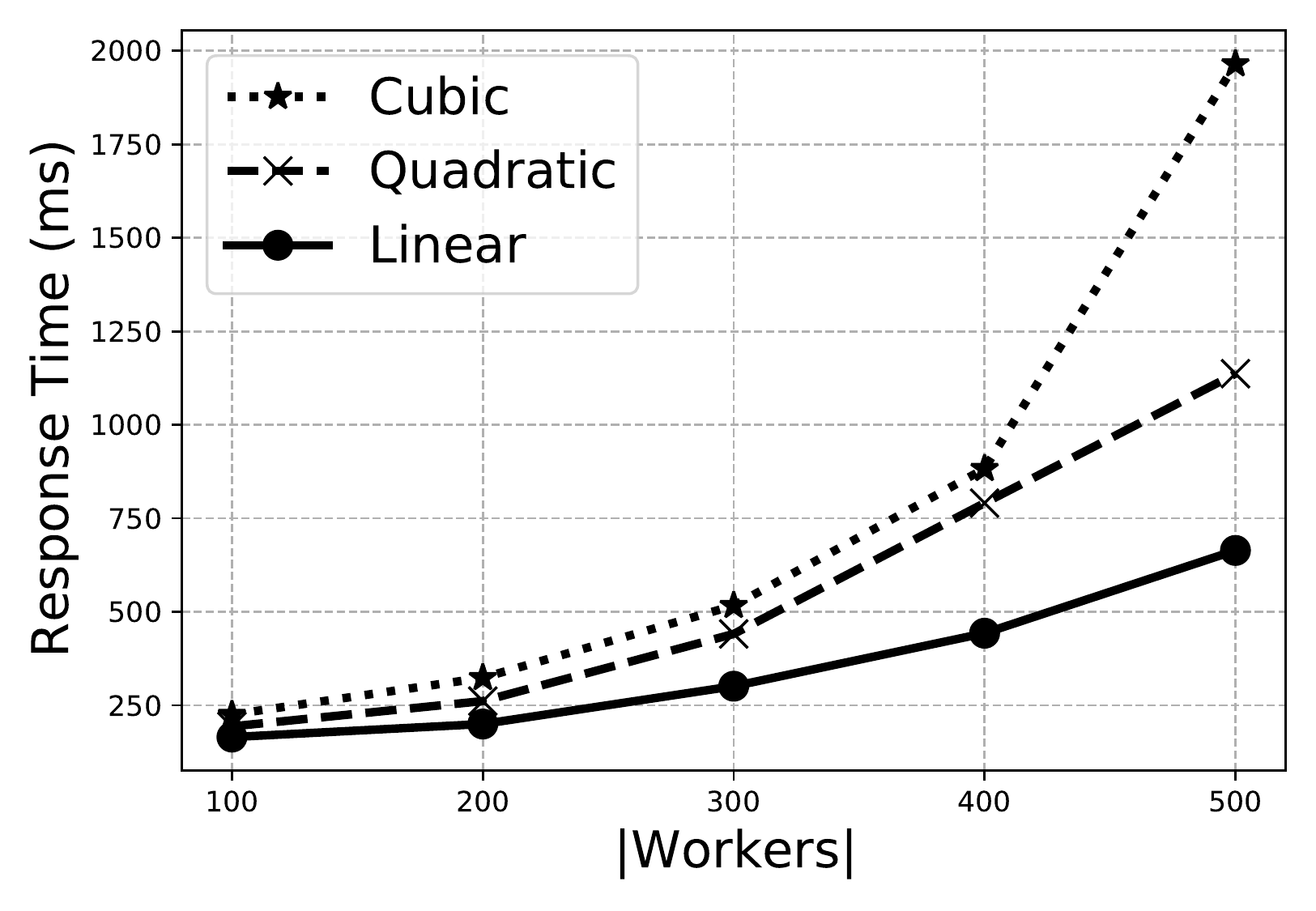}
		\vspace{-4ex}
		\caption{\footnotesize{Response time}}
	\end{subfigure}
	
	\begin{subfigure}[b]{0.22\textwidth}
		\includegraphics[width=\textwidth]{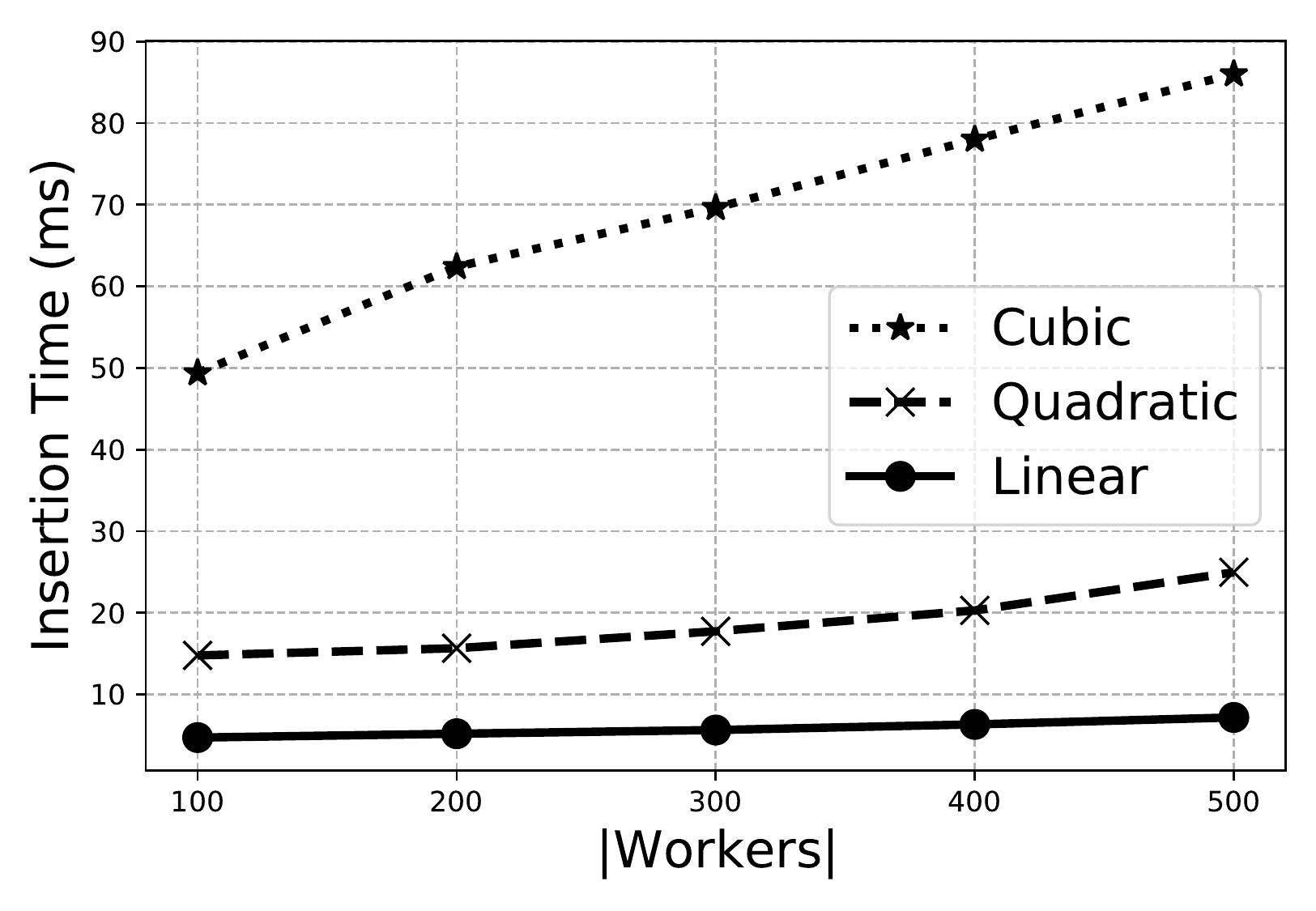}
		\vspace{-4ex}
		\caption{\footnotesize{Insertion time}}
	\end{subfigure}
	~~
    \begin{subfigure}[b]{0.22\textwidth}
		\includegraphics[width=\textwidth]{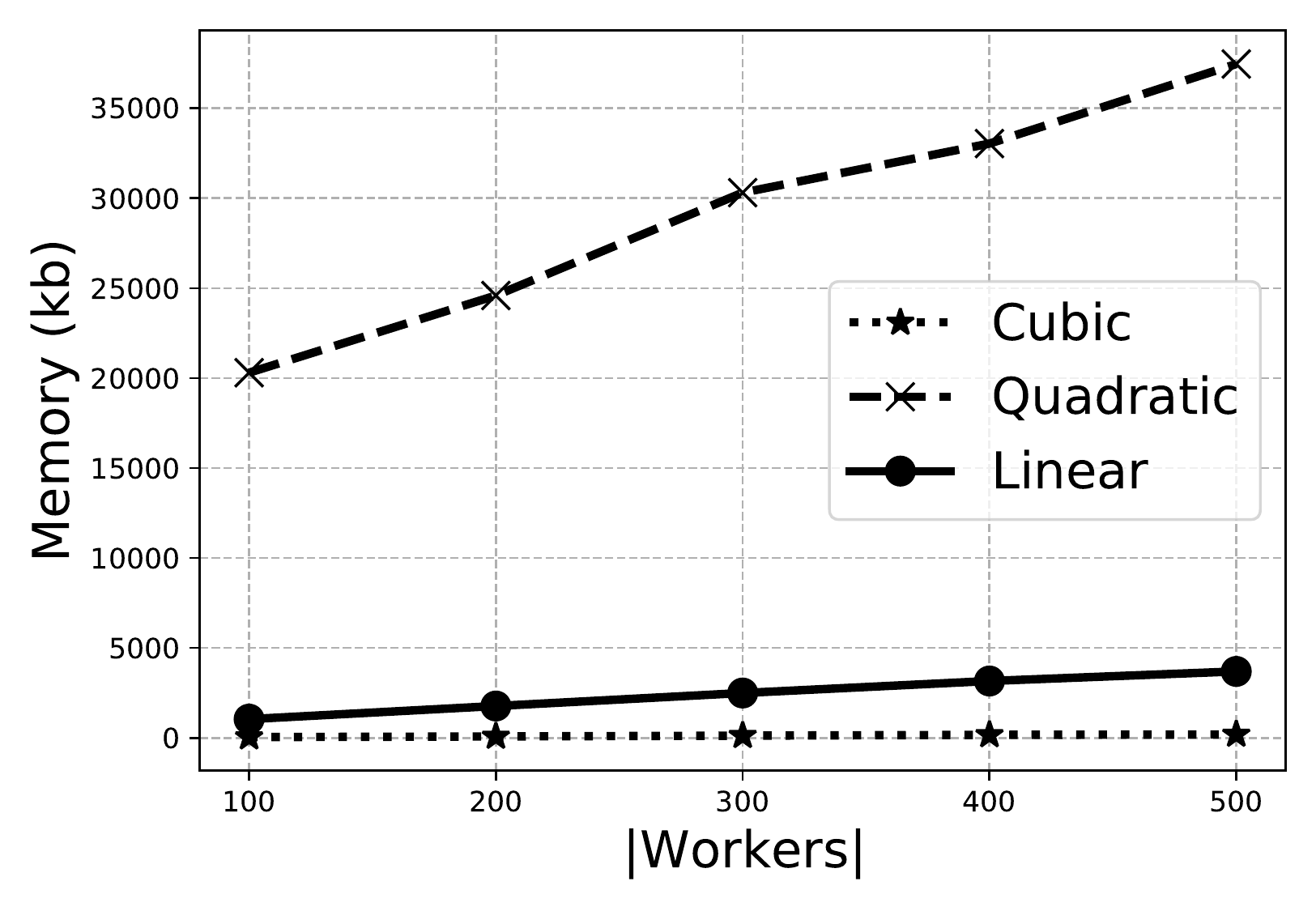}
		\vspace{-4ex}
		\caption{\footnotesize{Memory cost}}
	\end{subfigure}

	\vspace{-2ex}
	\caption{Results of scalability test on \haikou}
	\label{fig:scability}
	\vspace{-3ex}
\end{figure}

\fakeparagraph{Scalability Test} This experiment verifies the scalability of the time-dependent insertion. We show the results on $\haikou$ in \figref{fig:scability}, and the results on $\chengdu$ have the same trend. As for \textit{Cubic Time Algorithm}, with the increasing of the number of workers, the number of invoking $\query{}$ grows exponentially, which cause the extraordinary slower response time. The results prove that existing insertion operator could become the bottleneck over the time-dependent road network, especially in a larger-scale city like $\chengdu$. For \textit{Quadratic Time Algorithm}, the memory cost becomes easily notable when the number of worker is larger. Except the memory cost, \textit{Linear Time Algorithm} outperforms other two method in all metrics, this method can assign the request in real-time. Compared to \textit{Cubic Time Algorithm}, the $O(n)$ space complexity of \textit{Linear Time Algorithm} is also acceptable (less than 3.7 MB). The results show that our proposed method is suitable for the real large-scale road networks.

\fakeparagraph{Comparison between Datasets} We have following observations by comparing the results on $\chengdu$ and $\haikou$:
\begin{itemize}
  \item
   $\linear$ runs extremely fast in both datasets, this method can handle the request in real-time. Compared to $\haikou$, the time-dependent insertion operator runs slower than on $\chengdu$, it is because the shortest travel time query $\query{}$ becomes a bottleneck on a larger road network.
  \item
  For memory cost, all three methods consume more memory on $\haikou$ (0.496 - 30541 KB) than on $\chengdu$ (0.238 - 723 KB). This may be because requests on $\haikou$ have more potential to share a route, the time-dependent insertion operator has more feasible insertion positions for each origin-destination pair of one request and then increase the memory cost.
\end{itemize}

\fakeparagraph{Summary of Experiments} We summarize the results as follows:
\begin{itemize}
  \item
   It is impractical to extend the existing insertion operator straightforwardly to ridesharing service over the time dependent road networks. In other words, the response time of baseline method is more than 2990 milliseconds on $\chengdu$, it is impractical to handle the request in real-time.
  \item
    Our optimized algorithms $\quadra$, $\textit{Linear}$ $\textit{Time Algorithm}$ are 2.16 and 6.09 times faster on $\chengdu$ than the baseline, as for $\haikou$ these two algorithms are 6.39 to 25.79 times faster than the baseline, due to their improvement in the time complexity.
  \item
    $\linear$ reduces the memory cost up to 97.1\% compared with $\quadra$. The memory consumed by $\linear$ is only slightly larger than $\cubic$ , \eg{ no more than 77 KB  on \chengdu}. 
\end{itemize}
\vspace{-3ex}
\section{Related Work}\label{sec:related}

In this section, we discuss the related works. Specifically, the related works can be classified into two categories: route processing over time-dependent road networks and rout planing for rideshairng.

\textbf{Route processing over time-dependent road networks.}
The time-dependent road network has attracted many 
research efforts in spatial
databases. Piecewise linear functions are widely adopted to fit the time-dependent edge weight functions to model the dynamic traffic in reality \cite{shortestquery} \cite{yuanyeicde2019} \cite{yuanyeicde2021}. For the fundamental query problem over time-dependent road networks, \cite{tist2021} introduces a novel dual-level path index, when a route planning query is given, the filter-and-refine strategy based on the index is utilized to enhance the efficiency of the route planning. \cite{shortestquery} splits the road network into hierarchical partitions and constructed a balanced tree index. When it comes a query, it compounds the piecewise linear functions of border vertexes in partitions to answer the query.

Route planning related problem over time-dependent road network is also critical. \cite{yuanyeicde2019} proposes an online route planning over time-dependent road network problem, and develop a request-inserted algorithm with complexity $O(|R|m + |R|nlogn)$ to reduce the competitive ratio. Last mile delivery problem is extended over on time-dependent road network in \cite{sigspatial2021}, a courier can take multiple parcels start from a same warehouse and each parcel can be delivered to alternative locations depending on the time, to maxmize the number of delivered parcels an optimized insertion heuristic method is proposed to insert delivery location into the courier’s path. However, in existing works they only deal with one location for each request, \cite{yuanyeicde2019} makes the assumption that all passengers have the same destination and \cite{sigspatial2021} makes the assumption that all parcels have the same origin. Besides that, both \cite{yuanyeicde2019} and \cite{sigspatial2021} omit checking the feasibility of the worker. Compared to these problems, our time-dependent insertion problem is much harder. As shown in example 1, each request is picked up from his own origin and delivered to his own destination by the worker $w$. For the new request $r_3$, we plan a new feasible route $\langle o_1, o_2, o_3, d_2, d_3, d_1 \rangle$, including both his origin $o_3$ and destination $d_3$ to serve him. Thus, without the assumption that requests have the same origin or destination, existing algorithms cannot be applied to our problem. 

\textbf{Route planning for ridesharing.} Ridesharing service has also been studied in many domains vary from Database to AI. Different studies focus on different objectives, \cite{yuzheng2013}\cite{zengyangicde} focus on maxmize the number of requests been severed, \cite{gecco2012}\cite{ijcai2013}\cite{huangyan2014}\cite{icde2018}\cite{yuxiang2019icde} focus on another essential objective of minimizing the total travel time which is similar to our optimization objective. From the perspective of the platform, the total revenue is also an objective that needs to be taken into consideration \cite{libinvldb2018}\cite{libinicde2019}. 

The insertion operator is an efficient method to plan a new possible route based on the optimized objective by inserting a new request into the current route of a worker proposed in \cite{parameter-dasfaa2018}. Recently, there are also some studies that have have proven the effectiveness and efficiency in large scale real-world static road network \cite{yuxiang2018vldb}\cite{yuxiang2019icde}\cite{yuxiang2020tkde}. \cite{yuxiang2019icde} \cite{yuxiang2020tkde} extensively study insertion operator, a dynamic programming based novel partition framework reduces the time complexity from $O(n^3)$ to $O(n^2)$, then fenwick tree index helps to further reduce the complexity to $O(n)$. A unified route planning problem for shared mobility is defined in \cite{yuxiang2018vldb}, it proves that there is no polynomial-time algorithm with constant competitive ratio for solving this problem, a novel two-phased solution based on dynamic programming insertion is proposed to solve this problem approximately. However, none of these stuides have consider the time-varying travel cost characteristics of road network in dynamic traffic situation. Back to example 1, the existing insertion operator can calculate the travel time of new route $\langle o_1, o_2, o_3, d_2, d_3, d_1 \rangle$ by adding four static edge weights directly ($\langle o_2, o_3 \rangle$, $\langle o_3, d_2 \rangle$, $\langle d_2, d_3 \rangle$ and $\langle d_3, d_1 \rangle$) over the static road network. However, over the time-dependent road network, the travel time of this new route is dependent on the time when $r_3$ is picked up by the worker \ie{$arr(o_3)$}. Therefore, the method of calculating new route's travel time in existing insertion operator is no longer applicable, the existing insertion operator cannot handle the route-planning related problem efficiently over the time-dependent road network.

\section{conclusion}	\label{sec:conclusion}

In this paper, we first study a widely-adopted insertion operator in real-time ridesharing service over time-dependent road networks. We extend the existing insertion operator to time-dependent scenario as a baseline by invoking large number of the shortest travel time queries, and the time complexity is $O(n^3)$. We further compound time-dependent edge weight functions to get compound travel time functions along the feasible route of the worker, the time complexity reduced to $O(n^2)$. To maintain the compound travel time functions between any vertexes pair that belongs to the worker's route, the space cost is also $O(n^2)$. Based on the FIFO property of a time-dependent road network, we propose an efficient time-dependent insertion operator that takes linear time and space, by proving that given a position of the route to insert the destination of the new request, we can find the optimal position to insert the origin in $O(1)$ time. Extensive experiments on real datasets validate the efficiency and scalability of our time-dependent insertion operator. Particularly, the time-dependent insertion operator can be up to 25.79 times faster than baseline method under different settings on two real-world datasets.

\balance
\bibliographystyle{ACM-Reference-Format}
\bibliography{reference}

\end{document}